\newcommand{\avrg}[1]{\left\langle#1\right\rangle}
\newcommand{\bra}[1]{\left\langle#1\right\vert} 
\newcommand{\ket}[1]{\left\vert#1\right\rangle}
\DeclareDocumentCommand \bracket{m m m}{\left\langle#1\left\vert#2\right\vert#3\right\rangle}
\DeclareDocumentCommand \inprod{m m}{\left\langle#1\left\vert#2\right.\right\rangle}
\DeclareDocumentCommand \fundiff{m}{\mathcal{D}#1}
\DeclareDocumentCommand \funint{> { \SplitList { , } } m }{\int\ProcessList{#1}{\fundiff}}
\newcommand{\comm}[2]{\left[#1,#2\right]_{-}}
\newcommand{\acomm}[2]{\left[#1,#2\right]_{+}}
\newcommand{\anycomm}[2]{\left[#1,#2\right]_{\pm}}
\newcommand{\fundelta}[2]{\frac{\delta#1}{\delta#2}}
\newcommand{\abs}[1]{\left\vert#1\right\vert}
\newcommand{\Tr}{\textnormal{Tr}}
\newcommand{\Lgr}{\mathcal{L}}
\newcommand{\divp}{\text{divp}}
\newtheorem{theorem}{Theorem}[section]
\newtheorem{corollary}{Corollary}[section]
\numberwithin{equation}{section}
\begin{document}

\unitlength = 1mm

\title{Bosonization, effective action, and R-operation in a generalized Nambu-Jona-Lasinio model}
\author{Sergii Kutnii}

\begin{abstract}
Bosonization in a theory with four-fermion interaction of Nambu-Jona-Lasinio type with additional U(N) symmetry is studied. 
It is demonstrated that bosonization is not uniquely determined by the interaction terms due to Fierz identities.
Effective action including both fermions and composite fields is constructed.
R-operation renormalization scheme is developed for the effective action. 
Relation of bosonization to quantum BBGKY hierarchy is demonstrated and a generalization involving higher-order combinations of fields is proposed.
Nambu-Jona-Lasinio model with three-fermion composite fields is studied as an example.
Fierz identities for sixth-order combinations of fermions are derived.
\end{abstract}

\begin{titlepage}

\maketitle

\end{titlepage}

\section{Introduction} \label{intro}
In the middle of last century, Bogoliubov and de Gennes developed a method of treating fermion interaction in condensed matter physics.
Given an interacting Pauli fermion Hamiltonian such as

\begin{equation}
\widehat{H} = \sum\limits_\sigma\int\psi^+_\sigma(\bold{r})\frac{\widehat{\mathbf{p}}^2}{2m}\psi(\bold{r})_\sigma\,d^3\bold{r} 
+ g\int\psi(\bold{r})^+_\uparrow\psi(\bold{r})^+_\downarrow\psi(\bold{r})_\downarrow\psi(\bold{r})_\uparrow\,d^3\bold{r}, \label{intro/H4Ferm}
\end{equation}

one has to build a mean field Hamiltonian defined by
\begin{eqnarray}
\widehat{H}_{MF}&=&\sum\limits_\sigma\int\psi^+_\sigma(\bold{r})\frac{\widehat{\bold{p}}^2}{2m}\psi(\bold{r})_\sigma\,d^3\bold{r} 
+ \int{}\left[\psi(\bold{r})^+_\uparrow\psi(\bold{r})^+_\downarrow\Delta(\bold{r}) + \Delta^{+}(\bold{r})\psi(\bold{r})_\downarrow\psi(\bold{r})_\uparrow\right]d^3\bold{r} \label{intro/HBdG}\\
\Delta(\bold{r})&=&\avrg{\psi(\bold{r})_\downarrow\psi(\bold{r})_\uparrow},\label{intro/BdGgapeqn}
\end{eqnarray}
where $\avrg{}$ means the ground state average.

The mean field Hamiltonian (\ref{intro/HBdG}) is quadratic and can be diagonalized by intoducing Bogoliubov quasiparticles, 
and (\ref{intro/BdGgapeqn}) will give the mass gap of the theory in the homogeneous case. The inhomogeneous case of great interest and widely studied as well.

This seemingly ad hoc recipe allows a path integral reformulation. 
Here I follow \cite{Svidz}. The statistical sum of the theory (\ref{intro/H4Ferm}) can be represented as a path integral
\begin{eqnarray}
Z[\tau] &=& Sp\exp\left[-\beta\widehat{H}\right] \sim\nonumber\\ 
&\sim&\funint{\zeta^+,\zeta}Sp\exp\left\lbrace -\beta\sum\limits_\sigma\int\psi^+_\sigma(\bold{r})\frac{\widehat{\bold{p}}^2}{2m}\psi(\bold{r})_\sigma\,d^3\bold{r}
- \right.\nonumber\\
&-&\left. g\beta\int{}\left[\psi(\bold{r})^+_\uparrow\psi(\bold{r})^+_\downarrow\zeta(\bold{r}) + \zeta^{+}(\bold{r})\psi(\bold{r})_\downarrow\psi(\bold{r})_\uparrow\right]d^3\bold{r}
+ g\beta\int\zeta^+(\bold{r})\zeta(\bold{r})d^3\bold{r}
\right\rbrace
\end{eqnarray}
Here $\sim$ means equality up to a constant multiplier.
Thus
\begin{eqnarray}
Z[\tau] &=& \funint{\zeta^+,\zeta}\exp\left\lbrace-\beta\Omega\left[\zeta^+,\zeta\right]\right\rbrace\nonumber\\
\exp\left\lbrace-\beta\Omega\left[\zeta^+,\zeta\right]\right\rbrace &\equiv&
Sp\exp\left\lbrace -\beta\sum\limits_\sigma\int\psi^+_\sigma(\bold{r})\frac{\widehat{\bold{p}}^2}{2m}\psi(\bold{r})_\sigma\,d^3\bold{r}
- \right.\nonumber\\
&-&\left. g\beta\int{}\left[\psi(\bold{r})^+_\uparrow\psi(\bold{r})^+_\downarrow\zeta(\bold{r}) + \zeta^{+}(\bold{r})\psi(\bold{r})_\downarrow\psi(\bold{r})_\uparrow\right]d^3\bold{r}
+ g\beta\int\zeta^+(\bold{r})\zeta(\bold{r})d^3\bold{r}
\right\rbrace \label{intro/BogoliubovEffAction}
\end{eqnarray}
It is easy to see that the approximation (\ref{intro/HBdG},\ref{intro/BdGgapeqn}) is just the semiclassical contribution to (\ref{intro/BogoliubovEffAction}) 
defined by 
\begin{eqnarray}
\frac{\delta\Omega}{\delta\zeta} = 0\nonumber\\
\frac{\delta\Omega}{\delta\zeta^+} = 0
\end{eqnarray}
The path integral formulation above elucidates the nature of Bogoliubov-de Gennes method and its place within quantum field theory somewhat.
It thus becomes tempting to do the same in a theory with interacting relativistic fermions. 
However, it turns out that the answer to the question ``what is the analog of Bogoliubov-de Gennes method in a relativistic quantum field theory?''  
is surprisingly complex and requires rethinking of some fundamental concepts. In particular, some of the lessons learned by looking for an answer to it are essentially the same as those of QFT in curved spacetime: that the concepts of ``vacuum'' and ``particles'' are essentially relative and that Wightman axioms are thus too restrictive to provide a good foundation for QFT \cite{HolWald}.
This paper was started as an attempt to answer the aforementioned question with at least field theoretical if not mathematical level of rigour.
It is treated as a merely formal question though. The Nambu-Jona-Lasinio type model of interacting Dirac fermions will be the primary testbed.

The original Nambu-Jona-Lasinio model proposed in \cite{NJL} is a direct relativistic analog of the BCS Hamiltonian of superconductivity:
\begin{equation}
S_{NJL} = \int\,d^4x\left\lbrace\bar{\psi}(x)i\widehat{\partial}\psi(x) 
+ \lambda\left[\bar{\psi}(x)\psi(x)\bar{\psi}(x)\psi(x) - \bar{\psi}(x)\gamma^5\psi(x)\bar{\psi}(x)\gamma^5\psi(x)\right]\right\rbrace.
\label{intro/NJL}
\end{equation}
In the same seminal paper, Nambu and Jona-Lasinio demonstrated that this model had a mass gap. 
Numerous generalizations were proposed since then. 

Let
\begin{equation}
\begin{split}
&\Gamma_1 = 1\\
&\Gamma^\mu_2 = \gamma^\mu\\
&\Gamma^{\mu\nu}_3 = \sigma^{\mu\nu}\\
&\Gamma^{\mu}_4 = \gamma^5\gamma^\mu\\
&\Gamma_5 = \gamma^5\\
&\tau_{i0}^{\bar{\alpha}} = \Gamma_i^{\bar{\alpha}}\\
&\tau_{i1}^{\bar{\alpha}} = T^a\Gamma_i^{\bar{\beta}},
\end{split} \label{intro/gammabasis}
\end{equation}
where $T^a$ are the generators of $su(N)$ algebra.
Shorthand notation $\tau^{\bar{\alpha}}_{\bar{i}}$ will also be used, 
where Latin multi-index $\bar{i}$ labels transformation laws and Greek $\bar{\alpha}$ stands for combined Lorentz and $su(N)$ generator indices.
$su(N)$ color indexes will be denoted by subscript $i,j,k,l\ldots$. Summation over repeated indices is implied.
The fermion Lagrangians having an extra global $U(N)$ symmetry 
with interactions of the form 
\begin{equation}
\mathcal{L}_\text{int} = \sum\limits_{a,\alpha}G_{\bar{i}}\bar{\psi}\tau_{\bar{i}\bar{\alpha}}\psi\bar{\psi}\tau_{\bar{i}}^{\bar{\alpha}}\psi \label{intro/interaction-terms}
\end{equation}
where the choice of constants respects Lorentz symmetry, will be studied here.

Bosonization, an obvious analog of the transformation (\ref{intro/BogoliubovEffAction}) for the NJL-like model of interacting fermions, was proposed by Kikkawa \cite{Kikkawa1974}, Ebert, and Reinhardt \cite{Ebert1978}, \cite{Reinhardt1978}. The bosonized NJL model has found a wide range of applications from meson phenomenology \cite{Volkov1984} to a tentative unified theory of electroweak and strong interactions \cite{Terazawa1977} . The idea is to utilize the relation
\begin{equation}
	\exp\left[\lambda{}\int\,A^2(x)d^dx\right] \sim \funint{B}\exp\left\{\int\left[\frac{1}{4\lambda}B(x)^2 + A(x)B(x)\right]d^dx\right\},
\end{equation}
a functional analog of Fourier or Laplace transform, to represent a generic four-fermion interaction term as
\begin{equation}
	\exp\left[-i\int{}G_{\bar{i}}\bar{\psi}\tau_{\bar{i}\bar{\alpha}}\psi\bar{\psi}\tau_{\bar{i}}^{\bar{\alpha}}\psi{}d^dx\right] \sim 
	\funint{\Sigma}\exp\left\{-i\int\left[-\frac{1}{4G_{\bar{i}}}\Sigma_{\bar{i}\bar{\alpha}}\Sigma_{\bar{i}}^{\bar{\alpha}} 
	+\Sigma_{\bar{i}\bar{\alpha}}\bar{\psi}\tau_{\bar{i}}^{\bar{\alpha}}\psi \right]d^dx\right\}.
\end{equation}
The NJL generating functional can thus be expressed as
\begin{equation}
\begin{split}
Z\left[\bar{\eta}\eta\right] &\equiv \funint{\bar{\psi}, \psi}\exp\left\{-i\int\left[\bar{\psi}\widehat{\partial}\psi 
	+ G_{\bar{i}}\bar{\psi}\tau_{\bar{i}\bar{\alpha}}\psi\bar{\psi}\tau_{\bar{i}}^{\bar{\alpha}}\psi 
	+ \bar{\eta}\psi + \bar{\psi}\eta\right]d^4x\right\} \sim \\
	&\sim \funint{\bar{\psi}, \psi, \Sigma}\exp\left\{-i\int\left[\bar{\psi}\left(\widehat{\partial} + \widehat{\Sigma}\right)\psi 
	- \frac{1}{4G_{\bar{i}}}\Sigma_{\bar{i}\bar{\alpha}}\Sigma_{\bar{i}}^{\bar{\alpha}}  + \bar{\eta}\psi + \bar{\psi}\eta\right]d^4x\right\} 
\end{split}\label{intro/Bosonization}
\end{equation}
	where $\widehat{\Sigma} = \tau_{\bar{i}\bar{\alpha}}\Sigma_{\bar{i}}^{\bar{\alpha}}$.
	
By taking the functional derivative with respect to $\Sigma$ one can find that
\begin{equation}
0 = \frac{\delta{\ln{Z}}}{\delta{\Sigma_{\bar{i}\bar{\alpha}}}(x)} =
-i \avrg{\bar{\psi}(x)\tau_{\bar{i}}^{\bar{\alpha}}\psi (x)} 
+ \frac{i}{2G_{\bar{i}}}\avrg{\Sigma_{\bar{i}}^{\bar{\alpha}}(x)},
\end{equation}
where angular brackets denote ground state average. The relation
\begin{equation}
	\avrg{\Sigma_{\bar{i}}^{\bar{\alpha}}(x)} =  2G_{\bar{i}}\avrg{\bar{\psi}(x)\tau_{\bar{i}}^{\bar{\alpha}}\psi (x)} \label{intro/gapeqn}
\end{equation}
is the gap equation. 
The functional integration over $\bar{\psi}, \psi$ in (\ref{intro/Bosonization})  can be carried out explicitly to yield
\begin{equation}
\begin{split}
	Z\left[\bar{\eta}\eta\right]  &\sim \funint{\Sigma}\exp\left\{i\Omega\left[\Sigma, \bar{\eta}, \eta\right] \right\}.\\
	\Omega\left[\Sigma, \bar{\eta}, \eta\right] &= -i\Tr\ln\left[\widehat{\partial} + \widehat{\Sigma} \right] 
	+ \iint\bar{\eta}(x)\left[\widehat{\partial} + \widehat{\Sigma} \right]^{-1}(x,y)\eta(y)d^4xd^4y + \frac{1}{4G_{\bar{i}}}\int\Sigma_{\bar{i}\bar{\alpha}}\Sigma_{\bar{i}}^{\bar{\alpha}}d^4x
	\label{intro/integrated-action}
\end{split}
\end{equation}
It is worth noting that fermions are still present here through the sources $\bar{\eta}, \eta$. Now, the saddle point approximation may be taken:
\begin{eqnarray}
	&&Z\left[\bar{\eta},\eta\right]  = \exp(i\Omega[\Sigma_{c}]]) \nonumber\\
	&&\left.\frac{\delta\Omega}{\delta\Sigma}\right\vert_{\Sigma = \Sigma_c} = 0
\end{eqnarray}
In this approximation the gap equation in the absence of sources becomes
\begin{equation}
	\Sigma_{c\,\bar{i}}^{\bar{\alpha}} = 2G_{\bar{i}}Sp\tau_{\bar{i}}^{\bar{\alpha}}\left[\widehat{\partial} + \widehat{\Sigma}_c\right]^{-1}(x,x).
\end{equation}
In full analogy with superconductivity, it can be shown to possess nontrivial solutions provided that there is a renormalization scheme to deal with the divergencies. It is also obvious that putting $\Sigma = \Sigma_c$ in the bosonized path integral (\ref{intro/Bosonization}) gives the exact equivalent of Bogoliubov-de Gennes Hamiltonian (\ref{intro/HBdG}).

However, bosonization still looks no less an ad hoc recipe than Bogoliubov-de Gennes method. Its place within the framework of quantum field theory is not quite clear.
It is the expression \eqref{intro/integrated-action} that gives a hint, bearing a remarkable similarity to Jackiw's result for quantum effective action computed up to one loop \cite{Jackiw1974}.
It thus becomes tempting to interpret bosonization as a version of effective action formalism. To do so, one needs source terms for the bosonic fields $\Sigma_{\bar{i}\bar{\alpha}}$ as well. 
They can be added to the fermionic generating functional coupled to quadratic combinations of fields. The actual generating functional that should be considered is
\begin{equation}
\begin{split}
Z\left[\left\{J\right\}\right] &= 
\funint{\bar{\psi},\psi}\exp\left\{i\int\left[\bar{\psi}i\widehat{\partial}\psi + 
G_i\bar{\psi}\Gamma_{i\,\bar{\alpha}}\psi\bar{\psi}\Gamma^{\bar{\alpha}}_i\psi
+ \bar{J}_1\psi +\bar{\psi}J_1 
+ 2\lambda_{2\bar{i}}J_{2\bar{i}\bar{\alpha}}\bar{\psi}\tau_{\bar{i}}^{\bar{\alpha}}\psi
+ \lambda_{2\bar{i}}J_{2\bar{i}\bar{\alpha}}J^{\bar{\alpha}}_{2\bar{i}}
\right]d^4x\right\}.
\end{split} \label{intro/genfunc}
\end{equation}
After a trivially modified bosonization, $J^{\bar{\alpha}}_{2\bar{i}}(x)$ become sources for bosonic fields.

The generating functional \eqref{intro/genfunc} resembles the composite fields generating functional proposed by Cornwall, Jackiw, and Tomboulis \cite{Cornwall1974}, save for the fact that same-point products are introduced instead of two-point ones. The Cornwall-Jackiw-Tomboulis generating functional would produce an analog of \eqref{intro/genfunc} under a constraint
\begin{equation}
J(x,y) = \tilde{J}(x)\delta(x-y) \label{intro/same-point-restriction}
\end{equation}
on the sources. Cornwall-Jackiw-Tomboulis effective action was considered for NJL-like models by Gusynin, Miransky, and Kushnir \cite{Gusynin1989a}, \cite{Gusynin1989b}, \cite{Gusynin1991}, \cite{Gusynin1992}.
It is worth noting that although that \eqref{intro/same-point-restriction} defines a special case, that special case results in a local field theory after bosonization, allowing thus to compute the resulting effective action to any order by Jackiw's shift in integration variables.

The link to Cornwall-Jackiw-Tomboulis formalism puts bosonization on even firmer ground but there still remains a question of interpretation. It is clear that the formalism can be extended further by introduction of many-point source terms. Correlators of the theory become dynamical variables in this formalism. Is their dynamics physical? If so, what is its precise physical meaning? A 1988 paper by Calzetta and Hu \cite{Calzetta1988} sheds light on this question. In this work, Schwinger-Keldysh or in-in path integral of the form 
\begin{equation}
Z\left[J_a,K_{ab},K_{abc},\ldots\right] = \funint{\phi^a}\exp\left\{i\left[S[\phi^a] + J_a\phi^a + \frac{1}{2}K_{ab}\phi^a\phi^b+\frac{1}{6}K_{abc}\phi^a\phi^b\phi^c+\ldots\right]\right\}
\label{intro/Schwinger-Keldysh}
\end{equation}
is proposed, where Latin indices span two values $+,-$ meaning forward- and backward-running temporal evolution, and source terms arise naturally from the logarithm of the initial density matrix. 
As pointed out by the authors, the resulting infinite hierarchy of equations on correlators is analogous to the BBGKY hierarchy of classical statistical physics.

Thus, quantum state information can enter the generating functional through nonzero multi-point sources or, after transformations, through their multilocal field counterparts. This allows to reformulate the notion of a quantum state without any reference to mathematical constructs inherited from nonrelativistic quantum mechanics such as Hilbert space vectors or density matrices by postulating that a state is defined by averages over it -- no more, no less. Anastopoulos, Hu, and Savvidou add a model of relativistic quantum measurement to the formalism \cite{Anastopoulos2023}, 
arriving at a generating functional having essentially the form \eqref{intro/Schwinger-Keldysh}, where both measurement and state preparation are represented by source terms under the exponent. This, in principle, allows to formulate quantum field theory in an entirely functional language.

Bosonization fits well into this picture. However, there remain two major issues to be addressed here.

Bosonization procedure is in fact ambiguous due to Fierz identities \cite{Fierz1937}, \cite[161--162]{itzykson2012}. As quartic combinations of fermionic fields are not linearly independent,
any quartic fermionic interaction can be rewritten as a sum of all possible terms of the form \eqref{intro/interaction-terms} with certain freedom to choose the coefficients. Thus, it is possible to introduce not just the $\sigma, \pi$ fields as it is usually done but all the fields corresponding to $\bar{\psi}\tau_{\bar{i}\bar{\alpha}}\psi$. The problem is, these fields will have essentially arbitrary masses with some weak constraints only. As a consequence, bosonization cannot give any reliable predictions if divergent diagrams of the model are merely regularized, as physically meaningful quantities such as meson masses will still depend on arbitrarily chosen parameters - a proper renormalization is needed. If renormalization is considered, such ambiguity may, in fact, be beneficial, allowing to cancel some divergences.

Renormalization is the second and main issue. As discussed above, a renormalization scheme is required to extract any meaningful predictions from the model. However, NJL model is not renormalizable in the ordinary sense, as it is easy to see that the interaction constants $G_{\bar{i}}$ are dimensional. Thus, it is not clear why usual renormalization schemes appropriate for renormalizable models should work. In fact, it is easy to see that the power series expansion of $Tr\ln\left[\widehat{\partial} + \widehat{\Sigma}\right]$ contains divergent terms of 3-rd and 4-th order in $\Sigma$. Compensating those divergencies would require counterterms altering not only the constants but form of the lagrangian. In general, it is unclear whether standard renormalization schemes remain valid when extra variables describing a nontrivial quantum state are added. Therefore, it is necessary to apply a more mathematically rigorous renormalization scheme, i.e. Bogoliubov-Parasiuk-Hepp-Zimmermann (BPHZ) theory of R-operation as it is suggested by Jackiw in \cite{Jackiw1974}. However, this is not a trivial task since R-operation was formulated within the framework of operator quantum field theory. Therefore, to extend BPHZ scheme to the functional picture outlined above, it is necessary to discuss the link between path integral and operator formulations of quantum field theory.

The aim of this work is twofold. The first is to do NJL bosonization right - that is by constructing an effective action including both bosons and fermions and applying BPHZ renormalization to it, up to one loop. The second, and more utilitarian, goal is to compute certain quantities arising in this scheme to provide future researchers with a reference on the math. The rest of the paper is organized as follows: the section \ref{formalism} is dedicated to revisiting the concepts of effective action and R-operation and finding a way to combine them; ambiguity of bosonization is studied and effective action for the bosonized NJL model is constructed in the section \ref{NJL-action}; a generalization of the bosonization framework is proposed in the section \ref{compfields}; and in the conclusions implications of the entire inquiry are discussed and directions of further investigation outlined.

While Schwinger-Keldysh formalism would be needed for most of the physics, this paper only deals with the relations that have similar forms in the in-in and in-out formalisms. Therefore, a simpler in-out effective action is studied. Extending the results to the in-in formalism is a trivial exercise and is left to the reader.

\section{Effective action formalism and R-operation renormalization} \label{formalism}

The definition of the effective action for a quantum field theory starts usually with a path integral (see e.g. \cite{Weinberg1996}). Jackiw in \cite{Jackiw1974} gave a recipe for computing the effective action and suggested R-operation approach to its renormalization. However, there is a major conceptual obstacle to implementing this program: R-operation's classic formulation (Zavyalov's \cite{Zavyalov1990} book can be used as a comprehensive treatise on the matter) was done within axiomatic S-matrix theory and relies on Wightman's axioms \cite[96--106]{StreaterWightman1964}. 
R-operation formalism makes use of Wightman's assumption of translational invariance of vacuum to deduce the form of counterterms (more on this below).
But effective action depends on the average field $\avrg{\phi(x)}$ that needs not to be constant. The quantum-corrected field equations arising from the effective action may have nontrivial solutions breaking translational invariance in the same manner as nontrivial spacetime metric does. To make things worse, particle pictures corresponding to different vacuum solutions will in general be inequivalent. As an example, consider an asymmetric Higgs Lagrangian with potential having minima at $\varphi_1, -\varphi_2$:
\begin{equation}
	\mathcal{L} = \frac{1}{2}\partial_\mu\phi\partial^\mu\phi  + g^2\phi^2\left[\frac{\varphi_1\varphi_2}{2} + \frac{\varphi_1 - \varphi_2}{3}\phi - \frac{\phi^2}{4}\right].
\end{equation}
The zero order effective action is just the classical action. The corresponding field equation will have two nontrivial vacuum solutions $\phi = \varphi_1, -\varphi_2$. If particles are to be understood as excitations over a vacuum, one is left with two possible particle decompositions
\begin{eqnarray}
	\phi &=&  \varphi_1 + \chi_1 = \varphi_1 + \sum\limits_\mathbf{k}\left[f_{1\mathbf{k}}(x)\alpha_{1\mathbf{k}} + f^*_{1\mathbf{k}}(x)\alpha^+_{1\mathbf{k}}\right] \nonumber\\
	\phi &=& -\varphi_2 + \chi_2 = -\varphi_2 + \sum\limits_\mathbf{k}\left[f_{2\mathbf{k}}(x)\alpha_{2\mathbf{k}} + f^*_{2\mathbf{k}}(x)\alpha^+_{2\mathbf{k}}\right] 
\end{eqnarray}
corresponding to two possible free-field approximations for the Hamiltonian
\begin{eqnarray}
	H \approx C_1 + \frac{1}{2}\int\left[\pi_1^2 - \mathbf{\nabla}\chi_1^2 + g^2\varphi_1\left(\varphi_1 + \varphi_2\right)\chi_1^2\right]dV \nonumber\\
	H \approx C_2 + \frac{1}{2}\int\left[\pi_2^2 - \mathbf{\nabla}\chi_2^2 + g^2\varphi_2\left(\varphi_1 + \varphi_2\right)\chi_2^2\right]dV 
\end{eqnarray}
As it was pointed out by Haag \cite{Haag1955}, the two particle pictures are inequivalent. Furthermore, in the same work Haag claimed to demonstrate the non-existence of the interaction picture of QFT. Strictly speaking, however, it was incompatibility of the interaction formalism with Wightman axioms that Haag really proved.
Thus, Wightman axioms cannot provide a good basis for QFT. The difficulties stem from formulating the theory in the Hilbert space of N-particle states from the onset. The notions of ``vacuum'' and ``particles'' must be relativized if not totally abandoned. Therefore, to define R-operation for the effective action, it is necessary to start from the foundations of QFT.

Only flat Minkowski spacetime will be considered here. The metric signature is $(+, -, -, -)$. The dynamic variables of the theory are operator-valued distributions $\pi, \phi$,  i.e. integrals 
\begin{eqnarray}
	\pi\left[f\right] = \int\pi(x)f(x)d^4x\nonumber\\
	\phi\left[f\right] = \int\phi(x)f(x)d^4x \label{formalism/deffields}
\end{eqnarray}
must be well-defined operators on some Hilbert space $\mathcal{H}$ for a suitable class of test functions, such as those with compact support, even though $\pi(x), \phi(x)$ may be singular at some points. Coordinate derivatives of the variables are defined in the sense of integration by parts. Multiplication can be defined recursively. If $\mathcal{O}_1(x), \mathcal{O}_2(x)$ are two local monomials in $\pi(x), \phi(x)$ and their derivatives (such expressions will be called local field monomials from now on), such that integrals
\begin{eqnarray}
	\mathcal{O}_1\left[f\right] = \int\mathcal{O}_1(x)f(x)d^4x\nonumber\\
	\mathcal{O}_2\left[f\right] = \int\mathcal{O}_2(x)f(x)d^4x
\end{eqnarray}
exist, then the integrals
\begin{eqnarray}
	\mathcal{O}_{12}\left[F\right] = \iint\mathcal{O}_1(x_1)\mathcal{O}_2(x_2)F(x_1,x_2)d^4x_1d^4x_2\nonumber\\
	\mathcal{O}_{21}\left[F\right] = \iint\mathcal{O}_2(x_1)\mathcal{O}_1(x_2)F(x_1,x_2)d^4x_1d^4x_2
\end{eqnarray}
are naturally defined for functions $F$ supported outside the hyperplane $x_1 = x_2$. The definition can be extended to functions with any compact support, but such an extension is ambiguous. If two multiplication operations $*, *^\prime$ are defined, then
\begin{equation}
	\mathcal{O}_1(x)*^\prime\mathcal{O}_2(y) -  \mathcal{O}_1(x)*\mathcal{O}_2(y) = \sum\limits_k\mathcal{Q}_k(x)P_k(x, \partial_{x^\mu})\delta(x-y) \label{formalism/defambiguity}
\end{equation}
where $\mathcal{Q}_k(x)$ are local field monomials, and 
\begin{equation}
	P_k(x, \partial_{x^\mu}) = \sum\limits_{n = 0}^{N_k}p_k^{\mu_1...\mu_n}(x)\partial_{x^{\mu_1}}...\partial_{x^{\mu_n}}.
\end{equation}
If translational symmetry is required, the coefficients $p_k^{\mu_1...\mu_n}(x)$ must be constant. The idea of R-operation renormalization is to redefine the multiplication operation in such a way that would eliminate ultraviolet divergencies in Feynman graphs. The restrictions on operator multiplication  are the requirements of associativity, Lorentz covariance, and canonical (anti)commutation relations
\begin{eqnarray}
	\anycomm{\phi(x)}{\phi(y)} &=& 0, \left\Vert{x - y}\right\Vert \leq 0\nonumber\\
	\anycomm{\pi(x)}{\pi(y)} &=& 0,  \left\Vert{x - y}\right\Vert \leq 0\nonumber\\
	\anycomm{\phi(y)}{\pi(x)} &=& i\delta(x - y). \label{formalism/commutators}
\end{eqnarray}
Time evolution of fields is defined by a Hamiltonian
\begin{equation}
	\begin{aligned}
		\mathcal{O}_t &= i\comm{H}{\mathcal{O}}\\
		H(t) &= \int\,d^3\mathbf{x}\sum\limits_kh_k(t,\mathbf{x})\mathcal{O}_k(t, \mathbf{x})\\
		H^+(t) &= H(t)
	\end{aligned}
\end{equation}
where $\mathcal{O}_k$ are local field monomials, and $h_k$ can, generally speaking, depend on spacetime point. Having the Hamiltonian, it is possible to define the evolution operator as
\begin{equation}
	\begin{aligned}
		U(t_1, t_2) &= T\exp\int\left[i\int\limits_{t_1}^{t_2}\,H(t)dt\right] \equiv \lim\limits_{\max\left\{
		\Delta{t}_k\right\} \rightarrow 0} T\prod\limits_{t_1}^{t_2} \left[1 + iH(t_k)\Delta{t_k}\right]\\
		\mathcal{O}(t_2) &= U(t_1, t_2)\mathcal{O}(t_1)U^+(t_1, t_2),
	\end{aligned} \label{formalism/evolution}
\end{equation}
where $T$ means temporal ordering of operator products. 
Let  the relations $\prec, \succ, \sim$ between spacetime points be defined as
\begin{equation}
	\begin{aligned}
		x&\succ {y} \Leftrightarrow \text{$x$ is in the future light cone of $y$};\\
		x& \prec {y} \Leftrightarrow \text{$x$ is in the past light cone of $y$};\\
		x& \sim y \Leftrightarrow \text{$x$ and $y$ are separated by a spacelike interval}.
	\end{aligned}
\end{equation}
Then for two operators
\begin{equation}
	T\mathcal{O}_1(x)\mathcal{O}_2(y) = 
	\begin{cases}
		\mathcal{O}_1(x)\mathcal{O}_2(y) & x = y\\
		\mathcal{O}_1(x)\mathcal{O}_2(y) & x \succsim y \\
		\mathcal{O}_2(y)\mathcal{O}_1(x) & \text{otherwise}.
	\end{cases}
\end{equation}
and for any operator product
\begin{equation}
	T\prod_k\mathcal{O}_k(x_k) = \left[T\prod_l\mathcal{O}_l(x_l)\right]\left[T\prod_{m}\mathcal{O}_m(x_m)\right],
\end{equation}
provided that the index set $\{k\}$ can be split into two non-intersecting subsets $\{l\}, \{m\}$ such that $\forall l \in \{l\}, m \in \{m\} : x_l \succ x_m$. i.e. simply speaking all future operators must stand to the left of past ones.
Now, it is assumed that the Hilbert space $\mathcal{H}$ on which field operators are defined, allows $\phi$-representation i.e. the eigenvectors $\ket{\varphi, t}$ such that
\begin{equation}
	\phi(\mathbf{x}, t)\ket{\varphi,t} = \varphi(\mathbf{x})\ket{\varphi,t}
\end{equation} 
form a complete basis. This, of course, implies that the space $\mathcal{H}$ is non-separable. The space can be constructed e.g. in a straightforward generalization of  quantum mechanics. Let $\varphi(\mathbf{x})$ be a distribution defined on 3-dimensional Euclidean space. Then $\pi, \phi$ can be naturally represented in the algebra of functional-differential operators
\begin{equation}
	\mathcal{O} = \sum\limits_{mn}\int\ldots\int\mathcal{O}_{mn}(x_1,\ldots,x_m,y_1,\ldots,y_n)\varphi(x_1)...\varphi(x_m)\frac{\delta}{\delta\varphi(y_1)}\ldots\frac{\delta}{\delta\varphi(y_n)}
\end{equation}  
acting on arbitrary functionals of $\varphi$. Let at $t = t_0$ 
\begin{equation}
	\begin{aligned}
		\phi(t_0, \mathbf{x}) &= \varphi(\mathbf{x})\\
		\pi(t_0, \mathbf{x}) &= -i\frac{\delta}{\delta\varphi(\mathbf{x})}.
	\end{aligned}
	\label{formalism/simpleoperators}
\end{equation}
It is easy to see that the canonical (anti)commutation relations (\ref{formalism/commutators}) will then hold automatically for $t = t_0$, and at arbitrary time the algebra can be constructed using the evolution operator (\ref{formalism/evolution}). The state $\ket{\varphi, t_0}$ is then the ``delta-functional''
\begin{equation}
	\ket{\varphi, t_0} = \Delta\left(\phi - \varphi\right) = \prod_{\mathbf{x}}\delta\left[\phi(\mathbf{x}) - \varphi(\mathbf{x})\right],
\end{equation}
and the momentum eigenstate is
\begin{equation}
	\begin{aligned}
		&\pi(\mathbf{x}, t_0)\ket{\varkappa, t_0} = \varkappa(\mathbf{x})\ket{\varkappa, t_0}\\
		&\ket{\varkappa, t_0} = \int\fundiff{\varphi^\prime}\exp\left[i\int\,d^3\mathbf{x}\varphi^\prime(\mathbf{x})\varkappa(\mathbf{x})\right]\ket{\varphi^\prime,t_0}
	\end{aligned}
\label{formalism/eigenmomentum}
\end{equation}
Thus
\begin{equation}
	\left\langle\varphi,t_0\vert\varkappa,t_0\right\rangle = \exp\left[i\int\,d^3\mathbf{x}\varphi(\mathbf{x})\varkappa(\mathbf{x})\right].
\end{equation}
Now it is possible to define the generating functional of time-ordered products as
\begin{equation}
	Z\left[t_\text{in}, t_\text{out};\varphi_\text{in},\varphi_\text{out};\eta_\phi\right] = 
	\bracket{\varphi_\text{out}, t_\text{out}}{T\exp\left[i\int\,\eta_\phi(x)\phi(x)d^4x\right]}{\varphi_\text{in}, t_\text{in}}. \label{formalizm/defZ}
\end{equation}
This functional encodes all the information about any time-ordered product of $\phi(x)$:
\begin{equation}
	\bracket{\varphi_\text{out}, t_\text{out}}{
		T\phi(x_1)\ldots\phi(x_n)
	}{\varphi_\text{in}, t_\text{in}} = 
	\left.\frac{(-i)^n}{Z\left[t_\text{in}, t_\text{out};\varphi_\text{in},\varphi_\text{out};\eta_\phi\right]}\frac{\delta}{\delta\eta_\phi(x_1)}\ldots\frac{\delta}{\delta\eta_\phi(x_n)}Z\left[t_\text{in}, t_\text{out};\varphi_\text{in},\varphi_\text{out};\eta_\phi\right]\right\vert_{\eta_\phi = 0}.
\end{equation}
$\pi$-dependent time-ordered products can be computed either from the equations of motion or by adding the source $\eta_\pi$ for $\pi$ which will not alter the formulae much. The connected generating functional
is defined as
\begin{equation}
	W\left[t_\text{in}, t_\text{out};\varphi_\text{in},\varphi_\text{out};\eta_\phi\right] = -i\ln{Z\left[t_\text{in}, t_\text{out};\varphi_\text{in},\varphi_\text{out};\eta_\phi\right]}, \label{formalism/defW}
\end{equation}
the effective field as
\begin{equation}
	\varphi(x) = \frac{\delta{W}}{\delta\eta_\phi(x)}, \label{formalism/deffield}
\end{equation}
and the effective action as
\begin{equation}
	\Gamma[\varphi] = W[\eta_\varphi] - \int\eta_\varphi(x)\varphi(x)d^4x, \label{formalism/defgamma}
\end{equation}
where $\eta_\varphi$ is understood as the result of inverting the effective field definition (\ref{formalism/deffield}). Then
\begin{equation}
	\fundelta{\Gamma}{\varphi(x)} = \int\fundelta{\eta(y)}{\varphi(x)}\fundelta{W}{\eta(y)}d^4y - \eta(x) 
	- \int\fundelta{\eta(y)}{\varphi(x)}\varphi(y)d^4y = -\eta(x) \label{formalism/fieldequation}
\end{equation}
Since it must be, by definition of $\varphi$, that
\begin{equation}
	\varphi(t_{\text{in},\text{out}}, \mathbf{x}) = \varphi_{\text{in},\text{out}}(\mathbf{x}),
\end{equation}
the initial and final states are encoded in $\Gamma\left[\varphi\right]$ through its dependence on boundary conditions. 

All this is quite standard, however, it was important to demonstrate that the effective action could be introduced without an explicit reference to the notions of vacuum and particles. 

The path integral expression for $Z$ can be obtained as follows:
\begin{equation}
	\begin{split}
		Z\left[t_\text{in}, t_\text{out};\varphi_\text{in},\varphi_\text{out};\eta_\phi\right] &= 
		\sum\limits_{\{\varphi_k\}}\prod_{t_k = t_\text{in}}^{t_\text{out}}\bracket{\varphi_{k+1},t_{k+1}}{T\exp\left[i\int\limits_{t_k}^{t_{k+1}}\eta_\phi(x)\phi(x)d^4x\right]}{\varphi_{k},t_k}.
	\end{split}
\end{equation} 
Then, one can notice that 
\begin{equation}
	\left.
	\begin{aligned}
		\mathcal{O}(t^\prime) &= U(t,t^\prime)\mathcal{O}(t)U^+(t,t^\prime)\\
		\mathcal{O}(t)\ket{\lambda,t} &= \lambda\ket{\lambda,t}
	\end{aligned}
	\right\rbrace \Rightarrow \mathcal{O}(t^\prime)U(t,t^\prime)\ket{\lambda,t} = \lambda U(t,t^\prime)\ket{\lambda,t}.
	\label{formalism/isospectral}
\end{equation}
Therefore
\begin{equation}
	\bra{\varphi_{k + 1}, t_{k+1}} = \bra{\varphi_{k + 1}, t_{k}}U^+(t_{k},t_{k+1}) 
\end{equation}
and, if all $\Delta{t}_k \equiv t_{k+1} - t_{k}$ are small, then
\begin{equation}
	\begin{split}
		&Z\left[t_\text{in}, t_\text{out};\varphi_\text{in},\varphi_\text{out};\eta_\phi\right] =\\ 
			&=\sum\limits_{\{\varphi_k\}}\prod_{t_k = t_\text{in}}^{t_\text{out}}\bracket{\varphi_{k+1}}{\exp\left[-iH(t_k)\Delta{t}_k + i\int\eta_\phi(x)\phi(x)d^3\mathbf{x}\Delta{t}_k\right]}{\varphi_{k}} + O\left(\max\{\abs{\Delta{t}_k}\}\right).
	\end{split}
\end{equation}
Now one can use the isospectral equation \eqref{formalism/isospectral} and time evolution once again to translate $t_k \rightarrow t_0$, where the  operators have simple form \eqref{formalism/simpleoperators}. Then the matrix element can be computed with ease.
The canonical commutation relations \eqref{formalism/commutators} can be used to move all $\pi$ to the left of $\phi$ in the Hamiltonian
\begin{equation}
	\begin{split}
	&\bracket{\varphi_{k+1}}{\exp\left[-iH(t_k)\Delta{t}_k + i\int\eta_\phi(x)\phi(x)d^3\mathbf{x}\Delta{t}_k\right]}{\varphi_{k}} = \\
	&=\funint{\varkappa_k}\inprod{\varphi_{k+1}}{\varkappa_k}\bracket{\varkappa_k}{\exp\left[-iH(t_k)\Delta{t}_k + i\int\eta_\phi(x)\phi(x)d^3\mathbf{x}\Delta{t}_k\right]}{\varphi_k} = \\
	&=\funint{\varkappa_k}\inprod{\varphi_{k+1}}{\varkappa_k}\inprod{\varkappa_k}{\phi_k}
	\exp\left[-iH[\varkappa_k,\varphi_k]\Delta{t}_k + i\int\eta_\phi(x)\phi(x)d^3\mathbf{x}\Delta{t}_k\right] + O(\Delta{t}_k^2) =\\
	&=\funint{\varkappa_k}\exp\left\{i\int\varkappa_k(\mathbf{x})\left[\phi_{k+1}(\mathbf{x}) - \phi_k(\mathbf{x})\right]
	-iH[\varkappa_k,\varphi_k]\Delta{t}_k + i\int\eta_\phi(x)\phi(x)d^3\mathbf{x}\Delta{t}_k\right\} + O(\Delta{t}_k^2) = \\
	&= \funint{\varkappa_k}\exp\left\{i\left[\int\varkappa_k\varphi_{k,t}d^3\mathbf{x} - H\left[\varkappa_k,\varphi_k\right] + \int\eta_\phi(x)\phi(x)d^3\mathbf{x}\right]\Delta{t}_k\right\} + O(\Delta{t}_k^2)
	\end{split}
\end{equation}
By taking continuous limit, one can obtain
\begin{equation}
	Z\left[t_\text{in}, t_\text{out};\varphi_\text{in},\varphi_\text{out};\eta_\phi\right]  = 
	\funint{\varphi,\pi}\exp\left\{iS_\text{cl}\left[\pi, \varphi\right] + i\int\eta_\phi(x)\varphi(x)d^4x\right\} \label{formalism/funint}.
\end{equation}
If the Hamiltonian is quadratic in $\pi$, like the scalar field one, the $\pi$ integral is Gaussian and yields
\begin{equation}
	Z\left[t_\text{in}, t_\text{out};\varphi_\text{in},\varphi_\text{out};\eta_\phi\right] = 
	C\funint{\varphi}\exp\left\{i\int\left[\mathcal{L}\left(\varphi, \varphi_\mu\right) + \eta\phi\right]d^4x\right\}.
\end{equation}
For Dirac and NJL fermions $\pi = i\psi^+$, therefore
\begin{equation}
	Z\left[t_\text{in}, t_\text{out};\psi_\text{in}, \bar{\psi}_\text{in}, \psi_\text{out}, \bar{\psi}_\text{out}; \eta, \bar{\eta}\right] = 
	\funint{\bar{\psi},\psi}\exp\left\{i\int\left[\mathcal{L}\left(\bar{\psi},\psi, \psi_\mu\right) + \bar{\eta}\psi + \bar{\psi}\eta\right]d^4x\right\}.
\end{equation}

Now, recalling the field equation \eqref{formalism/fieldequation} and the relations \eqref{formalism/defW}, \eqref{formalism/defgamma} between $Z, W$, and $\Gamma$, one can write
\begin{equation}
	\begin{split}
		\exp{i\Gamma[\varphi]} &= 
		\exp\left\{iW\left[t_\text{in}, t_\text{out};\varphi; -\fundelta{\Gamma}{\varphi} \right] + i\int\fundelta{\Gamma}{\varphi(x)}\varphi(x)d^4x\right\} =\\ 
		&=\funint{\phi}\exp\left\{iS_\text{cl}[\phi] - i\int\fundelta{\Gamma}{\varphi(x)}\left[\phi(x) - \varphi(x)\right]d^4x\right\} = \\
		&=\funint{\chi}\exp\left\{iS_\text{cl}[\varphi + \chi] - i\int\fundelta{\Gamma}{\varphi(x)}\chi(x)d^4x\right\}
	\end{split} \label{formalism/selfconsistent}
\end{equation}
This formula, where a simple shift of integration variable $\phi := \varphi + \chi$ was made, gives a self-consistent expression for $\Gamma$. 
Let
\begin{equation}
	\Gamma[\varphi] = S_\text{cl}[\varphi] + \Gamma_1[\varphi].
\end{equation}
Then
\begin{equation}
	\exp{i\Gamma_1[\varphi]} = \funint{\chi}\exp\left\{\frac{i}{2}\iint\chi(x)\frac{\delta^2S_\text{cl}[\varphi]}{\delta\varphi(x)\delta\varphi(y)}\chi(y)d^4xd^4y + V[\varphi, \chi] - i\int\fundelta{\Gamma_1}{\varphi(x)}\chi(x)d^4x\right\}
\end{equation}
Thus
\begin{equation}
	\Gamma[\varphi] = S_\text{cl}[\varphi] + i\theta{}Tr\ln\frac{\delta^2S_\text{cl}[\varphi]}{\delta\varphi(x)\delta\varphi(y)} + \ldots
\end{equation}
where $\theta$ depends on field statistics, being $\frac{1}{2}$ for bosons and $-1$ for fermions.

 It is possible to deduce any correlation function from the effective action. For the two-point bosonic correlator one can write
\begin{equation}
	\bracket{\text{out}}{T\phi(x)\phi(y)}{\text{in}} = -\exp\left[-iW\right]\fundelta{}{\eta(x)}\fundelta{}{\eta(y)}\exp\left[iW\right] =
	\fundelta{W}{\eta(x)}\fundelta{W}{\eta(y)} - i\frac{\delta^2W}{\delta\eta(x)\delta\eta(y)}
\end{equation} 
Now, by recalling \eqref{formalism/deffield}, \eqref{formalism/fieldequation}, one can obtain
\begin{equation}
		\bracket{\text{out}}{T\phi(x)\phi(y)}{\text{in}} = \varphi(x)\varphi(y) - i\fundelta{\varphi(y)}{\eta(x)} = 
		\varphi(x)\varphi(y) - i\left[\fundelta{\eta(x)}{\varphi(y)}\right]^{-1} = 
		\varphi(x)\varphi(y) + i\left[\frac{\delta^2\Gamma[\varphi]}{\delta\varphi(x)\varphi(y)}\right]^{-1}.
		\label{formalism/twopoint}
\end{equation}
It is also easy to see that
\begin{equation}
	\fundelta{}{\eta(x)} = \int\,d^4y\fundelta{\varphi(y)}{\eta(x)}\fundelta{}{\varphi(y)} = -\int\,d^4y\left[\frac{\delta^2\Gamma[\varphi]}{\delta\varphi(x)\varphi(y)}\right]^{-1}\fundelta{}{\varphi(y)}.
\end{equation}
For higher correlation functions
\begin{equation}
	\begin{split}
		&\bracket{\text{out}}{\prod_{k=1}^{n > 2}
		T\phi(x_k)
	}{\text{in}} =
	i^n\exp\left[-i\Gamma[\varphi] - i\int\eta(x)\varphi(x)d^4x\right]\times\\
	&\left.\times\left\{\prod_{k=1}^{n > 2}\int\,d^4y_k\left[\frac{\delta^2\Gamma[\varphi]}{\delta\varphi(x_k)\varphi(y_k)}\right]^{-1}\fundelta{}{\varphi(y_k)}\right\}
	\exp\left[i\Gamma[\varphi] + i\int\eta(x)\varphi(x)d^4x\right]
	\right\vert_{\eta = 0} = \\
	& = \prod_{k = 1}^{n - 2}\left\{\varphi(x_k) + i\int\,d^4y_k\left[\frac{\delta^2\Gamma[\varphi]}{\delta\varphi(x_k)\varphi(y_k)}\right]^{-1}\fundelta{}{\varphi(y_k)}\right\}
	\left\{\varphi(x_{n-1})\varphi(x_n) + i\left[\frac{\delta^2\Gamma[\varphi]}{\delta\varphi(x_{n-1})\varphi(x_n)}\right]^{-1}\right\}.\end{split}
\label{formalism/correlators}
\end{equation}
The effective action can be represented as a power series in $\varphi$:
\begin{equation}
	\Gamma\left[\varphi\right] = -\int\eta_\varphi(x)\varphi(x)d^4x + \frac{1}{2}\iint\varphi(x_1)G_0^{-1}(x_1, x_2)\varphi(x_2)d^4x_1d^4x_2 + V\left[\varphi\right].
\end{equation}
Then
\begin{equation}
	\begin{split}
	G_\varphi(x_1,x_2) &\equiv \left[\frac{\delta^2\Gamma[\varphi]}{\delta\varphi(x_1)\varphi(x_2)}\right]^{-1} = \\
	& = \int\,d^4y_0G_0(x_1,y_0)\sum_{n = 0}^\infty(-1)^n\prod_{k=1}^n\left\{\iint\,d^4y_kd^4z_k\frac{\delta^2V[\varphi]}{\delta\varphi(y_{k-1})\delta\varphi(z_k)}G_0(z_k, y_{k})\right\}\delta(y_{n} - x_2)
	\end{split}
\label{formalism/propagator}
\end{equation}
$G_\varphi$ is the effective propagator in the presence of background field $\varphi$.

The relations \eqref{formalism/correlators} and \eqref{formalism/propagator} have simple diagrammatic interpretation. $G_0$ can be represented as a line, and $\frac{\delta^NV[\varphi]}{\delta\varphi(x_1)\ldots\delta\varphi(x_N)}$ as an $N$-th degree vertex.
Then

\tikzfeynmanset{
	every blob/.style={draw=black!20!black, pattern color=black!20!black},
}

\begin{equation} 
	G_\varphi(x_1,x_2) \equiv 
		\feynmandiagram[inline = (a.base), horizontal=a to b] {
			b [particle=\(x_1\)] -- [plain] a [particle=\(x_2\)],
		};
 + 
	\feynmandiagram[inline = (a.base), horizontal = a to c] {
		a [particle=\(x_1\)] -- plain b [blob] -- plain c [particle=\(x_2\)]
	};
 +
	\feynmandiagram[inline = (a.base), horizontal = a to d] {
		a [particle=\(x_1\)] -- plain b [blob] -- plain c [blob] -- plain d [particle=\(x_2\)]
	};
 + \ldots \label{formalism/propagatorgraph}
\end{equation}
By virtue of \eqref{formalism/twopoint} this gives the two-point correlator.  The following theorem can be proven:
\begin{theorem}[Tree theorem] \label{formalism/treetheorem}
	Any correlator can be represented as a sum of tree diagrams where vertices correspond to $\frac{\delta^NV[\varphi]}{\delta\varphi(x_1)\ldots\delta\varphi(x_N)}$, and edges to $G_0(x,y)$. 
\end{theorem}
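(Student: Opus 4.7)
\medskip

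\noindent\textbf{Proof plan.} The plan is to prove the theorem by induction on the number of external points $n$, starting from the closed-form expression \eqref{formalism/correlators} for the $n$-point correlator, and combining it with the chain expansion \eqref{formalism/propagator} of the effective propagator $G_\varphi$ and the observation that functional derivatives with respect to $\varphi$ either raise the valence of a $V$-vertex or collapse against an explicit $\varphi$ factor. The end state is a purely combinatorial argument on diagrams.

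For the base case $n=2$, equation \eqref{formalism/twopoint} gives
$\bracket{\text{out}}{T\phi(x_1)\phi(x_2)}{\text{in}} = \varphi(x_1)\varphi(x_2) + iG_\varphi(x_1,x_2)$,
and by \eqref{formalism/propagator}--\eqref{formalism/propagatorgraph} $G_\varphi$ is a geometric series of chains whose edges are $G_0$ and whose internal vertices are $\delta^2 V/\delta\varphi\delta\varphi$; a chain is manifestly a tree, and $\varphi(x_1)\varphi(x_2)$ is the trivial disconnected ``two isolated leaves'' tree (no $G_0$ edges). For the inductive step, I would apply a single outer factor $\{\varphi(x_k) + i\!\int\! d^4 y_k\, G_\varphi(x_k,y_k)\,\delta/\delta\varphi(y_k)\}$ from \eqref{formalism/correlators} to an $(n-1)$-point correlator assumed to be a sum of trees. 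The $\varphi(x_k)$ term merely appends an isolated leaf carrying no new $G_0$ edge, preserving the tree structure. The derivative term attaches the chain $G_\varphi(x_k,y_k)$, itself a tree, to the previous diagram in exactly one place, determined by where $\delta/\delta\varphi(y_k)$ lands under the Leibniz rule.

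The three possibilities for the action of $\delta/\delta\varphi(y_k)$ are the core of the argument and all preserve ``treeness.'' Acting on a $G_0$ edge gives zero, since $G_0^{-1}$ was defined as the $\varphi$-independent quadratic kernel of $\Gamma$. Acting on any vertex $\delta^N V/\delta\varphi^N$ of the old tree converts it to $\delta^{N+1} V/\delta\varphi^{N+1}$, opening a single new attachment point at which the end $y_k$ of the incoming chain is glued — i.e., one new $G_0$ edge is added between two vertices of what remains a single connected, cycle-free graph. Acting on an explicit surviving factor $\varphi(x_j)$ produces $\delta(x_j-y_k)$, identifying the chain endpoint with the external leaf $x_j$; again no cycle is created. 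In each of the three cases the operation is precisely ``attach a tree to a tree at one point'' (or identify endpoints), which yields a tree.

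The main obstacle I anticipate is the bookkeeping of the Leibniz expansion: one must be certain that when a derivative penetrates into a nested $G_\varphi$ appearing inside an outer factor, the chain expansion of that inner $G_\varphi$ can legitimately be performed first, so that derivatives act on well-defined vertices rather than on an abstract inverse operator. I would handle this by expanding every $G_\varphi$ via \eqref{formalism/propagator} at the outset, so that the whole expression for the $n$-point function is a formal power series in $G_0$ and in the $\delta^N V/\delta\varphi^N$ with only explicit $\varphi(x)$'s and $G_0(x,y)$'s present; all derivatives then act term-by-term on an explicit polynomial object, and the tree-preservation lemma above applies directly. The induction then closes and yields the claimed representation.
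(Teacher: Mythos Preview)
Your proposal is correct and follows essentially the same route as the paper: induction on the number of external points via the recursion \eqref{formalism/correlators}, with the key observation that $G_0$ is $\varphi$-independent so the functional derivative can only hit a vertex (raising its valence) or an explicit $\varphi$ (producing a dangling end), and neither operation creates a loop. Your version is in fact more careful than the paper's, which skips the explicit chain-expansion of $G_\varphi$ and the Leibniz bookkeeping you spell out.
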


\begin{proof}[Proof]
	Observe that the theorem holds for the two-point correlation function. Assume that it is true for the $N$-point correlator $G_N(x_1,\ldots,x_N)$. By virtue of \eqref{formalism/correlators}, 
	\begin{equation}
		G_{N + 1}(x_1,\ldots,x_{N + 1}) = \left[\varphi(x_1) + i\int\,d^4yG_\varphi(x_1,y)\fundelta{}{\varphi(y)}\right]G_N(x_2,\ldots,x_{N + 1}).
	\end{equation}
	Since $G_0$ does not by definition depend on $\varphi$, $\fundelta{}{\varphi(y)}$ can couple either to $\varphi(x_k)$, which produces a dangling end, or to some vertex $\frac{\delta^mV[\varphi]}{\delta\varphi(y_1)\ldots\delta\varphi(y_m)}$ which adds a branch. Neither gives a loop. Thus, the theorem also holds for $G_{N+1}$.
\end{proof}
As a corollary, $\Gamma\left[\varphi\right]$ is the sum of all one-particle irreducible diagrams \cite{Weinberg1996}, and all the divergencies of a field theory are contained in its effective action which makes the latter a convenient object for developing renormalizaion schemes. The self-consistent equation \eqref{formalism/selfconsistent} will be the primary object of study.

The first step in defining an R-operation for effective action is to recall the derivation of path integral \eqref{formalism/funint} and notice that the integration variables $\phi(x), \pi(x)$ are, in fact, matrix elements of operator-valued distributions. Thus, they are distributions themselves, and their products have the same ambiguities. 
The only subtlety here stems from perturbation theory. Interacting field can be represented as a power series
\begin{equation}
	\phi_\text{int}(x) = \sum_{n = 1}^\infty\idotsint\,d^4y_1\ldots d^4y_n K_n(x;y_1,\ldots,y_n)\phi_\text{free}(y_1)\ldots\phi_\text{free}(y_n). \label{formalism/perturbfield}
\end{equation}  
Redefinition of free field multiplication $* \rightarrow *^\prime$ along the lines of \eqref{formalism/defambiguity} will obviously induce some renormalization of the interacting field $\phi_\text{int}(x) \rightarrow \phi_\text{int}^\prime(x)$ which in general is not the same as constructing the interacting fields first and then trying to redefine their products.  In addition, \eqref{formalism/perturbfield}  implies that perturbation theory requires a multiplication operation for the free fields since one cannot in general expect the kernels $K_n$ to be supported in ``safe'' areas only. Thus, it is natural to define R-operation as tweaking \textit{free} field multiplication to get rid of perturbative infinities. Since a path integral can only be computed as a perturbation of a Gaussian one, the integration variable can be viewed as a free field. Thus
\begin{equation}
	\begin{split}
		RZ[\eta] &= R\funint{\phi}\exp{iS[\phi, \eta]} =\\
		&= \sum_{N = 0}^\infty\frac{i^N}{N!}\funint{\phi}RS_\text{int}^N\left[\phi\right]\exp\left[\frac{i}{2}\iint\phi(x)D_0^{-1}(x,y)\phi(y)d^4xd^4y + i\int\phi(x)\eta(x)d^4x\right] \sim \\
		& \sim \sum_{N = 0}^\infty\frac{i^N}{N!}RS^N\left(-i\frac{\delta}{\delta\eta}\right)\exp\left[-\frac{i}{2}\iint\eta(x)D_0(x,y)\eta(y)d^4xd^4y\right], \label{formalism/renormz}
	\end{split}
\end{equation}
where $S_{int}\left[\phi, \eta\right]$ is the interaction part of the action. In what follows till the end of this section the fields will be considered free unless stated otherwise. Also, field variables as they appear in the path integral will be considered, rather than field operators.

 Let
\begin{equation}
	\phi_{\bar{\lambda}}(x) = \partial_{\lambda_1}\ldots\partial_{\lambda_n}\phi(x), n \in {0,\ldots,\infty},
\end{equation}
where the case $n = 0$ is the undifferentiated field, and
\begin{equation}
		\Delta\left[\phi_{\bar{\lambda}_1}(x_1),\ldots,\phi_{\bar{\lambda}_n}(x_n)\right] = 
		\begin{cases}\phi_{\bar{\lambda}}(x), n = 1\\
			P_{n,\bar{\lambda_1}\ldots\bar{\lambda_n}}\left[x_1, \phi(x_1), \frac{\partial}{\partial{x_1^\mu}}\right]\delta(x_1 - x_2)\ldots\delta(x_1 - x_n), n > 1,
		\end{cases}
\end{equation}
where $P_n\left[x_1, \phi(x_1), \frac{\partial}{\partial{x_1^\mu}}\right]$ is some polynomial in $\phi(x_1), \frac{\partial}{\partial{x_1^\mu}}$ operator with possibly point-dependent coefficients. Then two multiplication operations $*$ and $*^\prime$ are related by 
\begin{equation}
	\left[\phi_{\bar{\lambda}_1}(x_1)\ldots\phi_{\bar{\lambda}_n}(x_n)\right]^\prime = 
	\sum_{V_1\cup\ldots\cup{V_m} = \left\{1,\ldots,n\right\}}\prod_{k=1}^m\Delta\left[\left\{\phi_{\bar{\lambda}_l}(x_l)\vert l \in V_k\right\}\right],
	\label{formalism/renormprod}
\end{equation}
where the summation is performed over all possible breakdowns of the index set $\left\{1,\ldots,n\right\}$ into non-empty non-intersecting subsets $V_k$; $m$ runs from $1$ to $n$, and the term with $m = n$ is obviously equal to  $\phi_{\bar{\lambda}_1}(x_1)\ldots\phi_{\bar{\lambda}_n}(x_n)$.  

Some assumption about the raw propagator $G_0$ is required to proceed further. For all physically interesting cases, 
\begin{equation}
G_0(x,y) = 
\begin{cases}
	\int\frac{T(p)e^{ip(x - y)}d^4p}{p^2 - m^2}\\
	C\delta(x-y)
\end{cases} \label{formalism/propagator}
\end{equation}
where $T(p)$ is some at most linear expression in $p^\mu$, and the case $G_0 = C\delta(x-y)$ is the bosonized NJL model's bosonic field and its likes.  

The most general loop operator can be defined as
\begin{equation}
L_{p}[\{q\}, \{m\}]F(p;\{q\}) \equiv 
\int\frac{F(p;\{q\})d^4p}{\prod_{i=1}^l\left[(p - k_i)^2 - m_i^2 + i\epsilon\right]} \label{formalism/defloop},
\end{equation}
where $\{q\}$ is the set of all external momenta, and $k_i$ can be expressed through $\{q\}$. With the use of Schwinger parametrization this can be expressed as
\begin{equation}
\begin{split}
L_{p}[\{q\}, \{m\}]F(p;\{q\}) &= \int_0^\infty\,d\bar{\alpha}\int\,d^4p\,F(p;\{q\})\exp\left\{-\epsilon\sum_{k=1}^l\alpha_k 
+ i\sum_{k=1}^l\alpha_k\left[(p - k_k)^2 - m_k^2\right]\right\} = \\
&=\left|
	\begin{array}{l}
	\sigma = \sum_{i=1}^l\alpha_i\\
	\alpha_k = \sigma\tau_k\\
	\bar{k} = \sum_{i=1}^l\tau_ik_i\\
	p = r + \bar{k}\\
	d\bar{\alpha} = d\sigma{}d\bar{\tau}\sigma^{l - 1}\delta\left(1 - \sum_{i=1}^l\tau_i\right)
	\end{array}
\right| = 
\int_0^1d\bar{\tau}\delta\left(1 - \sum_{i=1}^l\tau_i\right)\times\\
&\times\int_0^\infty\,d\sigma\sigma^{l - 1}
\exp\left[-\epsilon\sigma + i\sigma\sum_{i=1}^l\tau_i\left(k_i^2 - m_i^2\right) - i\sigma\bar{k}^2\right]\int\,d^4rF(r + \bar{k}, \{q\})
\exp\left[i\sigma{r^2}\right]
\end{split} \label{formalism/Schwinger}
\end{equation}
If $F(p;\left\{q\right\})$ is polynomial in $p^\mu$, integration over $r$ can be performed easily using
\begin{equation}
\begin{split}
\int\exp\left[ip^2\alpha\right]d^4p = -\frac{i\pi^2}{\alpha^2}\\
\int\prod_{k=1}^n{p^{\mu_k}}\exp\left[ip^2\alpha\right]d^4p &= 
\begin{cases}
0, n = 2m + 1\\
\frac{i\pi^2C_m}{\alpha^{2 + m}}S_m^{\mu_1\ldots\mu_n}(\eta), n = 2m,
\end{cases}
\end{split}\label{formalism/reduction}
\end{equation}
where $C_m$ is a numeric constant and $S_m^{\mu_1\ldots\mu_n}(\eta)$ is $m$-th symmetric power of the metric. 
Now it is clear that
\begin{equation}
\begin{split}
&L_{p}[\{q\}, \{m\}]F(p;\{q\}) = 
\int_0^1d\bar{\tau}\delta\left(1 - \sum_{i=1}^l\tau_i\right)G(\left\{q\right\},\bar{\tau})\\
&G(\left\{q\right\},\bar{\tau}) = \int_0^\infty\,d\sigma\sum_{m = 0}^{N}\sigma^{l - m - 3}G_m(\left\{q\right\})\exp\left[-\sigma{}Q(\bar{\tau},\left\{q\right\})\right] = \sum_{m = 0}^{N}\Gamma(l - m - 2)G_m(\left\{q\right\})Q^{m - l + 2}(\bar{\tau}, \left\{q\right\})
\end{split} \label{formalism/loopsolved}
\end{equation}
The divergent terms of $G$ will have nonnegative powers of $Q$. Since $Q$ is polynomial in $k_i$, this implies that the divergent part of the loop is quasi-local which can be demonstrated by going back to $x$-representation:
\begin{equation}
\text{divp}{LF} = \idotsint\,d^4x_1{\ldots}d^4x_lF(x_1,\ldots,x_l)C_n(\partial^\mu_{x_i})\prod_{i=2}^l\delta(x_i - x_1),
\label{formalism/localdivergence}
\end{equation}
where $C_n$ is a differential operator with constant divergent coefficients.

The generating functional \eqref{formalism/renormz} can be expressed as a sum of diagrams:
\begin{equation}
\begin{split}
Z\left[\eta\right] &= \sum_D\idotsint\,G_D(\bar{q})\eta(q_1)\ldots\eta(q_{m_D})\,dq_1\ldots\,dq_{m_D};\\
G_D(\bar{q}) &= 
L_{\bar{\alpha}_1,p_1}[\{m\}_1;\{q\}_D]L_{\bar{\alpha}_2,p_2}[\{m\}_2;p_1,\{q\}_D]\ldots\,
L_{\bar{\alpha}_{l_D},p_{l_D}}[\{m\}_{l_D},p_1,\ldots,p_{l_D - 1},\{q\}_D]
F_D\left[\left\{\bar{\alpha}_k\right\},\left\{p\right\},\bar{q}_D\right].
\end{split} \label{formalism/loopexpr}
\end{equation}
Here $D$-th diagram has $m_D$ external momenta and $l_D$ loops.  
For all local theories with bare propagators of the form \eqref{formalism/propagator}, an arbitrary diagram can be factorized into its connected parts and
any connected (sub)diagram will have $\delta\left(\sum_{i}q_i\right)$ in the numerator, where $q_i$ are external momenta. A loop is connected by definition.
If field products are redefined in the spirit of \eqref{formalism/renormprod} then
\begin{equation}
Z^\prime = Z + \Delta{Z},
\end{equation}
$\Delta{Z}$ being the sum of counterterm diagrams. 
Now \eqref{formalism/localdivergence} implies that $\text{divp}Z$ and $\Delta{Z}$ have the same form with a suitable choice of counterterms.
Thus multiplication of field distributions can be altered to render $\text{divp}_rZ + \Delta_rZ$ finite even after the intermediate regularization is lifted.

A practical R-operation algorithm can be defined as follows \cite{Hepp1966}, \cite{Zimmermann1969}, \cite{Zavyalov1990}, \cite{itzykson2012}. Let \textit{forest} be a sequence of diagrams $\{G_i\}$ where each $G_{i+1}$ is either a subdiagram of $G_i$ or does not overlap the latter. For any diagram in a theory with propagators of the form 
\eqref{formalism/propagator}, its divergent part is given by first $n$ terms of its Taylor expansion in external momenta.
The crucial ingredients of an R-operation are two operators:
\begin{equation}
\begin{split}
&M_G: M_GG = \divp{G}\\{}
&\Lambda_G: \Lambda_GG = -M_G + \lambda_G(q),
\end{split}
\end{equation}
where $\lambda_G(q)$ is a finite polynomial in external momenta of order no more than $n$.

Then
\begin{equation}
RG_D = \left[1 + \sum_{\mathcal{F}}\prod_{G \in \mathcal{F}}\Lambda_G\right]D,
\end{equation}
where the summation is performed over all possible forests of subdiagrams in $G$, including the diagram itself.

A generic theory has infinitely many diagrams and thus will depend on an infinite number of constants.
If there is some additional symmetry such as gauge invariance, extra restrictions must be imposed on finite renormalization
which will reduce the independent parameter count. In some cases the latter will become finite: these are the renormalizable theories.

The construction above can be repeated for effective action instead of $Z$ with background field $\varphi$ in place of external sources.
In this case, it is the same-point powers of the deviation field $\chi$ that will be redefined in the process of renormalization.
This is the formalism that will be used to construct the effective action for bosonized NJL model. The crucial difference with the standard formulation of BPHZ is that the presence of background field $\varphi$ allows, in principle, counterterms that depend on it. As $\varphi$ may vary with spacetime point, the $R$-operation may become point-dependent as well. This will be elaborated in more detail in \ref{onelooprenorm}.

\section{The one-loop effective action for bosonized NJL model}  \label{NJL-action}

\subsection{Fierz identities and non-uniqueness of bosonization}

The form of the original NJL Lagrangian \eqref{intro/NJL} suggests introducing two bosonic fields $\xi(x),\eta(x)$,
coupling to $\bar{\psi}\psi$ and $\bar{\psi}\gamma^5\psi$ respectively. 
However, Fierz identities imply that the NJL interaction terms can be rewritten through other quartic scalar combinations of $\bar{\psi}, \psi$ and $\gamma$-matrices. In general, all the terms 
$\bar{\psi}\psi\bar{\psi}\psi$, $\bar{\psi}\gamma_\mu\psi\bar{\psi}\gamma^\mu\psi$, 
$\bar{\psi}\sigma_{\mu\nu}\psi\bar{\psi}\sigma^{\mu\nu}\psi$,
$\bar{\psi}\gamma^5\gamma_\mu\psi\bar{\psi}\gamma^5\gamma^\mu\psi$,
$\bar{\psi}\gamma^5\psi\bar{\psi}\gamma^5\psi$ may be involved. Thus, bosonization prescription for the model \eqref{intro/NJL} is in fact ambiguous:
one may introduce different inventories of bosonic fields. 
The principle of maximum degrees of freedom will be upheld here: introduce all possible fields, then Fierz identities will result in relations on coupling constants.
In this subsection. the following questions will be addressed:
\begin{enumerate}
\item What are the Fierz identities for fermions with additional $U(N)$ symmetry?
\item How many linearly independent quartic interaction terms are there?
\item How the bosonic coupling constants are related to the original model's ones?
\end{enumerate}

As interaction terms in the lagrangian must be invariant under symmetries of the model, the answer to the questions above requires a recourse to the theory of invariants.
Let 
\begin{equation}
G^a = 
\begin{cases}
0, a = 0;\\
T^a, \text{otherwise}
\end{cases} \label{fierz4/extbasis}
\end{equation}
be the complete basis in $gl(N, \mathbb{C})$.

A generic fermionic self-interaction term is 
\begin{equation}
\Lgr_\text{int} = g_{\bar{i}_1\ldots\bar{i}_n}t_{\bar{\alpha}_1\ldots\bar{\alpha}_n}
\bar{\psi}\tau_{\bar{i}_1}^{\bar{\alpha}_1}\psi\ldots
\bar{\psi}\tau_{\bar{i}_n}^{\bar{\alpha}_n}\psi
\end{equation}
Invariance under symmetries of the model requires $t_{\bar{\alpha}_1\ldots\bar{\alpha}_n}$ to be invariant.
Since the complete group of symmetries is the direct product of Lorentz and unitary groups,
\begin{equation}
t_{\bar{\alpha}_1\ldots\bar{\alpha}_n} = \sum_{k,l}t_{kl}\Theta_{k;\mu_1\ldots\mu_m}\Phi_{l;a_1\ldots{}a_n},
\end{equation}
where $\Theta$ and $\Phi$ are basis Lorentz- and $U(N)$-invariant forms respectively. 
The question then boils down to classification of such forms which, in turn, relies on two 
results from the theory of invariants. 

\begin{theorem}[Classification of Lorentz-invariant forms] \label{fierz4/lorentzforms}
All Lorentz-invariant (pseudo)tensors of arbitrary rank must be built from the metric and Levi-Civita symbol
$\epsilon^{\kappa\lambda\mu\nu}$.
\end{theorem}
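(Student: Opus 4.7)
The plan is to derive the statement from the classical first fundamental theorem of invariant theory for the pseudo-orthogonal group. The easy direction is immediate: $\eta^{\mu\nu}$ is invariant by definition of $O(1,3)$, and $\epsilon^{\kappa\lambda\mu\nu}$ is $SO^+(1,3)$-invariant and transforms by $\det\Lambda$ under the full Lorentz group, so any tensor assembled from these by tensor product, contraction, and index permutation is manifestly (pseudo-)invariant. The content of the theorem is the converse, that these exhaust all invariants.

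For the parity-even case I would invoke Weyl's first fundamental theorem for the orthogonal group: the space of $O(n)$-invariants in $V^{\otimes 2k}$ is linearly spanned by tensors obtained by pairing up the $2k$ indices in all possible ways using the metric, while the space of invariants in $V^{\otimes (2k+1)}$ is zero. To transfer this to the Lorentzian setting I would complexify; the invariant subspace in each fixed tensor power is the solution set of a system of linear equations whose coefficients are polynomials in the entries of $\Lambda$ and of the metric treated as a formal symbol, so after complexification the Euclidean orthogonal group $O(4,\mathbb{C})$ and the complex Lorentz group $O(1,3,\mathbb{C})$ become conjugate and produce the same invariant subspace. Restricting back to the real form and reinterpreting the metric as $\eta^{\mu\nu}$ yields the Minkowski statement.

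For the parity-odd case, observe that the quotient $O(1,3)/SO^+(1,3)$ is finite abelian, so every $SO^+$-invariant splits into a sum of isotypic components for the characters of this quotient, and the nontrivial component can be written as $\epsilon^{\kappa\lambda\mu\nu}$ contracted with a genuine $O(1,3)$-invariant of lower rank. At most one factor of $\epsilon$ is ever needed, since the identity
\begin{equation}
\epsilon^{\alpha_1\alpha_2\alpha_3\alpha_4}\epsilon^{\beta_1\beta_2\beta_3\beta_4} = -\det\!\left[\eta^{\alpha_i\beta_j}\right]
\end{equation}
reduces any pair of Levi-Civita symbols to a sum of products of metrics; combined with the parity-even result this completes the classification.

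The main obstacle is Weyl's theorem itself, whose proof (via Schur--Weyl-type duality between the orthogonal group and the Brauer algebra) is nonelementary. For the purposes of this paper I would cite it as a well-established classical result rather than reprove it, and spend the written argument on the signature-independence step and on the reduction of the pseudo-tensor case to the tensor case, which is where the Lorentzian specifics actually enter.
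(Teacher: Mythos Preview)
Your proposal is correct and takes essentially the same approach as the paper: both reduce the claim to Weyl's theorem on invariants of orthogonal groups, with the paper simply citing \cite[53]{Weyl1946} without further argument. Your version is in fact more thorough, since you spell out the complexification step that transfers the result across signatures and the reduction of the pseudotensor case via the $\epsilon\epsilon = -\det\eta$ identity, neither of which the paper makes explicit.
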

This follows from Weyl's theorem on invariants of orthogonal groups \cite[53]{Weyl1946}.

As $gl(N,\mathbb{C})$ is a complexification of $u(N)$, Ad-invariant $n$-forms must be the same on both.
Their classification is facilitated by the following
\begin{theorem}[Trace decomposition theorem for $gl(N,\mathbb{C})$] \label{fierz4/tracedecomposition}
Let $\Phi(X_1,\ldots,X_n)$ be an Ad-invariant linear n-form on the defining representation of $gl(N, \mathbb{C})$:
\begin{equation}
\Phi(X_1,\ldots,X_n) = \Phi_{i_1\ldots{}i_n}^{j_1\ldots{}j_n}X_{1;j_1}^{i_1}\ldots{}X_{n;j_n}^{i_n}
\end{equation}
Then
\begin{equation}
\Phi_{i_1\ldots{}i_n}^{j_1\ldots{}j_n} = \sum_{P}\Phi_P\delta^{j_1}_{Pi_1}\ldots\delta^{j_n}_{Pi_n},
\end{equation}
where summation is carried over all permutations of lower indices. This can also be stated in an equivalent form:
let $I_1,\ldots,I_l$ are ordered index sets $I_k=\{i_{k1},\ldots,i_{km_k}\}$ such that
$I_1\cup\ldots\cup{I_n} = \{1,\ldots,n\}$. Then
\begin{equation}
\Phi(X_1,\ldots,X_n) = \sum_{I_1,\ldots,I_l}\prod_{I_k}\Tr\left[\prod_{i \in I_k}X_i\right]
\end{equation}
For example, for $n=3$:
\begin{equation}
\begin{split}
\Phi(X_1,X_2,X_3) &= \Phi_1\Tr{X_1X_2X_3} + \Phi_2\Tr{X_1X_3X_2} +\\ 
&+ \Phi_3\Tr{X_1}\Tr{X_2X_3} + \Phi_4\Tr{X_2}\Tr{X_1X_3} + \Phi_5\Tr{X_3}\Tr{X_1X_2}
+ \Phi_6\Tr{X_1}\Tr{X_2}\Tr{X_3}
\end{split}
\end{equation}
\end{theorem}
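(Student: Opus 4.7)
The plan is to recognize this as the first fundamental theorem of invariant theory for $GL(N, \mathbb{C})$ applied to the tensor product of the defining representation with its dual, and to identify the resulting permutation tensors with products of traces via the cycle decomposition of permutations.

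First I would reformulate $\Phi$ as a tensor. Since each $X_k \in gl(N,\mathbb{C})$ is naturally an element of $V \otimes V^*$ with $V = \mathbb{C}^N$, the multilinear form $\Phi$ corresponds to an element
\begin{equation}
\Phi \in \bigl((V \otimes V^*)^{\otimes n}\bigr)^* \cong (V^*)^{\otimes n} \otimes V^{\otimes n}.
\end{equation}
The Ad-invariance condition $\Phi(gX_1g^{-1}, \ldots, gX_ng^{-1}) = \Phi(X_1,\ldots,X_n)$ for all $g \in GL(N,\mathbb{C})$ is exactly the condition that the corresponding tensor lies in the $GL(V)$-invariant subspace, where $GL(V)$ acts diagonally on all factors.

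Next, I would invoke the first fundamental theorem of invariant theory for $GL(V)$ (a consequence of Weyl's results cited in the paper, and of Schur-Weyl duality): the space of $GL(V)$-invariants in $(V^*)^{\otimes n} \otimes V^{\otimes n}$ is spanned by the ``permutation tensors''
\begin{equation}
\sigma^{j_1\ldots{}j_n}_{i_1\ldots{}i_n} = \delta^{j_{\sigma(1)}}_{i_1}\delta^{j_{\sigma(2)}}_{i_2}\ldots\delta^{j_{\sigma(n)}}_{i_n},
\end{equation}
indexed by $\sigma \in S_n$. Thus $\Phi_{i_1\ldots{}i_n}^{j_1\ldots{}j_n} = \sum_{\sigma \in S_n}\Phi_\sigma\,\delta^{j_{\sigma(1)}}_{i_1}\ldots\delta^{j_{\sigma(n)}}_{i_n}$, which is already the first form of the statement (after relabeling $P = \sigma$).

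Then I would reduce each permutation tensor to a product of traces. For a fixed $\sigma$, contracting with $X_{1,j_1}^{i_1}\ldots X_{n,j_n}^{i_n}$ gives
\begin{equation}
\sum_{i_1,\ldots,i_n} X_{1,i_{\sigma(1)}}^{i_1} X_{2,i_{\sigma(2)}}^{i_2}\ldots X_{n,i_{\sigma(n)}}^{i_n},
\end{equation}
and an elementary check shows this factorizes according to the disjoint cycles of $\sigma$: a cycle $(k_1\,k_2\,\ldots\,k_m)$ contributes exactly $\Tr(X_{k_1}X_{k_2}\ldots X_{k_m})$. Hence each $\sigma$ produces the product of traces indexed by its cycles, and grouping the sum over $S_n$ by cycle type yields the second form of the statement, with the ordered partitions $I_1 \cup \ldots \cup I_l = \{1,\ldots,n\}$ enumerating the possible cycle structures (cyclic orderings within cycles and the ordering of cycles being absorbed into the coefficients).

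The main obstacle is the first fundamental theorem step: showing that no invariants beyond permutation tensors exist. The cleanest route is Schur-Weyl duality applied to $V^{\otimes n}$, viewing the invariants as $\mathrm{Hom}_{GL(V)}(V^{\otimes n}, V^{\otimes n})$, which is spanned by the image of $\mathbb{C}[S_n]$; equivalently, one may cite Weyl's theorem on polynomial invariants of $GL(V)$ on mixed tensors \cite{Weyl1946} directly. The subsequent translation from permutations to trace monomials is combinatorial and routine.
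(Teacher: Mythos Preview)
Your proposal is correct and matches the paper's approach: the paper simply states that this theorem is a corollary of the first fundamental theorem of invariant theory for matrices (citing Kraft--Procesi), which is exactly what you invoke. You have supplied the additional detail of translating permutation tensors into trace monomials via cycle decomposition, but the core argument is the same.
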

This is a corollary of the first fundamental theorem of invariant theory for matrices \cite[21--22]{Kraft1996}.
In turn, it has two important corollaries. 

\begin{corollary} \label{fierz4/tensorcorollary}
Expression of an $n$-form $\Phi$ through traces implies that any Ad-invariant $\Phi^{a_1\ldots{a_n}}$
can be expressed through $\delta^{ab}, f^{abc}, d^{abc}$.
\end{corollary}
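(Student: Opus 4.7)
The plan is to translate the conclusion of Theorem \ref{fierz4/tracedecomposition} directly into the language of structure constants. First, I would observe that the Ad-invariant tensor $\Phi^{a_1\ldots a_n}$ is nothing but the invariant form $\Phi$ evaluated on generators, $\Phi^{a_1\ldots a_n} = \Phi(G^{a_1},\ldots,G^{a_n})$, where $G^a$ is the extended basis \eqref{fierz4/extbasis}. Theorem \ref{fierz4/tracedecomposition} then expresses $\Phi^{a_1\ldots a_n}$ as a finite sum of products of elementary traces $\Tr[G^{a_{i_1}}G^{a_{i_2}}\cdots G^{a_{i_k}}]$ taken over a partition of the external indices. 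The corollary is therefore reduced to the purely computational task of showing that every such trace of a product of generators can itself be written as a polynomial in $\delta^{ab}$, $f^{abc}$ and $d^{abc}$.

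The main step is then an induction on the length $k$ of each trace. The base cases are standard: $\Tr\mathbb{1}=N$, $\Tr T^a=0$ and $\Tr(T^aT^b)=\tfrac{1}{2}\delta^{ab}$, all of which manifestly belong to the algebra generated by $\delta^{ab},f^{abc},d^{abc}$. For the inductive step I would use the well-known $su(N)$ product formula
\begin{equation}
T^a T^b = \frac{1}{2N}\delta^{ab}\mathbb{1} + \frac{1}{2}\left(d^{abc} + i f^{abc}\right) T^c,
\end{equation}
applied to the first two generators inside $\Tr[T^{a_1}T^{a_2}\cdots T^{a_k}]$. This rewrites the trace as a constant multiple of $\Tr[T^{a_3}\cdots T^{a_k}]$ plus a linear combination of traces of length $k-1$ with coefficients built from $\delta^{a_1 a_2}$, $d^{a_1 a_2 c}$ and $f^{a_1 a_2 c}$. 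The recursion strictly decreases the length, so after finitely many steps every trace is expressed in the desired form. Traces involving the extended-basis element $G^0$ contribute only factors of $\mathbb{1}$ which are absorbed into $\delta$'s and numerical coefficients, so no separate argument is needed for them.

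The only genuine obstacle is bookkeeping rather than concept: one has to make sure that the summation over partitions produced by Theorem \ref{fierz4/tracedecomposition} respects whatever permutation symmetries $\Phi$ might have, and that one does not double-count terms when $d^{abc}$ and $f^{abc}$ are combined. Once this combinatorial accounting is done carefully, the corollary follows immediately, with no harder input than the product formula above and the trace decomposition theorem itself.
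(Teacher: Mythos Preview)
Your argument is correct and is precisely the standard way to make this corollary explicit: the paper itself offers no proof beyond declaring it a consequence of the trace decomposition theorem, so your induction via the product formula $T^aT^b=\tfrac{1}{2N}\delta^{ab}\mathbb{1}+\tfrac{1}{2}(d^{abc}+if^{abc})T^c$ is exactly the missing computation the reader is expected to supply. One minor quibble: for this corollary the indices $a_i$ run over the $su(N)$ generators $T^a$ only, so you can simply write $\Phi^{a_1\ldots a_n}=\Phi(T^{a_1},\ldots,T^{a_n})$ without invoking the extended basis $G^a$; the latter becomes relevant only in Corollary~\ref{fierz4/deltacorollary}.
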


\begin{corollary} \label{fierz4/deltacorollary}
Let $\Phi^{a_1\ldots{a_n}}$ be coefficients of some Ad-invariant n-form.
Then,
\begin{equation}
\Phi^{a_1\ldots{a_n}}G^{a_1\, i_1}_{j_1}\ldots{}G^{a_n\,i_n}_{j_n} = 
\sum_{P}C_P\delta^{i_1}_{Pj_1}\ldots\delta^{i_n}_{Pj_n}
\end{equation}
\end{corollary}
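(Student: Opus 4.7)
The plan is to exhibit the left-hand side as a $GL(N)$-invariant tensor of type $(n,n)$ on the defining representation, and then to apply the same invariant-theoretic classification that underlies Theorem \ref{fierz4/tracedecomposition}.

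First I would define
\[
T^{i_1\ldots i_n}_{j_1\ldots j_n} := \Phi^{a_1\ldots a_n}\, G^{a_1\,i_1}_{j_1}\cdots G^{a_n\,i_n}_{j_n}
\]
and verify that $T$ is invariant under the simultaneous natural $GL(N)$ action on every pair $(i_k, j_k)$. The key ingredient is the covariance $U G^b U^{-1} = R^{a}{}_{b}(U)\,G^{a}$ of the basis under conjugation, where $R(U)$ is the matrix of $\mathrm{Ad}_U$ in the basis $\{G^a\}$. Substituting this into the transformed $T$ pushes the $U$'s through to the adjoint indices of $\Phi$, where the Ad-invariance condition $\Phi^{a_1\ldots a_n} R^{b_1}{}_{a_1}\cdots R^{b_n}{}_{a_n} = \Phi^{b_1\ldots b_n}$ closes the argument.

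Next I would invoke the first fundamental theorem of invariant theory for $GL(N)$: any tensor of type $(n,n)$ invariant under the simultaneous fundamental/antifundamental action is a linear combination of products $\delta^{i_1}_{j_{P1}}\cdots \delta^{i_n}_{j_{Pn}}$ taken over permutations $P$, which is precisely the claimed conclusion. Alternatively, one may derive the same result constructively from Theorem \ref{fierz4/tracedecomposition} itself by specializing its arguments to the elementary matrices $X_k = E^{i_k}_{j_k}$: each cyclic trace $\Tr[X_{k_1}\cdots X_{k_m}]$ collapses into a cyclic product of Kronecker deltas tying $i_{k_l}$ to $j_{k_{l+1}}$, and summing over decompositions of $\{1,\ldots,n\}$ into disjoint cycles reproduces exactly the permutation sum on the right-hand side, with the coefficients $C_P$ being linear combinations of the trace coefficients from Theorem \ref{fierz4/tracedecomposition}.

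The main obstacle I anticipate is index bookkeeping: keeping the adjoint versus fundamental conventions straight and correctly tracking the direction of $R(U)$ versus $R(U)^{-1}$ throughout the covariance computation. A secondary technical point is that the basis $\{G^a\}$ must span all of $gl(N,\mathbb{C})$ in order for the algebraic Ad-invariance of $\Phi$ to upgrade to full $GL(N)$-invariance of $T$; the entry $G^0$ should be read as the identity matrix, which together with the $su(N)$ generators completes the basis and supplies the missing trace component needed for the first fundamental theorem to apply.
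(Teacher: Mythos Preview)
The paper gives no proof of this corollary beyond labeling it as such; it is simply asserted as an immediate consequence of Theorem~\ref{fierz4/tracedecomposition}, which in turn is quoted from the first fundamental theorem for matrix invariants. Your first approach---showing that $T^{i_1\ldots i_n}_{j_1\ldots j_n}$ is $GL(N)$-invariant and then invoking the first fundamental theorem---is correct and is precisely the mechanism the paper has in mind, only made explicit.

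Your second approach needs one more link in the chain. Evaluating the form of Theorem~\ref{fierz4/tracedecomposition} at elementary matrices $X_k=E^{i_k}_{j_k}$ produces the defining-representation coefficients $\Phi^{i_1\ldots i_n}_{j_1\ldots j_n}$, not the contraction $\Phi^{a_1\ldots a_n}(G^{a_1})^{i_1}_{j_1}\cdots(G^{a_n})^{i_n}_{j_n}$ that appears on the left of the corollary. The two are related through the completeness relation $\sum_a (G^a)^i_j(G^a)^k_l$, which by \eqref{fierz4/trace2} is itself a linear combination of $\delta^i_l\delta^k_j$ and $\delta^i_j\delta^k_l$; feeding this back into a sum of permutation deltas again yields a sum of permutation deltas, so the conclusion survives, but the step is not automatic.

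On your remark about $G^0$: the paper actually sets $G^0=0$ in \eqref{fierz4/extbasis}, not the identity. This is either a typo (the surrounding text calls $\{G^a\}$ a complete basis of $gl(N,\mathbb{C})$, and the later use in \eqref{fierz6/chiforms} clearly treats the $a=0$ slot as the identity) or else the $a=0$ contributions simply drop out. Either reading is harmless for your argument: for tensors of type $(n,n)$ the center of $GL(N)$ acts trivially, so $SL(N)$-invariance already forces the permutation-delta form.
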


For $n=2$,
\begin{equation}
T^a_{ij}T^a_{kl} = \frac{1}{2}\delta_{il}\delta_{jk} - \frac{1}{2N}\delta_{ij}\delta_{kl} \label{fierz4/trace2},
\end{equation}
which can also be proven by other means (\cite{Haber2021}, which contains a reference on identities involving $u(N)$ generators; for $\gamma$-matrix identities see \cite{PBPal2007}).

Only parity-invariant interactions will be studied here. 
It is possible to consider parity-violating interactions too, for example $\bar{\psi}\gamma^5\psi\bar{\psi}\psi$,
but this will require more effort while not making any conceptual difference.
As Fierz transformations preserve parity, there must be a separate set of Fierz identities for odd-parity terms. 

Let
\begin{equation}
\begin{split}
I_{10} &= \bar{\psi}\psi\bar{\psi}\psi\\
I_{11} &= \bar{\psi}T^a\psi\bar{\psi}T^a\psi\\
I_{20} &= \bar{\psi}\gamma_\mu\psi\bar{\psi}\gamma^\mu\psi\\
I_{21} &= \bar{\psi}T^a\gamma_\mu\psi\bar{\psi}T^a\gamma^\mu\psi\\
I_{30} &= \bar{\psi}\sigma_{\mu\nu}\psi\bar{\psi}\sigma^{\mu\nu}\psi\\
I_{31} &= \bar{\psi}T^a\sigma_{\mu\nu}\psi\bar{\psi}T^a\sigma^{\mu\nu}\psi\\
I_{40} &= \bar{\psi}\gamma^5\gamma_{\mu}\psi\bar{\psi}\gamma^5\gamma^\mu\psi\\
I_{41} &= \bar{\psi}T^a\gamma^5\gamma_{\mu}\psi\bar{\psi}T^a\gamma^5\gamma^\mu\psi\\
I_{50} &= \bar{\psi}\gamma^5\psi\bar{\psi}\gamma^5\psi\\
I_{51} &= \bar{\psi}T^a\gamma^5\psi\bar{\psi}T^a\gamma^5\psi
\end{split} \label{fierz4/defterms}
\end{equation}
As a direct consequence of \ref{fierz4/lorentzforms} and \ref{fierz4/tracedecomposition}, all possible Lorentz-, P- and $U(N)$-invariant quartic fermion interaction terms can be written as linear combinations of the terms above.

Then, \eqref{fierz4/trace2} can be used to proceed further: 
\begin{equation}
\bar{\psi}_iT^a_{ij}\Gamma_{(\alpha)}\psi_j\bar{\psi}_kT^a_{kl}\Gamma_{(\beta)}\psi_l = 
\frac{1}{2}\bar{\psi}_i\Gamma_{(\alpha)}\psi_j\bar{\psi}_j\Gamma_{(\beta)}\psi_i - \frac{1}{2N}\bar{\psi}_i\Gamma_{(\alpha)}\psi_i\bar{\psi}_j\Gamma_{(\beta)}\psi_j, \label{fierz4/deltadecomposition}
\end{equation}

But any matrix in $Cl_{1,3}(\mathbb{R})\otimes{u(N)}$ can be written as
\begin{equation}
\begin{split}
M &= \frac{1}{4N}\Tr{M} + \frac{T^a}{2}\Tr{T^aM} 
+ \frac{\gamma_\mu}{4N}\Tr{\gamma^\mu{}M} + \frac{T^a\gamma_\mu}{2}\Tr{T^a\gamma^\mu{}M} 
+ \frac{\sigma_{\mu\nu}}{8N}\Tr{\sigma^{\mu\nu}M} + \frac{T^a\sigma_{\mu\nu}}{4}\Tr{T^a\sigma^{\mu\nu}M}\\
&- \frac{\gamma^5\gamma_\mu}{4N}\Tr{\gamma^5\gamma^\mu{}M} - \frac{T^a\gamma^5\gamma_\mu}{2}\Tr{T^a\gamma^5\gamma^\mu{M}}
+ \frac{\gamma^5}{4N}\Tr{\gamma^5{M}} + \frac{T^a\gamma^5}{2}\Tr{T^a\gamma^5M}.
\end{split} \label{fierz4/decomposition}
\end{equation}
where $\Gamma_{(\alpha)} \in Cl_{1,3}(\mathbf{R})$. $\psi_i\bar{\psi}_i \in Cl_{1,3}(\mathbf{R})$ and can be expanded as
\begin{equation}
\psi_i\bar{\psi}_i = - \frac{\bar{\psi}_i\psi_i}{4} - \frac{\gamma_\mu\bar{\psi}_i\gamma^\mu\psi_i}{4} 
- \frac{\sigma_{\mu\nu}\bar{\psi}_i\sigma^{\mu\nu}\psi_i}{8} 
+ \frac{\gamma^5\gamma_\mu\bar{\psi}_i\gamma^5\gamma^\mu\psi_i}{4} 
- \frac{\gamma^5\bar{\psi}_i\gamma^5\psi_i}{4}. \label{fierz4/cliffordexpansion}
\end{equation}
Here the minus sign appears because $\psi$ is a Grassmann variable.
Then various $\gamma$ matrix identities, of which 
\begin{equation}
\gamma_\lambda\gamma_\mu\gamma_\nu = \eta_{\lambda\mu}\gamma_\nu - \eta_{\lambda\nu}\gamma_\mu + \eta_{\mu\nu}\gamma_\lambda + i\epsilon_{\lambda\mu\nu\rho}\gamma^5\gamma^\rho
\end{equation}
is the most important, can be used to yield
\begin{equation}
\begin{split}
\bar{\psi}T^a\psi\bar{\psi}T^a\psi &= -\frac{(N + 4)\bar{\psi}\psi\bar{\psi}\psi}{8N} 
- \frac{\bar{\psi}\gamma_\mu\psi\bar{\psi}\gamma^\mu\psi}{2}
- \frac{\bar{\psi}\sigma_{\mu\nu}\psi\bar{\psi}\sigma^{\mu\nu}\psi}{16} 
+ \frac{\bar{\psi}\gamma^5\gamma_\mu\psi\bar{\psi}\gamma^5\gamma^\mu\psi}{8}
- \frac{\bar{\psi}\gamma^5\psi\bar{\psi}\gamma^5\psi}{8}\\
\bar{\psi}T^a\gamma_\mu\psi\bar{\psi}T^a\gamma^\mu\psi &=
-\frac{\bar{\psi}\psi\bar{\psi}\psi}{2}
+ \frac{(N - 2)\bar{\psi}\gamma_\mu\psi\bar{\psi}\gamma^\mu\psi}{4N}
+ \frac{\bar{\psi}\gamma^5\gamma_\mu\psi\bar{\psi}\gamma^5\gamma^\mu\psi}{4} 
+ \frac{\bar{\psi}\gamma^5\psi\bar{\psi}\gamma^5\psi}{2}\\
\bar{\psi}T^a\sigma_{\mu\nu}\psi\bar{\psi}T^a\sigma^{\mu\nu}\psi &= 
-\frac{3\bar{\psi}\psi\bar{\psi}\psi}{2} + \frac{N-2}{4N}\bar{\psi}\sigma_{\mu\nu}\psi\bar{\psi}\sigma^{\mu\nu}\psi - \frac{3\bar{\psi}\gamma^5\psi\bar{\psi}\gamma^5\psi}{2} \\
\bar{\psi}T^a\gamma^5\gamma_\mu\psi\bar{\psi}T^a\gamma^5\gamma^\mu\psi &=
\frac{\bar{\psi}\psi\bar{\psi}\psi}{2} + \frac{\bar{\psi}\gamma_\lambda\psi\bar{\psi}\gamma^\lambda\psi}{4}
+ \frac{(N - 2)\bar{\psi}\gamma^5\gamma_\lambda\psi\bar{\psi}\gamma^5\gamma^\lambda\psi}{4N}
- \frac{\bar{\psi}\gamma^5\psi\bar{\psi}\gamma^5\psi}{2}\\
\bar{\psi}T^a\gamma^5\psi\bar{\psi}T^a\gamma^5\psi &=
-\frac{\bar{\psi}\psi\bar{\psi}\psi}{8} 
+ \frac{\bar{\psi}\gamma_\mu\psi\bar{\psi}\gamma^\mu\psi}{8}
- \frac{\bar{\psi}\sigma_{\mu\nu}\psi\bar{\psi}\sigma^{\mu\nu}\psi}{16} 
- \frac{\bar{\psi}\gamma^5\gamma_\mu\psi\bar{\psi}\gamma^5\gamma^\mu\psi}{8}
- \frac{(N + 4)\bar{\psi}\gamma^5\psi\bar{\psi}\gamma^5\psi}{8N}
\end{split} \label{fierz4/identities}
\end{equation}

More identities could be obtained by using the expansion 
\begin{equation}
\begin{split}
\psi\bar{\psi} &= - \frac{\bar{\psi}\psi}{4N} 
- \frac{\gamma_\mu\bar{\psi}\gamma^\mu\psi}{4N}
- \frac{\sigma_{\mu\nu}\bar{\psi}\sigma^{\mu\nu}\psi}{8N} 
+ \frac{\gamma^5\gamma_\mu\bar{\psi}\gamma^5\gamma^\mu\psi}{4N} 
- \frac{\gamma^5\bar{\psi}\gamma^5\psi}{4N} - \\
&- \frac{T^a\bar{\psi}T^a\psi}{2} 
- \frac{T^a\gamma_\mu\bar{\psi}T^a\gamma^\mu\psi}{2}
- \frac{T^a\sigma_{\mu\nu}\bar{\psi}T^a\sigma^{\mu\nu}\psi}{4} 
+ \frac{T^a\gamma^5\gamma_\mu\bar{\psi}T^a\gamma^5\gamma^\mu\psi}{2} 
- \frac{T^a\gamma^5\bar{\psi}T^a\gamma^5\psi}{2}
\end{split}
\end{equation}
which allows to express $\bar{\psi}\Gamma_{(\alpha)}\psi\bar{\psi}\Gamma_{(\alpha)}\psi$ through
$\bar{\psi}\Gamma_{(\beta)}\psi\bar{\psi}\Gamma_{(\beta)}\psi$ and $\bar{\psi}T^a\Gamma_{(\beta)}\psi\bar{\psi}T^a\Gamma_{(\beta)}\psi$.
This can be written in matrix form using the notation \eqref{fierz4/defterms}:
\begin{equation}
	\begin{split}
		\begin{bmatrix}
			I_{10}\\I_{20}\\I_{30}\\I_{40}\\I_{50}
		\end{bmatrix} &= 
		\begin{bmatrix}
			-\frac{1}{4N}&-\frac{1}{4N}&-\frac{1}{8N}&\frac{1}{4N}&-\frac{1}{4N}\\
			-\frac{1}{N}&\frac{1}{2N}&0&\frac{1}{2N}&\frac{1}{N}\\
			-\frac{3}{N}&0&\frac{1}{2N}&0&-\frac{3}{N}\\
			\frac{1}{N}&\frac{1}{2N}&0&\frac{1}{2N}&-\frac{1}{N}\\
			-\frac{1}{4N}&\frac{1}{4N}&-\frac{1}{8N}&-\frac{1}{4N}&-\frac{1}{4N}
		\end{bmatrix}
		\begin{bmatrix}
			I_{10}\\I_{20}\\I_{30}\\I_{40}\\I_{50}
		\end{bmatrix} + 
		\begin{bmatrix}
			-\frac{1}{2}&-\frac{1}{2}&-\frac{1}{4}&\frac{1}{2}&-\frac{1}{2}\\
			-2&1&0&1&2\\
			-6&0&1&0&-6\\
			2&1&0&1&-2\\
			-\frac{1}{2}&\frac{1}{2}&-\frac{1}{4}&-\frac{1}{2}&-\frac{1}{2}
		\end{bmatrix}
		\begin{bmatrix}
			I_{11}\\I_{21}\\I_{31}\\I_{41}\\I_{51}
		\end{bmatrix} =\\
		&=		
		\begin{bmatrix}
			-\frac{1}{4N}&-\frac{1}{4N}&-\frac{1}{8N}&\frac{1}{4N}&-\frac{1}{4N}\\
			-\frac{1}{N}&\frac{1}{2N}&0&\frac{1}{2N}&\frac{1}{N}\\
			-\frac{3}{N}&0&\frac{1}{2N}&0&-\frac{3}{N}\\
			\frac{1}{N}&\frac{1}{2N}&0&\frac{1}{2N}&-\frac{1}{N}\\
			-\frac{1}{4N}&\frac{1}{4N}&-\frac{1}{8N}&-\frac{1}{4N}&-\frac{1}{4N}
		\end{bmatrix} 
		\begin{bmatrix}
			I_{10}\\I_{20}\\I_{30}\\I_{40}\\I_{50}
		\end{bmatrix}+\\
		&+
		\begin{bmatrix}
			-\frac{1}{2}&-\frac{1}{2}&-\frac{1}{4}&\frac{1}{2}&-\frac{1}{2}\\
			-2&1&0&1&2\\
			-6&0&1&0&-6\\
			2&1&0&1&-2\\
			-\frac{1}{2}&\frac{1}{2}&-\frac{1}{4}&-\frac{1}{2}&-\frac{1}{2}
		\end{bmatrix}
		\begin{bmatrix}
			-\frac{N + 4}{8N}&-\frac{1}{8}&-\frac{1}{16}&\frac{1}{8}&-\frac{1}{8}\\
			-\frac{1}{2}&\frac{N - 2}{4N}&0&\frac{1}{4}&\frac{1}{2}\\
			-\frac{3}{2}&0&\frac{N - 2}{4N}&0&-\frac{3}{2}\\
			\frac{1}{2}&\frac{1}{4}&0&\frac{N - 2}{4N}&-\frac{1}{2}\\
			-\frac{1}{8}&\frac{1}{8}&-\frac{1}{16}&-\frac{1}{8}&-\frac{N + 4}{8N}
		\end{bmatrix}
		\begin{bmatrix}
			I_{10}\\I_{20}\\I_{30}\\I_{40}\\I_{50}
		\end{bmatrix}
	\end{split}
\end{equation}
It is easy to check that this results in a trivial identity and does not yield anything new. Thus
\eqref{fierz4/identities} is the complete set of Fierz identities for fermions with additional $U(N)$ symmetry.

The most general massless parity-preserving NJL-like lagrangian with global $U(N)$ symmetry is therefore
\begin{equation}
\Lgr(\bar{\psi},\psi) = \bar{\psi}i\widehat\partial\psi + G_1\bar{\psi}\psi\bar{\psi}\psi + G_2\bar{\psi}\gamma_\mu\psi\bar{\psi}\gamma^\mu\psi
+ G_3\bar{\psi}\sigma_{\mu\nu}\psi\bar{\psi}\sigma^{\mu\nu}\psi + G_4\bar{\psi}\gamma^5\gamma_\mu\psi\bar{\psi}\gamma^5\gamma^\mu\psi 
+ G_5\bar{\psi}\gamma^5\psi\bar{\psi}\gamma^5\psi
\end{equation}
Its maximal bosonization is
\begin{equation}
\begin{split}
\Lgr(\bar{\psi},\psi, \widehat{\Sigma}) &= 
\bar{\psi}\left(i\widehat\partial + \widehat{\Sigma}\right)\psi 
- \frac{\Sigma_{10}^2}{4\lambda_{10}} - \frac{\Sigma_{11}^a\Sigma_{01}^a}{4\lambda_{11}}
- \frac{\Sigma_{20\,\mu}\Sigma^\mu_{20}}{4\lambda_{20}} - \frac{\Sigma^a_{21\,\mu}\Sigma^{a\mu}_{21}}{4\lambda_{21}} -\\
&- \frac{\Sigma_{30\,\mu\nu}\Sigma^{\mu\nu}_{30}}{4\lambda_{30}} - \frac{\Sigma^a_{31\,\mu\nu}\Sigma^{a\mu\nu}_{31}}{4\lambda_{31}}
- \frac{\Sigma_{40\,\mu}\Sigma^\mu_{40}}{4\lambda_{40}} - \frac{\Sigma^a_{41\,\mu}\Sigma^{a\mu}_{41}}{4\lambda_{41}}
- \frac{\Sigma_{50}^2}{4\lambda_{50}} - \frac{\Sigma^a_{51}\Sigma^a_{51}}{4\lambda_{51}}\\
\widehat{\Sigma} &=
\Sigma_{10} + T^a\Sigma^a_{11} + \gamma_\mu\Sigma^\mu_{20} + T^a\gamma_\mu\Sigma^{a\mu}_{21} +\\
&+ \sigma_{\mu\nu}\Sigma^{\mu\nu}_{30} + T^a\sigma_{\mu\nu}\Sigma^{a\mu\nu}_{31}
+ \gamma^5\gamma_\mu\Sigma^\mu_{40} + T^a\gamma^5\gamma_\mu\Sigma^{a\mu}_{41} 
+ \gamma^5\Sigma_{50} + T^a\gamma^5\Sigma^a_{51} 
\end{split}
\end{equation}
The Fierz identities \eqref{fierz4/identities} will result in
\begin{equation}
	\begin{bmatrix}
		G_1\\G_2\\G_3\\G_4\\G_5
	\end{bmatrix} = 
	\begin{bmatrix}
		\lambda_{10}\\\lambda_{20}\\\lambda_{30}\\\lambda_{40}\\\lambda_{50}
	\end{bmatrix} + 
	\begin{bmatrix}
		-\frac{N + 4}{8N}&-\frac{1}{2}&-\frac{3}{2}&\frac{1}{2}&-\frac{1}{8}\\
		-\frac{1}{2}&\frac{N - 2}{4N}&0&\frac{1}{4}&\frac{1}{8}\\
		-\frac{1}{16}&0&\frac{N - 2}{4N}&0&-\frac{1}{16}\\
		\frac{1}{8}&\frac{1}{4}&0&\frac{N - 2}{4N}&-\frac{1}{8}\\
		-\frac{1}{8}&\frac{1}{2}&-\frac{3}{2}&-\frac{1}{2}&-\frac{N + 4}{8N}
	\end{bmatrix}
	\begin{bmatrix}
		\lambda_{11}\\\lambda_{21}\\\lambda_{31}\\\lambda_{41}\\\lambda_{51}		
	\end{bmatrix}
	\label{fierz4/conditions}
\end{equation}

\subsection{The one-loop terms} \label{oneloopterms}
To construct effective action for bosonized NJL model, one needs source terms for the bosonic fields.
They can be introduced by slightly modifying the fermionic generating functional.
Let
\begin{equation}
\begin{split}
Z\left[\left\{J\right\}\right] &= 
\funint{\bar{\psi},\psi}\exp\left\{i\int\left[\bar{\psi}i\widehat{\partial}\psi + 
G_i\bar{\psi}\Gamma_{i\,\bar{\alpha}}\psi\bar{\psi}\Gamma^{\bar{\alpha}}_i\psi
+ \bar{J}_1\psi +\bar{\psi}J_1 
+ 2\lambda_{2\bar{i}}J_{2\bar{i}\bar{\alpha}}\bar{\psi}\tau_{\bar{i}}^{\bar{\alpha}}\psi
+ \lambda_{2\bar{i}}J_{2\bar{i}\bar{\alpha}}J^{\bar{\alpha}}_{2\bar{i}}
\right]d^4x\right\}.
\end{split} \label{oneloopterms/genfunc}
\end{equation}
Inserting
\begin{equation}
\text{const} = \funint{\Sigma_{\bar{i}}}\exp\left\{-i\int\left[\frac{\lambda_{2\bar{i}}^{-\frac{1}{2}}\Sigma_{\bar{i}\bar{\alpha}}}{2} 
- \lambda_{2\bar{i}}^\frac{1}{2}\bar{\psi}\tau_{\bar{i}\bar{\alpha}}\psi
 - \lambda_{2\bar{i}}^\frac{1}{2}J_{2\bar{i}\bar{\alpha}}\right]
 \left[\frac{\lambda_{2\bar{i}}^{-\frac{1}{2}}\Sigma_{\bar{i}}^{\bar{\alpha}}}{2} 
- \lambda_{2\bar{i}}^\frac{1}{2}\bar{\psi}\tau_{\bar{i}}^{\bar{\alpha}}\psi
 - \lambda_{2\bar{i}}^\frac{1}{2}J_{2\bar{i}}^{\bar{\alpha}}\right]d^4x\right\}
 \label{oneloopterms/transform}
\end{equation}
will result in
\begin{equation}
\begin{split}
\tilde{Z}\left[\left\{J\right\}\right] &= 
\funint{\bar{\psi},\psi,\Sigma}\exp\left\{i\int\left[\bar{\psi}\left(i\widehat{\partial} + \widehat{\Sigma}\right)\psi
+ \bar{J}_1\psi +\bar{\psi}J_1
- \frac{\Sigma_{\bar{i}\bar{\alpha}}\Sigma_{\bar{i}}^{\bar{\alpha}}}{4\lambda_{2\bar{i}}} + J_{2\bar{i}\bar{\alpha}}\Sigma_{\bar{i}}^{\bar{\alpha}}
\right]d^4x\right\}\\
\widehat{\Sigma} &\equiv \tau_{\bar{i}\bar{\alpha}}\Sigma_{\bar{i}}^{\bar{\alpha}},
\end{split} \label{oneloopterms/bosonized}
\end{equation}
provided that Fierz conditions \eqref{fierz4/conditions} hold for $\lambda_{2\bar{i}}$.

Due to the presence of terms with different chirality, 
\eqref{oneloopterms/bosonized} has no continuous, even global, chiral symmetry. 
However, the \textit{discrete} chiral symmetry ($\chi$-symmetry) 
\begin{equation}
\begin{split}
&\psi \rightarrow \gamma^5\psi\\
&\bar{\psi} \rightarrow -\bar{\psi}\gamma^5\\
&\Sigma_{10},\Sigma_{11}^a,\Sigma_{30}^{\mu\nu},\Sigma_{31}^{a\,\mu\nu},\Sigma_{50},\Sigma_{51}^a \rightarrow
-\Sigma_{10},-\Sigma_{11}^a,-\Sigma_{30}^{\mu\nu},-\Sigma_{31}^{a\,\mu\nu},-\Sigma_{50},-\Sigma_{51}^a\\
&\Sigma_{20}^\mu,\Sigma_{21}^{a\,\mu},\Sigma_{40}^\mu,\Sigma_{41}^{a\,\mu} \rightarrow
\Sigma_{20}^\mu,\Sigma_{21}^{a\,\mu},\Sigma_{40}^\mu,\Sigma_{41}^{a\,\mu}
\end{split} \label{oneloopterms/chisymmetry}
\end{equation} 
still holds.

Now, to apply the definition \eqref{formalism/selfconsistent}, the fields must be split into background and deviation parts (the words `classical' and `quantum' look somewhat misleading in this context):
\begin{equation}
\begin{split}
\psi &\rightarrow \Psi + \psi\\
\Sigma_{\bar{i}}^{\bar{\alpha}} &\rightarrow \Sigma_{\bar{i}}^{\bar{\alpha}} + \varsigma_{\bar{i}}^{\bar{\alpha}},
\end{split}
\end{equation}
and the effective action can be written as
\begin{equation}
\begin{split}
\Gamma\left[\bar{\Psi},\Psi,\Sigma\right] &= 
\int\left[\bar{\Psi}\left(i\widehat{\partial} + \widehat{\Sigma}\right)\Psi
- \frac{\Sigma_{\bar{i}\bar{\alpha}}\Sigma_{\bar{i}}^{\bar{\alpha}}}{4\lambda_{2\bar{i}}}
\right]d^4x + W_1\left[\bar{\Psi},\Psi,\Sigma\right] \\
W_1\left[\bar{\Psi},\Psi,\Sigma\right] &= - iP_\text{1PI}\ln\funint{\bar\psi,\psi,\varsigma}\exp\left\{i\int\left[
\bar{\psi}\left(i\widehat{\partial} + \widehat{\Sigma}\right)\psi +
\bar{\Psi}\widehat{\varsigma}\psi + \bar{\psi}\widehat{\varsigma}\Psi + \bar{\psi}\widehat{\varsigma}\psi
- \frac{\varsigma_{\bar{i}\bar{\alpha}}\varsigma_{\bar{i}}^{\bar{\alpha}}}{4\lambda_{2\bar{i}}} - \right.\right.\\
&\left.\left.-\fundelta{W_1}{\Psi}\psi - \bar{\psi}\fundelta{W_1}{\bar{\Psi}}
- \fundelta{W_1}{\Sigma_{\bar{i}}^{\bar{\alpha}}}\varsigma_{\bar{i}}^{\bar{\alpha}}
\right]d^4x
\right\},
\end{split} \label{oneloopterms/defW}
\end{equation}
where $P_\text{1PI}$ is the operator that extracts one-particle irreducible diagrams from its operand.

Let
\begin{equation}
\mathcal{G}(x,y) = \frac{1}{i\widehat{\partial} + \widehat{\Sigma}} 
\end{equation}
and the vertex operator be defined as
\begin{equation}
\begin{split}
V &= i\frac{\delta}{\delta{J_{\bar{\Psi}}}}\tau_{i}^{\bar{\alpha}}
\frac{\delta}{\delta{J_\Psi}}\frac{\delta}{\delta{J_\Sigma^{\bar{i}\bar{\alpha}}}}
\end{split} \label{oneloopterms/vertex_op}
\end{equation}
where large Latin letters stand for combined spinor and $U(N)$ indices. Then perturbative expansion can be written, up to a constant, as
\begin{equation}
\begin{split}
&W_1\left[\bar{\Psi},\Psi,\Sigma\right] =  
-iP_\text{1PI}\ln\left\{\exp[iV]Z_\text{1-loop}\left[\bar{\Psi},\Psi,\Sigma, J_\Psi, J_{\bar{\Psi}}, J_\Sigma\right]\right\}
\left\vert{\begin{array}{l}
J_\Psi = -\fundelta{W_1}{\Psi}\\
J_{\bar{\Psi}} = -\fundelta{W_1}{\bar{\Psi}}\\
J_\Sigma^{\bar{i}\bar{\alpha}} = -\fundelta{W_1}{\Sigma_{{\bar{i}\bar{\alpha}}}}
\end{array}}\right.\\
&Z_\text{1-loop}\left[\bar{\Psi},\Psi,\Sigma, J_\Psi, J_{\bar{\Psi}}, J_\Sigma\right] = 
\funint{\bar\psi,\psi,\varsigma}\exp\left\{i\int\left[
\bar{\psi}\left(i\widehat{\partial} + \widehat{\Sigma}\right)\psi +
\bar{\Psi}\widehat{\varsigma}\psi + \bar{\psi}\widehat{\varsigma}\Psi 
- \frac{\varsigma_{\bar{i}\bar{\alpha}}\varsigma_{\bar{i}}^{\bar{\alpha}}}{4\lambda_{2\bar{i}}} + \right.\right.\\
&\left.\left.+J_\Psi\psi + \bar{\psi}J_{\bar{\Psi}} + J_{\bar{i}\bar{\alpha}}\varsigma_{\bar{i}}^{\bar{\alpha}}
\right]d^4x
\right\}
\end{split}
\end{equation}

The terms $\bar{\Psi}\widehat{\varsigma}\psi + \bar{\psi}\widehat{\varsigma}\Psi$
can be eliminated by shifting the integration variables
\begin{equation}
\begin{split}
\psi(x) \rightarrow \psi(x) - \int{\mathcal{G}(x,y)\widehat{\varsigma}(y)\Psi(y)}d^4y\\
\bar{\psi}(x) \rightarrow  
\bar{\psi}(x) - \int\bar{\Psi}(y)\widehat{\varsigma}(y)\mathcal{G}(y,x)d^4y.
\end{split}
\end{equation}
which results in
\begin{equation}
\begin{split}
&Z_\text{1-loop}\left[\bar{\Psi},\Psi,\Sigma, J_\Psi, J_{\bar{\Psi}}, J_\Sigma\right] = 
\funint{\bar{\psi},\psi, \varsigma}\exp\left\{i\int\left[
\bar{\psi}\left(i\widehat{\partial} + \widehat{\Sigma}\right)\psi
  + J_\Psi\psi + \bar{\psi}J_{\bar{\Psi}}\right]d^4x - \right.\\
&\left. - \iint\left\{
\varsigma_{\bar{i}}^{\bar{\alpha}}(x)\left[
\frac{\delta_{\bar{i}\bar{j}}\delta_{\bar{\alpha}}^{\bar{\beta}}}{4\lambda_{2\bar{i}}}\delta(x-y)
+ \bar{\Psi}(x)\tau_{\bar{i}\bar{\alpha}}\mathcal{G}(x,y)\tau_{\bar{j}}^{\bar{\beta}}\Psi(y)
\right]\varsigma_{\bar{j}\bar{\beta}}(y) + \right.\right.\\
&\left.\left. - J_{\Sigma\bar{i}\bar{\alpha}}(x)\delta(x-y)\varsigma_{\bar{i}}^{\bar{\alpha}}(y) 
+ J_{\Psi}(x)\mathcal{G}(x,y)\widehat{\varsigma}(y)\Psi(y)
+ \bar{\Psi}(x)\widehat{\varsigma}(y)\mathcal{G}(x,y)J_{\bar{\Psi}}(y)
\right\}d^4xd^4y\right\} 
\end{split}
\end{equation}
Let 
\begin{equation}
\begin{split}
&L_{\bar{i}\bar{\alpha}}^{\bar{j}\bar{\beta}}(x,y) = 
\delta_{\bar{i}}^{\bar{j}}\delta_{\bar{\alpha}}^{\bar{\beta}}{4\lambda_{2\bar{i}}}\delta(x-y)\\
&T_{\bar{i}\bar{\alpha}}^{\bar{j}\bar{\beta}}(x,y) = \bar{\Psi}(x)\tau_{\bar{i}\bar{\alpha}}\mathcal{G}(x,y)\tau_{\bar{j}}^{\bar{\beta}}\Psi(y)
\end{split} \label{oneloopterms/defLT}
\end{equation}
The deviations can now be integrated out easily, which yields
\begin{equation}
\begin{split}
&Z_\text{1-loop}\left[\bar{\Psi},\Psi,\Sigma, \bar{J}, J\right] = 
\text{Det}\left(i\widehat{\partial} + \widehat{\Sigma}\right)
\left[\text{Det}\left(L + T\right)\right]^{-\frac{1}{2}}
\exp\left[
-iJ_{\Psi}\frac{1}{i\widehat{\partial} + \widehat{\Sigma}}J_{\bar{\Psi}}
+ i\tilde{J}_\Sigma\frac{1}{L + T}\tilde{J}_\Sigma
\right]\\
&\tilde{J}_{\Sigma\bar{i}\bar{\alpha}} = J_{\Sigma\bar{i}\bar{\alpha}} - J_\Psi\mathcal{G}\tau_{\bar{i}\bar{\alpha}}\Psi 
- \bar{\Psi}\tau_{\bar{i}\bar{\alpha}}\mathcal{G}J_{\bar{\Psi}}
\end{split}
\end{equation}
Thus
\begin{equation}
\begin{split}
W_1\left[\bar{\Psi},\Psi,\Sigma\right] &= 
-i\Tr\ln\left(i\widehat{\partial} + \widehat{\Sigma}\right) + \frac{i}{2}\Tr\ln\left(L + T\right) -\\
&-iP_\text{1PI}\ln\left\{\exp(iV)\exp\left[
-iJ_{\Psi}\frac{1}{i\widehat{\partial} + \widehat{\Sigma}}J_{\bar{\Psi}}
+ i\tilde{J}_\Sigma\frac{1}{L + T}\tilde{J}_\Sigma
\right]\right\}\left|\begin{array}{l}
J_\Psi = -\fundelta{W_1}{\Psi}\\
J_{\bar{\Psi}} = -\fundelta{W_1}{\bar{\Psi}}\\
J_\Sigma^{\bar{i}\bar{\alpha}} = -\fundelta{W_1}{\Sigma_{{\bar{i}\bar{\alpha}}}}
\end{array}\right.
\end{split} \label{oneloopterms/loopexpr}
\end{equation}

\subsection{Diagrammatics} \label{oneloopdiags}
The expression \eqref{oneloopterms/loopexpr} implies that 
\begin{equation*}
\exp\left[
-iJ_{\Psi}\frac{1}{i\widehat{\partial} + \widehat{\Sigma}}J_{\bar{\Psi}}
+ i\tilde{J}_\Sigma\frac{1}{L + T}\tilde{J}_\Sigma
\right]
\end{equation*}
can be dropped from the final result after computing perturbative corrections:
as the effective action is the sum of 1-particle irreducible diagrams, so is $W_1$, 
and it is easy to see that substituting the expressions of $J$ as functional derivatives of $W_1$ into the exponent
will yield only reducible diagrams.

Thus, one can write
\begin{equation}
\Gamma_\text{1-loop}\left[\bar{\Psi}, \Psi, \Sigma\right] = 
\int\left[\bar{\Psi}\left(i\widehat{\partial} + \widehat{\Sigma}\right)\Psi
- \frac{\Sigma_{\bar{i}\bar{\alpha}}\Sigma_{\bar{i}}^{\bar{\alpha}}}{4\lambda_{2\bar{i}}}
\right]d^4x
-i\Tr\ln\left(i\widehat{\partial} + \widehat{\Sigma}\right) + \frac{i}{2}\Tr\ln\left(L + T\right),
\label{oneloopdiags/logarithms}
\end{equation}
and one-loop computation boils down to calculating 
\begin{equation}
\Tr\ln\left(A + B\right) = \Tr\ln\left[A\left(1 + A^{-1}B\right)\right] = \Tr\ln{A} 
+ \sum_{i=1}^\infty\frac{(-1)^{n - 1}}{n}\Tr\left[A^{-1}B\right]^n.
\end{equation}
Furthermore, cyclicity of trace implies that
\begin{equation}
\begin{split}
\Tr\ln\left(1 + L^{-1}T\right) &= \sum_{i=0}^\infty\frac{(-1)^{n - 1}}{n}\Tr\frac{1}{i\widehat{\partial} + \widehat{\Sigma}}\Phi =
\Tr\ln\left[\frac{1}{i\widehat{\partial} + \widehat{\Sigma}}\left(i\widehat{\partial} + \widehat{\Sigma} + \Phi\right)\right] =\\
&= - \Tr\ln\left(i\widehat{\partial} + \widehat{\Sigma}\right) + \Tr\ln\left[i\widehat{\partial} + \widehat{\Sigma} + \Phi\right]\\
\Phi(x) &= 4\lambda_{2\bar{i}}\tau_{\bar{i}\bar{\alpha}}\Psi(x)\bar{\Psi}(x)\tau_{\bar{i}}^{\bar{\alpha}}.
\end{split}
\end{equation}
Thus, the one-loop effective action can be rewritten as
\begin{equation}
\Gamma_\text{1-loop}\left[\bar{\Psi}, \Psi, \Sigma\right] = 
\int\left[\bar{\Psi}\left(i\widehat{\partial} + \widehat{\Sigma}\right)\Psi
- \frac{\Sigma_{\bar{i}\bar{\alpha}}\Sigma_{\bar{i}}^{\bar{\alpha}}}{4\lambda_{2\bar{i}}}
\right]d^4x
-\frac{3i}{2}\Tr\ln\left(i\widehat{\partial} + \widehat{\Sigma}\right) 
+ \frac{i}{2}\Tr\ln\left(i\widehat{\partial} + \widehat{\Sigma} + \Phi\right).
\end{equation}
This has the advantage of both traces being of the form $\Tr\ln\left[i\widehat{\partial} + \Omega\right]$.
The Fourier transform with the following conventions
\begin{equation}
\begin{split}
f(x) &= \frac{1}{(2\pi)^2}\int{f(p)e^{-ipx}d^4p}\\
S(x - y) &= \frac{1}{(2\pi)^4}\int\frac{\widehat{p}e^{-ip(x-y)}d^4p}{p^2}
\end{split}
\end{equation}
yields, in the diagrammatic language
\begin{equation}
\begin{split} 
	\feynmandiagram[inline = (a.base), horizontal=a to b] {
			a --[fermion] b,
	}; &= \frac{\widehat{p}}{p^2}\\
	\feynmandiagram[inline = (a.base), horizontal=a to b] {
			a --[plain, very thick] b,
	}; &= \widehat{\Omega}(q)\\
	\begin{tikzpicture}[baseline=(b)]
		\begin{feynman}[small]
			\vertex (a);
			\vertex [right=0.5cm of a] (b);
			\vertex [above right=0.7cm of b] (c);
			\vertex [below right=0.7cm of b] (d);
			\diagram* {
				(a) -- [plain, very thick] (b),
				(d) -- [fermion] (b) -- [fermion] (c)
			};
		\end{feynman}
	\end{tikzpicture} 
\end{split} \label{oneloopdiags/feynmanrules}
\end{equation}
The effective propagator $\mathcal{G}$ can be expanded as
\begin{equation}
\mathcal{G} = S\sum_{n = 0}^\infty\left[-{\widehat{\Sigma}}S\right]^n
\end{equation}

It is also clear that divergent diagrams in \eqref{oneloopdiags/logarithms} will be of fourth order in $S$ at most.
Thus, up to a constant
\begin{equation}
\begin{split}
&\Tr\ln\left[i\widehat{\partial} + \widehat{\Sigma}\right] = 
	\begin{tikzpicture}[baseline=(b)]
		\begin{feynman}
			\vertex (a);
			\vertex [right=0.5cm of a] (b);
			\vertex [above right=0.7cm of b] (c);
			\vertex [below right=0.7cm of b] (d);
			\diagram* {
				(a) -- [plain, very thick] (b),
				(b) -- [plain, quarter left] (c),
				(b) -- [plain, quarter right] (d),
				(d) -- [fermion, half right, looseness=1.8] (c),
			};
		\end{feynman} 
	\end{tikzpicture} 
	- \frac{1}{2}
	\begin{tikzpicture}[baseline=(b)]
		\begin{feynman}
			\vertex(a);
			\vertex [right=0.5cm of a] (b);
			\vertex [right=1cm of b] (c);
			\vertex [right=0.5cm of c] (d);
			\diagram* {
				(a) --[plain, very thick] (b),
				(b) --[fermion, half right] (c),
				(c) --[fermion, half right] (b),
				(c) --[plain, very thick] (d)
			};
		\end{feynman}
	\end{tikzpicture}
	+ \frac{1}{3}
	\begin{tikzpicture}[baseline=(b)] 
		\begin{feynman}
			\vertex(a);
			\vertex [right=0.5cm of a] (b);
			\vertex [above right=1cm of b] (c);
			\vertex [below right=1cmof b] (d);
			\vertex [above right=0.5cm of c] (e);
			\vertex [below right=0.5cm of d] (f);
			\diagram* {
				(a) --[plain, very thick] (b),
				(b) --[fermion] (d) --[fermion] (c) --[fermion] (b),
				(c) --[plain, very thick] (e),
				(d) --[plain, very thick] (f)
			};
		\end{feynman}
	\end{tikzpicture}
	- \frac{1}{4}
	\begin{tikzpicture}[baseline=(i0)]
		\begin{feynman}
			\vertex (i0);
			\vertex [above=1cm of i0] (i1);
			\vertex [below=2cm of i1] (i2);
			\vertex [right=2cm of i2] (i3);
			\vertex [above=2cm of i3] (i4);
			\vertex [below right=0.7cm of i1] (a);
			\vertex [above right=0.7cm of i2] (b);
			\vertex [above left=0.7cm of i3] (c);
			\vertex [below left=0.7cm of i4] (d);
			\diagram* {
				(i1)--[plain, very thick] (a),
				(i2)--[plain, very thick] (b),
				(i3)--[plain, very thick] (c),
				(i4)--[plain, very thick] (d),
				(a)--[fermion] (b) --[fermion] (c) --[fermion] (d) --[fermion] (a), 
			};
		\end{feynman}
	\end{tikzpicture} + O(1)\\
\end{split}
\end{equation}

\subsection{Renormalization} \label{onelooprenorm}
The BPHZ renormalization scheme relies on the fact that if the propagator has the form
\begin{equation}
D = \frac{1}{K_n(\partial_\mu)},
\end{equation}
where $K_n$ is some polynomial in derivative operators, then, in momentum space
\begin{equation}
\frac{1}{K(p + q)} = \frac{1}{K(p) + \Delta{K}(p,q)} = \frac{1}{K(p)} 
- \frac{1}{K(p)}\Delta(K)(p,q)\frac{1}{K(p)} + \frac{1}{K(p)}\Delta(K)(p,q)\frac{1}{K(p)}\Delta(K)(p,q)\frac{1}{K(p)} - \ldots.
\end{equation}
The crucial observation is that if all propagators in a given diagram are expanded in this manner, then only a finite number of initial terms in the resulting series will diverge. Thus it is possible to render the diagram finite by subtracting these terms.

In the effective action formalism, the effective propagator for the deviation field has the form
\begin{equation}
\mathcal{G} = \frac{1}{T(\partial_\mu) + \mu(\varphi)}.
\end{equation}
It now becomes clear that expanding in $\mu(\varphi)$ has the same properties as the BPHZ momentum expansion.
Thus it becomes possible to define field-dependent subtraction operator $M[\varphi]$
and BPHZ operator $\Lambda[\varphi] = -M[\varphi] + \lambda_\text{eff}[\varphi]$.

To be more precise, the equations \eqref{oneloopdiags/logarithms} suggest the following algorithm:
\begin{enumerate}
	\item Expand $\ln\left[i\widehat{\partial} + \Omega\right], \frac{1}{i\widehat{\partial} + \Omega}$ in $\Omega$.
	\item Go to momentum space.
	\item Expand all $\frac{1}{\widehat{p} - \widehat{q}}$ in $\widehat{q}$.
	\item Collect all divergent terms in the resulting expansion.
\end{enumerate}
The only catch here is masslessness of the original fermions which may result in infrared divergences. 
This problem can be tackled by Lowenstein prescription \cite{Lowenstein1975}, \cite{Lowenstein1976}, \cite{Blaschke2013}: 
adding the mass term $(1 - s)M$ to the fermionic propagator, 
where $s$ is the `softness' parameter. Then 
\begin{equation}
\frac{1}{i\widehat{\partial} - (1-s)M + \Omega} = \frac{1}{i\widehat{\partial} - M} 
- \frac{1}{i\widehat{\partial} - M}\left[sM + \Omega\right]\frac{1}{i\widehat{\partial} - M} + \ldots
\end{equation}
After subtracting the divergent terms, the resulting expression must be evaluated at $s = 1$ to get rid of the fictitious mass.

The subtraction operator can be expressed more easily in diagrammatic form. Let now
\begin{equation}
\begin{split}
	\feynmandiagram[inline = (a.base), horizontal=a to b] {
			a --[ghost] b,
	}; &= \frac{1}{\widehat{p} - M}\\
	\feynmandiagram[inline = (a.base), horizontal=a to b] {
			a --[ghost, insertion=0.5] b,
	}; &= \frac{1}{\widehat{p} - M}\widehat{q}\frac{1}{\widehat{p} - M}\\
	\feynmandiagram[inline = (a.base), horizontal=a to b] {
			a --[ghost, insertion=0.33, insertion=0.66] b,
	}; &= \frac{1}{\widehat{p} - M}\widehat{q}\frac{1}{\widehat{p} - M}\widehat{q}\frac{1}{\widehat{p} - M}
\end{split}
\end{equation}
and so on. Here $p$ denotes the running loop momentum and $q$ - external momentum carried by the line.
Then
\begin{equation}
\begin{split}
&\mathcal{M}\left[\bar{\Psi},\Psi,\Sigma\right]\Tr\ln\left[i\widehat{\partial} + \Omega\right] = 
	\begin{tikzpicture}[baseline=(b)]
		\begin{feynman}
			\vertex (a);
			\vertex [right=0.5cm of a] (b);
			\vertex [above right=0.7cm of b] (c);
			\vertex [below right=0.7cm of b] (d);
			\diagram* {
				(a) -- [plain, very thick] (b),
				(b) -- [ghost, quarter left] (c),
				(b) -- [ghost, quarter right] (d),
				(d) -- [ghost, half right, looseness=1.8] (c),
			};
		\end{feynman} 
	\end{tikzpicture} 
	- \frac{1}{2}
	\begin{tikzpicture}[baseline=(b)]
		\begin{feynman}
			\vertex(a);
			\vertex [right=0.5cm of a] (b);
			\vertex [right=1cm of b] (c);
			\vertex [right=0.5cm of c] (d);
			\diagram* {
				(a) --[plain, very thick] (b),
				(b) --[ghost, half right] (c),
				(c) --[ghost, half right] (b),
				(c) --[plain, very thick] (d)
			};
		\end{feynman}
	\end{tikzpicture}
	- \frac{1}{2}
	\begin{tikzpicture}[baseline=(b)]
		\begin{feynman}
			\vertex(a);
			\vertex [right=0.5cm of a] (b);
			\vertex [right=1cm of b] (c);
			\vertex [right=0.5cm of c] (d);
			\diagram* {
				(a) --[plain, very thick] (b),
				(b) --[ghost, half right] (c),
				(c) --[ghost, half right, insertion=0.5] (b),
				(c) --[plain, very thick] (d)
			};
		\end{feynman}
	\end{tikzpicture}
	- \frac{1}{2}
	\begin{tikzpicture}[baseline=(b)]
		\begin{feynman}
			\vertex(a);
			\vertex [right=0.5cm of a] (b);
			\vertex [right=1cm of b] (c);
			\vertex [right=0.5cm of c] (d);
			\diagram* {
				(a) --[plain, very thick] (b),
				(b) --[ghost, half right] (c),
				(c) --[ghost, half right, insertion=0.33, insertion=0.66] (b),
				(c) --[plain, very thick] (d)
			};
		\end{feynman}
	\end{tikzpicture} +\\
	&+ \frac{1}{3}
	\begin{tikzpicture}[baseline=(b)] 
		\begin{feynman}
			\vertex(a);
			\vertex [right=0.5cm of a] (b);
			\vertex [above right=1cm of b] (c);
			\vertex [below right=1cmof b] (d);
			\vertex [above right=0.5cm of c] (e);
			\vertex [below right=0.5cm of d] (f);
			\diagram* {
				(a) --[plain, very thick] (b),
				(b) --[ghost] (d) --[ghost] (c) --[ghost] (b),
				(c) --[plain, very thick] (e),
				(d) --[plain, very thick] (f)
			};
		\end{feynman}
	\end{tikzpicture}	
	+ \frac{1}{3}
	\begin{tikzpicture}[baseline=(b)] 
		\begin{feynman}
			\vertex(a);
			\vertex [right=0.5cm of a] (b);
			\vertex [above right=1cm of b] (c);
			\vertex [below right=1cmof b] (d);
			\vertex [above right=0.5cm of c] (e);
			\vertex [below right=0.5cm of d] (f);
			\diagram* {
				(a) --[plain, very thick] (b),
				(b) --[ghost] (d) --[ghost, insertion=0.5] (c) --[ghost] (b),
				(c) --[plain, very thick] (e),
				(d) --[plain, very thick] (f)
			};
		\end{feynman}
	\end{tikzpicture} 
	+ \frac{1}{3}
	\begin{tikzpicture}[baseline=(b)] 
		\begin{feynman}
			\vertex(a);
			\vertex [right=0.5cm of a] (b);
			\vertex [above right=1cm of b] (c);
			\vertex [below right=1cmof b] (d);
			\vertex [above right=0.5cm of c] (e);
			\vertex [below right=0.5cm of d] (f);
			\diagram* {
				(a) --[plain, very thick] (b),
				(b) --[ghost] (d) --[ghost] (c) --[ghost, insertion=0.5] (b),
				(c) --[plain, very thick] (e),
				(d) --[plain, very thick] (f)
			};
		\end{feynman}
	\end{tikzpicture}	
	- \frac{1}{4}
	\begin{tikzpicture}[baseline=(i0)]
		\begin{feynman}
			\vertex (i0);
			\vertex [above=1cm of i0] (i1);
			\vertex [below=2cm of i1] (i2);
			\vertex [right=2cm of i2] (i3);
			\vertex [above=2cm of i3] (i4);
			\vertex [below right=0.7cm of i1] (a);
			\vertex [above right=0.7cm of i2] (b);
			\vertex [above left=0.7cm of i3] (c);
			\vertex [below left=0.7cm of i4] (d);
			\diagram* {
				(i1)--[plain, very thick] (a),
				(i2)--[plain, very thick] (b),
				(i3)--[plain, very thick] (c),
				(i4)--[plain, very thick] (d),
				(a)--[ghost] (b) --[ghost] (c) --[ghost] (d) --[ghost] (a), 
			};
		\end{feynman}
	\end{tikzpicture} 
	\end{split} \label{onelooprenorm/subtraction}
\end{equation}
and one can write
\begin{equation}
R\Gamma_\text{1-loop}\left[\bar{\Psi}, \Psi, \Sigma\right] = 
\int\left[\bar{\Psi}\left(i\widehat{\partial} + \widehat{\Sigma}\right)\Psi
- \frac{\Sigma_{\bar{i}\bar{\alpha}}\Sigma_{\bar{i}}^{\bar{\alpha}}}{4\lambda_{2\bar{i}}}
+ \Lgr_R\left[\bar{\Psi},\Psi,\Sigma\right]\right]d^4x + \text{finite terms}.
\end{equation}

The only special remark that should be made here concerns the lack of tadpole diagrams with derivatives in \eqref{onelooprenorm/subtraction}. This is just the effect of subtracting in momentum space. It is possible, for example, to regularize
\begin{equation*}
\Tr\frac{1}{i\widehat{\partial}}\widehat{\Omega} = 
\int\Tr\frac{1}{i\widehat{\partial}}(x,x)\widehat{\Omega}(x)d^4x
\end{equation*} 
by separation of points first, then expand $\widehat{\Omega}(x + \xi)$ into Taylor series, of which only the first four terms will produce divergences.
This gives first order in $\widehat{\Omega}$ terms up to third derivatives. 
In general, it is possible (but much more tedious) to define subtraction in the coordinate space this way. 

$\Lgr_{R}\left(\bar{\Psi},\Psi,\Sigma\right)$ defines finite renormalization. 
In a point-independent BPHZ scheme this would be some purely differential operator which might 
alter numeric parameters of the theory at most. But here $\Lgr_{R}\left(\bar{\Psi},\Psi,\Sigma\right)$ 
must have the same overall structure as $\mathcal{M}\Tr\ln\Lambda$. Thus it may be of up to fourth order in $\Sigma$, up to eighths in $\Psi$, and up to third order in derivatives. This is lots of interesting physics. Thus it is important to study the general structure it may have.

It turns out that aside from the structure of subtractions, another requirement that can be imposed on $\Lgr_{R}$ is gauge symmetry.
Let
\begin{equation}
U = \exp\left\{i\left[f_0(x) + T^af_a(x)\right]\right\}
\end{equation}
The lagrangian of deviations in \eqref{oneloopterms/defW} 
\begin{equation}
\Lgr_\text{dev} = \bar{\psi}\left(i\widehat{\partial} + \widehat{\Sigma}\right)\psi +
\bar{\Psi}\widehat{\varsigma}\psi + \bar{\psi}\widehat{\varsigma}\Psi + \bar{\psi}\widehat{\varsigma}\psi
- \frac{\varsigma_{\bar{i}\bar{\alpha}}\varsigma_{\bar{i}}^{\bar{\alpha}}}{4\lambda_{2\bar{i}}}
\end{equation}
is invariant under gauge transformations
\begin{equation}
\begin{split}
&\psi \rightarrow U\psi\\
&\Psi \rightarrow U\Psi\\
&\Sigma \rightarrow U\Sigma{U}^+ - iU^+\widehat{\partial}U\\
&\varsigma \rightarrow U\varsigma{U}^+.
\end{split} \label{onelooprenorm/deftransform}
\end{equation}
Thus, gauge symmetry can be imposed on $W_1$ order by order. However, this suggests a very strong negative answer to the question whether gauge symmetry can emerge in the bosonized NJL model due to renormalization since the zero order of the effective action contains the explicit symmetry-breaking term 
\begin{equation*}
\frac{\Sigma_{\bar{i}\bar{\alpha}}\Sigma_{\bar{i}}^{\bar{\alpha}}}{4\lambda_{2\bar{i}}}.
\end{equation*}
Thus, if symmetry is imposed on $W_1$, it will be broken by the quadratic term, and if $W_1$ is not symmetric by construction, accidental symmetry would require lots of cancellations of symmetry-breaking terms at all orders of perturbation theory.

Let now
\begin{equation}
\begin{split}
&v_\mu = \Sigma_{20\mu}\\
&V_\mu = \Sigma_{21\mu}^aT^a\\
&D_\mu = \partial_\mu - iv_\mu - iV_\mu\\
&F_{0\mu\nu} = \partial_\mu{v_\nu} - \partial_\nu{v_\mu}\\
&F_{1\mu\nu} = \partial_\mu{V_\nu} - \partial_\nu{V_\mu} -i\comm{V_\mu}{V_\nu} 
\end{split}
\end{equation}
Gauge symmetry requirement implies that derivatives and $v,V$ may be present in $\Lgr_R$
as invariant combinations $D_\mu, F_{i\mu\nu}$ only. Furthermore, $D_\mu$ must act on bosons by commutation,
and on $\Psi$ by left multiplication.

Let $W$ be $\Sigma$ minus the vector part
\begin{equation}
\widehat{W} = \widehat{\Sigma} - \gamma^{\mu}\left(v_\mu + V_\mu\right).
\end{equation} 
Then, abstractly,
\begin{equation}
\Lgr_R = \Lgr_{R1}\left(D_\mu,\widehat{W},\Psi, \bar{\Psi}\right) + \Lgr_{R2}\left(D_\mu,\widehat{W},\Psi, \bar{\Psi}\right) 
+ \Lgr_{R3}\left(D_\mu,\widehat{W},\Psi, \bar{\Psi}\right) + \Lgr_{R4}\left(D_\mu,\widehat{W},\Psi, \bar{\Psi}\right),
\end{equation}
where $\Lgr_{Rn}\left(D_\mu,\widehat{W},\Psi,\bar{\Psi}\right)$ is some Lorentz-, $U(N)$-, and gauge-invariant $n$-form
This can be further restricted by discrete symmetries. The $\chi$-symmetry \eqref{oneloopterms/chisymmetry} precludes first- and third-order terms. This is because each $\chi$-invariant object has one Lorentz index, therefore they must be always coupled for contraction. Thus, monomials of odd combined power will have odd number of sign-changing multipliers and will change sign under $\chi$-transform.

Imposing discrete and gauge symmetries, and dropping full derivatives, one can write
\begin{equation}
\begin{split}
&\Lgr_R = \frac{g_{1\bar{i}}W_{\bar{i}\bar{\alpha}}W_{\bar{i}}^{\bar{\alpha}}}{2}
+ g_2\bar{\Psi}i\widehat{D}\Psi + g_{3\bar{i}}W_{\bar{i}\bar{\alpha}}\bar{\Psi}\tau_{\bar{i}}^{\bar{\alpha}}\Psi
+ g_{4i}\bar{\Psi}\Gamma_{i\bar{\alpha}}\Psi\bar{\Psi}\Gamma_{i}^{\bar{\alpha}}\Psi
+ g_{5\mathbf{A}\bar{i}\bar{j}\bar{k}\bar{l}}t_{\mathbf{A}\bar{\alpha}\bar{\beta}\bar{\gamma}\bar{\delta}}
W_{\bar{i}}^{\bar{\alpha}}W_{\bar{j}}^{\bar{\beta}}W_{\bar{k}}^{\bar{\gamma}}W_{\bar{l}}^{\bar{\delta}} +\\
&+ g_{6\mathbf{A}\bar{i}\bar{j}\bar{k}\bar{l}}t_{\mathbf{A}\bar{\alpha}\bar{\beta}\bar{\gamma}\bar{\delta}}
W_{\bar{i}}^{\bar{\alpha}}W_{\bar{j}}^{\bar{\beta}}W_{\bar{k}}^{\bar{\gamma}}\bar{\Psi}\tau_{\bar{l}}^{\bar{\delta}}\Psi
+ g_{7\mathbf{A}\bar{i}\bar{j}\bar{k}\bar{l}}t_{\mathbf{A}\bar{\alpha}\bar{\beta}\bar{\gamma}\bar{\delta}}
W_{\bar{i}}^{\bar{\alpha}}W_{\bar{j}}^{\bar{\beta}}
\bar{\Psi}\tau_{\bar{k}}^{\bar{\gamma}}\Psi\bar{\Psi}\tau_{\bar{l}}^{\bar{\delta}}\Psi +\\
&+ g_{8\mathbf{A}\bar{i}\bar{j}\bar{k}\bar{l}}t_{\mathbf{A}\bar{\alpha}\bar{\beta}\bar{\gamma}\bar{\delta}}
W_{\bar{i}}^{\bar{\alpha}}\bar{\Psi}\tau_{\bar{j}}^{\bar{\beta}}\Psi
\bar{\Psi}\tau_{\bar{k}}^{\bar{\gamma}}\Psi\bar{\Psi}\tau_{\bar{l}}^{\bar{\delta}}\Psi
+ g_{9\mathbf{A}\bar{i}\bar{j}\bar{k}\bar{l}}t_{\mathbf{A}\bar{\alpha}\bar{\beta}\bar{\gamma}\bar{\delta}}
\bar{\Psi}\tau_{\bar{i}}^{\bar{\alpha}}\Psi\bar{\Psi}\tau_{\bar{j}}^{\bar{\beta}}\Psi
\bar{\Psi}\tau_{\bar{k}}^{\bar{\gamma}}\Psi\bar{\Psi}\tau_{\bar{l}}^{\bar{\delta}}\Psi +\\
&+ g_{10\mathbf{A}\mathbf{B}\bar{i}\bar{j}\bar{k}}\theta_{\mathbf{A}}^{\bar{\kappa}\bar{\lambda}\mu\bar{\nu}}\chi_{\mathbf{B}}^{abc}
W_{\bar{i}\bar{\alpha}}^aW_{\bar{j}\bar{\beta}}^b\left[D_\mu{}W_{\bar{k}\bar{\nu}}\right]^c
+ g_{11\mathbf{A}\bar{i}\bar{j}}\theta_{\mathbf{A}}^{\kappa\bar{\lambda}\mu\bar{\nu}}
\left[D_{\kappa}W_{\bar{i}\bar{\lambda}}\right]^a\left[D_\mu{}W_{\bar{j}\bar{\nu}}\right]^a +\\
&+ g_{12\mathbf{A}\bar{i}\bar{j}\bar{k}}t_{\mathbf{A}}^{\bar{\alpha}\bar{\beta}\bar{\gamma}\mu}
W_{\bar{i}\bar{\alpha}}W_{\bar{j}\bar{\beta}}\bar{\Psi}\tau_{\bar{k}\bar{\gamma}}D_\mu\Psi
+ g_{13\mathbf{A}\bar{i}\bar{j}\bar{k}}t_{\mathbf{A}}^{\bar{\alpha}\bar{\beta}\bar{\gamma}\mu}
W_{\bar{i}\bar{\alpha}}\bar{\Psi}\tau_{\bar{j}\bar{\beta}}\Psi\bar{\Psi}\tau_{\bar{k}\bar{\gamma}}D_\mu\Psi +\\
&+ g_{14\mathbf{A}\bar{i}\bar{j}\bar{k}}t_{\mathbf{A}}^{\bar{\alpha}\bar{\beta}\bar{\gamma}\mu}
\bar{\Psi}\tau_{\bar{i}\bar{\alpha}}\Psi\bar{\Psi}\tau_{\bar{j}\bar{\beta}}\Psi\bar{\Psi}\tau_{\bar{k}\bar{\gamma}}D_\mu\Psi
+ g_{15\mathbf{A}\bar{i}\bar{j}\bar{k}\bar{l}}t_{\mathbf{A}}^{\bar{\alpha}\lambda\bar{\beta}\bar{\gamma}\mu\bar{\delta}}
\left(\bar{\Psi}\tau_{\bar{i}\bar{\alpha}}D_\lambda\tau_{\bar{j}\bar{\beta}}\Psi\right)
\left(\bar{\Psi}\tau_{\bar{k}\bar{\gamma}}D_\mu\tau_{\bar{l}\bar{\delta}}\Psi\right) +\\
&+ g_{16\mathbf{A}\mathbf{B}ijk}\theta_{\mathbf{A}}^{\bar{\alpha}\lambda\bar{\beta}\bar{\gamma}\mu\bar{\delta}}\chi_{\mathbf{B}}^{abc}
\left[D_{\lambda}W_{\bar{i}\bar{\alpha}}\right]^a
\bar{\Psi}\Gamma_{j\bar{\beta}}G^bD_\mu\Gamma_{k\bar{\gamma}}G^c\Psi
+ g_{17\mathbf{A}\bar{i}\bar{j}\bar{k}\bar{l}}t_{\mathbf{A}}^{\bar{\alpha}\lambda\bar{\beta}\bar{\gamma}\mu\bar{\delta}}
\left(\bar{\Psi}\tau_{\bar{i}\bar{\alpha}}\Psi\right)
\left(\bar{\Psi}\tau_{\bar{j}\bar{\beta}}D_\lambda\tau_{\bar{k}\bar{\gamma}}D_\mu\tau_{\bar{l}\bar{\delta}}\Psi\right) +\\
&+ g_{18\mathbf{A}\bar{i}\bar{j}\bar{k}\bar{l}}t_{\mathbf{A}}^{\bar{\alpha}\lambda\bar{\beta}\bar{\gamma}\mu\bar{\delta}}
W_{\bar{i}\bar{\alpha}}
\bar{\Psi}\tau_{\bar{j}\bar{\beta}}D_\lambda\tau_{\bar{k}\bar{\gamma}}D_\mu\tau_{\bar{l}\bar{\delta}}\Psi
+ g_{19\mathbf{A}\bar{i}\bar{j}\bar{k}\bar{l}}t_{\mathbf{A}}^{\bar{\alpha}\kappa\bar{\beta}\lambda\bar{\gamma}\mu\bar{\delta}}
\bar{\Psi}\tau_{\bar{i}\bar{\alpha}}D_{\kappa}\tau_{\bar{j}\bar{\beta}}D_\lambda\tau_{\bar{k}\bar{\gamma}}D_\mu\tau_{\bar{l}\bar{\delta}}\Psi +\\ 
&+ \frac{g_{20}}{4}F_{0\mu\nu}F_0^{\mu\nu} + \frac{g_{21}}{2}\Tr{}F_{1\mu\nu}F_1^{\mu\nu}
\end{split} \label{onelooprenorm/renormterms}
\end{equation}
Here the bold indices $\mathbf{A},\mathbf{B}$ enumerate basis Lorentz- and U(N)-invariant forms.
Whenever possible, the forms were combined into $t_{\mathbf{A}\bar{\alpha}\bar{\beta\ldots}}$. 
In this case, $\bar{\alpha},\bar{\beta}\ldots$ stand for combined Lorentz and U(N) indices.
Where $U(N)$ indices are explicit, they span the extended basis \eqref{fierz4/extbasis}. 

A more detailed discussion of \eqref{onelooprenorm/renormterms} is deferred till conclusions.
It suffices to say for now that bosonic fields get nontrivial dynamic from finite renormalization.

\subsection{Some remarks on QCD}
The bosonisation scheme that was developed for the NJL model in \ref{oneloopterms} -- \ref{onelooprenorm} 
can be repeated for QCD lagrangian with only minor changes. One has to start from 
\begin{equation}
\begin{split}
Z\left[\left\{J\right\}\right] &= 
\funint{\bar{\psi},\psi, A, \omega}\exp\left\{i\int\left[-\frac{1}{2g^2}\Tr{F_{\mu\nu}F^{\mu\nu}} + \bar{\psi}i\widehat{D}\psi +
\Lgr_{GF}(A) + \Lgr_\text{FP}(\omega, A) + \right.\right.\\
&\left.\left.+ J_{\psi1}\psi +\bar{\psi}J_{\bar{\psi}1} + J_{A1\mu}^aA^{a\mu} 
+ 2\lambda_{2\bar{i}}J_{\psi2\bar{i}\bar{\alpha}}\bar{\psi}\tau_{\bar{i}}^{\bar{\alpha}}\psi
- \lambda_{2\bar{i}}J_{\psi2\bar{i}\bar{\alpha}}J^{\bar{\alpha}}_{\psi2\bar{i}} 
- 2\lambda_{221}J_{\psi221}^{a\mu}A^a_\mu
\right]d^4x\right\},
\end{split}
\label{qcd/initial}
\end{equation}  
where $\Lgr_{GF}(A) + \Lgr_\text{FP}(\omega, A)$ are gauge-fixing and Faddeev-Popov terms, and the convention for covariant derivative is
\begin{equation}
D_\mu = \partial_\mu - iA_\mu.
\end{equation}
Then, insertion of
\begin{equation}
\begin{split}
&\text{const} = \funint{\Sigma}\exp\left\{i\int\left[\sum_{\bar{i} \neq 21}
\left(\frac{\Sigma_{\bar{i}\bar{\alpha}}}{2\lambda_{\bar{i}}^\frac{1}{2}} 
- \lambda_{\bar{i}}^\frac{1}{2}\bar{\psi}\tau_{\bar{i}\bar{\alpha}}\psi 
+ \lambda_{\bar{i}}^\frac{1}{2}J_{\psi2\bar{i}\bar{\alpha}}\right)
\left(\frac{\Sigma_{\bar{i}}^{\bar{\alpha}}}{2\lambda_{\bar{i}}^\frac{1}{2}} 
- \lambda_{\bar{i}}^\frac{1}{2}\bar{\psi}\tau_{\bar{i}}^{\bar{\alpha}}\psi 
+ \lambda_{\bar{i}}^\frac{1}{2}J_{\psi2\bar{i}}^{\bar{\alpha}}\right) + \right.\right.\\
&\left.\left.+
\left(\frac{\Sigma_{21\mu}^a + A_\mu^a}{2\lambda_{21}^\frac{1}{2}} 
- \lambda_{21}^\frac{1}{2}\bar{\psi}\gamma_{\mu}T^a\psi 
+ \lambda_{21}^\frac{1}{2}J_{\psi221\mu}^a\right)
\left(\frac{\Sigma_{21}^{a\mu} + A^{a\mu}}{2\lambda_{21}^\frac{1}{2}} 
- \lambda_{21}^\frac{1}{2}\bar{\psi}\gamma^{\mu}T^a\psi 
+ \lambda_{21}^\frac{1}{2}J_{\psi221}^{a\mu}\right)
\right]d^4x\right\}
\end{split} \label{qcd/insertion}
\end{equation}
yields
\begin{equation}
\begin{split}
Z\left[\left\{J\right\}\right] &= 
\funint{\bar{\psi},\psi, A, \omega, \Sigma}\exp\left\{i\int\left[-\frac{1}{2g^2}\Tr{F_{\mu\nu}F^{\mu\nu}}
+ \Lgr_{GF}(A) + \Lgr_\text{FP}(\omega, A) + \right.\right.\\
&\left.\left.+ \bar{\psi}\left(i\widehat{\partial} - \widehat{\Sigma}\right)\psi 
+ \sum_{\bar{i} \neq 21}\frac{\Sigma_{\bar{i}\bar{\alpha}}\Sigma_{\bar{i}}^{\bar{\alpha}}}{4\lambda_{\psi2\bar{i}}}
+ \frac{\left(\Sigma_{21\mu}^a + A_\mu^a\right)\left(\Sigma_{21}^{a\mu} + A^{a\mu}\right)}{4\lambda_{\psi221}}
+ \lambda_{\psi2\bar{i}}\bar{\psi}\tau_{\bar{i}\bar{\alpha}}\psi\bar{\psi}\tau_{\bar{i}}^{\bar{\alpha}}\psi +\right.\right.\\
&\left.\left.
+ J_{\psi1}\psi +\bar{\psi}J_{\bar{\psi}1} + J_{A1\mu}^aA^{a\mu} + J_{\psi2\bar{i}\bar{\alpha}}\Sigma_{\bar{i}}^{\bar{\alpha}}
\right]d^4x\right\}.
\end{split}
\end{equation} 
Fierz identities make it possible to eliminate the four-fermion interaction terms by imposing
\begin{equation}
	\begin{bmatrix}
		\lambda_{210}\\\lambda_{220}\\\lambda_{230}\\\lambda_{240}\\\lambda_{250}
	\end{bmatrix} + 
	\begin{bmatrix}
		-\frac{N + 4}{8N}&-\frac{1}{2}&-\frac{3}{2}&\frac{1}{2}&-\frac{1}{8}\\
		-\frac{1}{2}&\frac{N - 2}{4N}&0&\frac{1}{4}&\frac{1}{8}\\
		-\frac{1}{16}&0&\frac{N - 2}{4N}&0&-\frac{1}{16}\\
		\frac{1}{8}&\frac{1}{4}&0&\frac{N - 2}{4N}&-\frac{1}{8}\\
		-\frac{1}{8}&\frac{1}{2}&-\frac{3}{2}&-\frac{1}{2}&-\frac{N + 4}{8N}
	\end{bmatrix}
	\begin{bmatrix}
		\lambda_{211}\\\lambda_{221}\\\lambda_{231}\\\lambda_{241}\\\lambda_{251}		
	\end{bmatrix} = 0.
	\label{qcd/fierz_conditions}
\end{equation}
The resulting generating functional is given by
\begin{equation}
\begin{split}
Z\left[\left\{J\right\}\right] &= 
\funint{\bar{\psi},\psi, A, \omega, \Sigma}\exp\left\{i\int\left[-\frac{1}{2g^2}\Tr{F_{\mu\nu}F^{\mu\nu}}
+ \Lgr_{GF}(A) + \Lgr_\text{FP}(\omega, A) + \right.\right.\\
&\left.\left.+ \bar{\psi}\left(i\widehat{\partial} - \widehat{\Sigma}\right)\psi 
+ \sum_{\bar{i} \neq 21}\frac{\Sigma_{\bar{i}\bar{\alpha}}\Sigma_{\bar{i}}^{\bar{\alpha}}}{4\lambda_{\psi2\bar{i}}}
+ \frac{\left(\Sigma_{21\mu}^a + A_\mu^a\right)\left(\Sigma_{21}^{a\mu} + A^{a\mu}\right)}{4\lambda_{\psi221}} +\right.\right.\\
&\left.\left.
+ J_{\psi1}\psi +\bar{\psi}J_{\bar{\psi}1} + J_{A1\mu}^aA^{a\mu} + J_{\psi2\bar{i}\bar{\alpha}}\Sigma_{\bar{i}}^{\bar{\alpha}}
\right]d^4x\right\},
\end{split}
\label{qcd/bosonized}
\end{equation} 
which is essentially the same as \eqref{oneloopterms/bosonized}, save for the gluons.
The one-loop finite renormalization produced by integration over the fermionic deviations
will not differ from \eqref{onelooprenorm/renormterms}. 
Both models' mesonic sectors will be equivalent at one loop level.
The difference between QCD and NJL will manifest itself at higher perturbative orders
where the conditions \eqref{qcd/fierz_conditions} are likely to result in diagram cancellations
as it should be in a renormalizable theory.

There are two possible approaches to gluon self-interactions here. 
The minimal one is to treat them perturbatively. But one can do a bosonization-like transform for gluons as well,
as it was pointed out by Kondo \cite{Kondo2003}.

Let $\sim$ denote the bosonization duality, i.e. in \eqref{qcd/bosonized}
\begin{equation}
\begin{split}
&\Sigma_{\bar{i}\bar{\alpha}} \sim \lambda_{2\bar{i}}\bar{\psi}\tau_{\bar{i}\bar{\alpha}}\psi, \bar{i} \neq 21\\
&\Sigma_{21\mu}^a \sim -A_\mu^a + \lambda_{221}\bar{\psi}\gamma_{\mu}T^a\psi
\end{split}
\end{equation}

Then it is possible to introduce the fields
\begin{equation}
\begin{split}
&B_{\mu\nu} \sim \frac{1}{\theta}\left[\partial_{\mu}A_\nu - \partial_{\nu}A_\mu\right] -i\theta\comm{A_\mu}{A_\nu}\\
&T_{\mu\nu} \sim \acomm{A_\mu}{A_\nu} - \frac{\eta_{\mu\nu}}{2}A_{\mu}A^\mu\\
&S \sim A_{\mu}A^\mu
\end{split}
\end{equation}
The identity 
\begin{equation}
\comm{A_\mu}{A_\nu}\comm{A^\mu}{A^\nu} = \acomm{A_\mu}{A_\nu}\acomm{A^\mu}{A^\nu} - 4\left(A_{\mu}A^\mu\right)^2,
\end{equation}
which can be considered the Fierz identity for gluons,
implies that it is possible to eliminate gluon self-interaction terms completely by inserting 
\begin{equation}
\begin{split}
&\text{const} = \funint{B, T, S}\exp\left\{i\int\left[
\Tr\left(\frac{B_{\mu\nu}}{\lambda_B^\frac{1}{2}} - \lambda_B^\frac{1}{2}(\sim{B_{\mu\nu}})\right)
\left(\frac{B^{\mu\nu}}{\lambda_B^\frac{1}{2}} - \lambda_B^\frac{1}{2}(\sim{B^{\mu\nu}})\right) +\right.\right.\\
&\left.\left.
+ \Tr\left(\frac{T_{\mu\nu}}{\lambda_T^\frac{1}{2}} - \lambda_T^\frac{1}{2}(\sim{T_{\mu\nu}})\right)
\left(\frac{T^{\mu\nu}}{\lambda_T^\frac{1}{2}} - \lambda_T^\frac{1}{2}(\sim{T^{\mu\nu}})\right)
+ \Tr\left(\frac{S}{\lambda_S^\frac{1}{2}} - \lambda_S^\frac{1}{2}(\sim{S})\right)^2
\right]\right\}
\end{split} \label{qcd/gluontransform}
\end{equation}
under a suitable choice of $\lambda$. It turns out that this does not fix all of $\lambda_B,\lambda_T,\lambda_S,\theta$ 
simultaneously and the resulting `bosonized' lagrangian
depending on the new fields $B,T,S$ will still have one extra free scalar parameter.

Going down this road further can be a topic of another research. The main lesson of this subsection is that bosonization, 
far from being an ad hoc trick for the NJL model, is surprisingly universal. Implications and generalizations of this are discussed in the next section.

\section{The composite fields formalism for NJL and other models} \label{compfields}

\subsection{Generalizations}

\label{comp_formalism}
As it has been demonstrated in the preceding sections, bosonization is not unique to the
NJL model. Furthermore, it is not exactly `bosonization', as, being a same-point case of Cornwall-Jackiw-Tomboulis' composite fields formalism,
it may actually affect bosonic fields too. 

As composite fields formalism can be generalized to include infinite series of nonlocal terms, as proposed by Calzetta and Hu, so can bosonization. Starting from the field multiplet $\phi$, the classical action $S\left[\phi\right]$,
and a set of $\mu_k(\phi)$ -- local monomials in fields and their derivatives, one can construct the generating functional
\begin{equation}
Z\left[\{J\}\right] = \funint{\phi}\exp\left\{iS\left[\phi\right] + i\int\phi{}J_{\phi}d^4x + i\sum_{k}\lambda_k\int\left[2\mu_k(\phi)J_{\mu_k} - J_{\mu_k}^2\right]d^4x\right\} \label{comp_formalism/Z}
\end{equation}
by adding a source for each $\mu_k$. 
The generating functional \eqref{comp_formalism/Z} differs from \eqref{intro/Schwinger-Keldysh} as it includes free numeric parameters $\lambda_k$ and terms quadratic in sources.
This modification is a simple reparameterization of sources -- a change of variables describing the quantum state. 
Then inserting
\begin{equation}
\begin{split}
&\text{const} = \funint{\bar{\Sigma}}\exp\left\{i\sum_k\int\left[\frac{\Sigma_k}{2\lambda_k^\frac{1}{2}} 
- \lambda_k^\frac{1}{2}\mu_k(\phi) - \lambda_k^\frac{1}{2}J_{\mu_k}\right]^2d^4x\right\}\\
&\bar{\Sigma} = \left\{\Sigma_k\right\}
\end{split}
\end{equation}
yields
\begin{equation}
Z\left[\{J\}\right] = \funint{\phi,\bar{\Sigma}}\exp\left\{iS\left[\phi\right] + i\int\phi{}J_{\phi}d^4x 
+ \sum_k\int\left[\lambda_k\mu_k^2(\phi) + \frac{\Sigma_k^2}{4\lambda_k} + J_{\mu_k}\Sigma_k\right]d^4x\right\}.
\end{equation}
The coefficients $\lambda_k$ can be tuned for maximum number of cancellations in the modified action.

In principle, infinite sequence of $\mu_k$ corresponding to all possible local combinations of 
field and its derivatives is possible. As \eqref{comp_formalism/Z} suggests, if there are two sets
\begin{equation}
\begin{split}
&M_1 = \{\mu_{k}(\phi)\}\\
&M_2 = \{\mu_{l}(\phi)\}\\
&M_1 \subset M_2
\end{split}
\end{equation}
then the generating functional for the smaller set can be obtained from the larger one's by putting some $\lambda_m = 0$.
Thus the smaller composite fields extension of the given model defines a sector in a larger one.
This is also true for the original field theory without composite fields:
it can be viewed as the minimal sector of the extended model family.
The particular choice of the composite fields inventory and of corresponding constants should then be the matter of research interest.
In the remainder of this section the generalized formalism is exemplified by a bosonized NJL model with addfitional composite fields corresponding to cubic combinations of fermions. 

\subsection{The 6-order Fierz identities} \label{fierz6}

Adding cubic composite fields to an NJL-like model will produce terms of the form

\begin{equation}
\Phi_{6\bar{i}\bar{j}\bar{k}}^{\bar{\alpha}\bar{\beta}\bar{\gamma}} = 
\bar{\psi}\tau_{\bar{i}}^{\bar{\alpha}}\psi\bar{\psi}\tau_{\bar{j}}^{\bar{\beta}}\psi\bar{\psi}\tau_{\bar{k}}^{\bar{\gamma}}\psi.
\end{equation} 

Only some linear combinations of these will possess necessary symmetry properties though.
According to \ref{fierz4/lorentzforms}, \ref{fierz4/tracedecomposition}, these will be

\begin{equation}
\Lgr_{P\mathbf{A}klm} = t_{\mathbf{A}}^{\bar{\alpha}\bar{\beta}\bar{\gamma}}
\bar{\psi}_{Pi_1}\Gamma_{k\bar{\alpha}}\psi_{i_1}
\bar{\psi}_{Pi_2}\Gamma_{l\bar{\beta}}\psi_{i_2}
\bar{\psi}_{Pi_3}\Gamma_{m\bar{\gamma}}\psi_{i_3} \label{fierz6/genterms}
\end{equation}

Here, $\mathbf{A}$ labes basis Lorentz-invariant forms built from $\eta^{\mu\nu}, \epsilon^{\kappa\lambda\mu\nu}$, 
$\bar{\alpha}\ldots$ are Lorentz multi-indices, and $P$ is some permutation of $i_1,i_2,i_3$.
Then transformations along the lines of \eqref{fierz4/deltadecomposition}-\eqref{fierz4/identities}
can be used to express all \eqref{fierz6/genterms} through

\begin{equation}
\Lgr_{\mathbf{A}klm} = t_{\mathbf{A}}^{\bar{\alpha}\bar{\beta}\bar{\gamma}}
\bar{\psi}_{i_1}\Gamma_{k\bar{\alpha}}\psi_{i_1}
\bar{\psi}_{i_2}\Gamma_{l\bar{\beta}}\psi_{i_2}
\bar{\psi}_{i_3}\Gamma_{m\bar{\gamma}}\psi_{i_3} \label{fierz6/gammaterms}
\end{equation}

It is quite clear that this reasoning is valid for any fermion self-interaction of arbitrary order:
by virtue of \ref{fierz4/tracedecomposition} any Lorentz- and $U(N)$-invariant scalar depending on $\psi, \bar{\psi}$ only can be reduced to terms of \eqref{fierz6/gammaterms} type, where only Dirac basis matrices are sandwiched between $\bar{\psi},\psi$.
There remains a question whether the latter are really independent though.

The answer depends on the number of colors $N$. The classical Fierz identities for colorless fermions are a consequence
of completeness of the basis $\left\{\Gamma^{\bar{\alpha}}\right\}$ in the space of $4\times{4}$ matrices. But in the presence of colors the complete basis is $\Gamma_iG^a$.
$\psi\bar{\psi}$ is expressible through $5N^2$ terms $\bar{\psi}\Gamma_iG^a\psi$, not through $\bar{\psi}_l\Gamma_k\psi_l$ of which there are only $5N$. The expression 
\begin{equation}
F_{i_1\ldots{}i_m;k_1\ldots{}k_m} = \bar{\psi}_{i_1}\Gamma_{k_1}\psi_{i_1}\ldots\bar{\psi}_{i_m}\Gamma_{k_m}\psi_{i_m}
\label{fierz6/genmonomial}
\end{equation}
without summation over repeated indices is in general a product of $m$ totally independent factors. However, if $m > N$, some of the indices $i_k$ must coincide. Then \eqref{fierz6/genmonomial} will contain some combinations that will be subject to single-color Fierz identities. For sixth-order interaction terms this will happen if color symmetry is $U(2)$. This special case is to be studied elsewhere. The results below apply to higher color counts for which the only Fierz identities will be reduction formulas
expressing \eqref{fierz6/genterms} through \eqref{fierz6/gammaterms}. Explicit derivation relies on expressions 
\begin{equation}
\chi_{i_1i_2i_3\,j_1j_2j_3} = \chi^{abc}G^a_{i_1j_1}G^b_{i_2j_2}G^c_{i_3j_3} =
\sum_{P}\chi_P\delta_{i_1Pj_1}\delta_{i_2Pj_2}\delta_{i_3Pj_3}
\end{equation}
for $U(N)$-invariant forms.
As a corollary of \ref{fierz4/tracedecomposition}, there are only six basis ones:
\begin{equation}
\begin{split}
&\chi_{0\,i_1i_2i_3\,j_1j_2j_3} = \delta_{i_1j_1}\delta_{i_2j_2}\delta_{i_3j_3}\\
&\chi_{1\,i_1i_2i_3\,j_1j_2j_3} = \delta_{i_1j_1}T^a_{i_2j_2}T^a_{i_3j_3} = 
\frac{1}{2}\delta_{i_1j_1}\delta_{i_2j_3}\delta_{i_3j_2} 
- \frac{1}{2N}\delta_{i_1j_1}\delta_{i_2j_2}\delta_{i_3j_3}\\
&\chi_{2\,i_1i_2i_3\,j_1j_2j_3} = \delta_{i_2j_2}T^a_{i_1j_1}T^a_{i_3j_3} = 
\frac{1}{2}\delta_{i_2j_2}\delta_{i_1j_3}\delta_{i_3j_1} 
- \frac{1}{2N}\delta_{i_1j_1}\delta_{i_2j_2}\delta_{i_3j_3}\\
&\chi_{3\,i_1i_2i_3\,j_1j_2j_3} = \delta_{i_3j_3}T^a_{i_1j_1}T^a_{i_2j_2} = 
\frac{1}{2}\delta_{i_3j_3}\delta_{i_1j_2}\delta_{i_2j_1} 
- \frac{1}{2N}\delta_{i_1j_1}\delta_{i_2j_2}\delta_{i_3j_3}\\
&\chi_{4\,i_1i_2i_3\,j_1j_2j_3} = f^{abc}T^a_{i_1j_1}T^b_{i_2j_2}T^c_{i_3j_3} = 
\frac{i}{4}\left[\delta_{i_1j_2}\delta_{i_2j_3}\delta_{i_3j_1} 
- \delta_{i_1j_3}\delta_{i_2j_1}\delta_{i_3j_2}\right]\\
&\chi_{5\,i_1i_2i_3\,j_1j_2j_3} = d^{abc}T^a_{i_1j_1}T^b_{i_2j_2}T^c_{i_3j_3} 
=\frac{1}{4}\left[\delta_{i_1j_2}\delta_{i_2j_3}\delta_{i_3j_1} 
+ \delta_{i_1j_3}\delta_{i_2j_1}\delta_{i_3j_2}\right] -\\
&- \frac{1}{2N}\delta_{i_1j_1}\delta_{i_2j_3}\delta_{i_3j_2}
- \frac{1}{2N}\delta_{i_2j_2}\delta_{i_1j_3}\delta_{i_3j_1}
- \frac{1}{2N}\delta_{i_3j_3}\delta_{i_1j_2}\delta_{i_2j_1}
+ \frac{1}{N^2}\delta_{i_1j_1}\delta_{i_2j_2}\delta_{i_3j_3}.
\end{split} \label{fierz6/chiforms}
\end{equation}
The expressions for $\chi_4,\chi_5$ can be derived by noticing that $\chi_{4,5}(X_1^aT^a,X_2^bT^b,X_3^cT^c)$
are expressible through trace and (anti)commutator and then requiring $\chi_5(\ldots,\mathbf{1},\ldots) = 0$
for any position of unit matrix (for $\chi_4$ this follows automatically from properties of trace and commutator).

Basis Lorentz-invariant even-parity tensor products of $\Gamma$-matrices are
\begin{equation}
\begin{split}
&\mathbf{\theta}_{111} = \mathbf{1}\otimes\mathbf{1}\otimes\mathbf{1}\\
&\mathbf{\theta}_{122} = \mathbf{1}\otimes\gamma_\mu\otimes\gamma^\mu\\
&\mathbf{\theta}_{133} = \mathbf{1}\otimes\sigma_{\mu\nu}\otimes\sigma^{\mu\nu}\\
&\mathbf{\theta}_{144} = \mathbf{1}\otimes\gamma^5\gamma_\mu\otimes\gamma^5\gamma^\mu\\
&\mathbf{\theta}_{155} = \mathbf{1}\otimes\gamma^5\otimes\gamma^5\\
&\mathbf{\theta}_{523} = \gamma^5\otimes\gamma_\mu\otimes\gamma^5\gamma^\mu\\
&\mathbf{\theta}_{223} = \gamma_\mu\otimes\gamma_\nu\otimes\sigma^{\mu\nu}\\
&\mathbf{\theta}_{443} = \gamma^5\gamma_\mu\otimes\gamma^5\gamma_\nu\otimes\sigma^{\mu\nu}\\
&\mathbf{\theta}_{243} = \epsilon_{\kappa\lambda\mu\nu}\gamma^\kappa\otimes\gamma^5\gamma^\lambda\otimes\sigma^{\mu\nu}\\
&\mathbf{\theta}_{533} = \epsilon_{\kappa\lambda\mu\nu}\gamma^5\otimes\sigma^{\kappa\lambda}\otimes\sigma^{\mu\nu}\\
&\mathbf{\theta}_{333} = {\sigma_\kappa}^\lambda\otimes{\sigma_\lambda}^\mu\otimes{\sigma_\mu}^\kappa
\end{split}
\end{equation}

The interaction term generated by $\mathbf{\theta}_{ijk}$ and $\chi_l$ will be denoted as $I_{ijkl}$. 
Some of these will be identical, others will vanish due to symmetries. 
As the remaining ones are still quite numerous, they are listed by $\mathbf{\theta}$ below. 
The metod of computation is described in \ref{fierz6comp}.

For $\mathbf{\theta}_{111}$, the nontrivial identities are
\begin{equation}
\begin{split}
&\bar{\psi}\psi\bar{\psi}T^a\psi\bar{\psi}T^a\psi = \\
&=-\frac{(N + 4)\bar{\psi}\psi\bar{\psi}\psi\bar{\psi}\psi}{8N} 
- \frac{\bar{\psi}\psi\bar{\psi}\gamma_\mu\psi\bar{\psi}\gamma^\mu\psi}{2}
- \frac{\bar{\psi}\psi\bar{\psi}\sigma_{\mu\nu}\psi\bar{\psi}\sigma^{\mu\nu}\psi}{16} 
+ \frac{\bar{\psi}\psi\bar{\psi}\gamma^5\gamma_\mu\psi\bar{\psi}\gamma^5\gamma^\mu\psi}{8}
- \frac{\bar{\psi}\psi\bar{\psi}\gamma^5\psi\bar{\psi}\gamma^5\psi}{8}
\end{split}
\end{equation}

\begin{equation}
\begin{split}
&d^{abc}\bar{\psi}T^a\psi\bar{\psi}T^b\psi\bar{\psi}T^c\psi = 
\frac{N^2 + 12N + 32}{32N^2}\bar{\psi}\psi\bar{\psi}\psi\bar{\psi}\psi
+ \frac{3N + 12}{32N}\bar{\psi}\psi\bar{\psi}\gamma^5\psi\bar{\psi}\gamma^5\psi
+ \frac{9N + 12}{32N}\bar{\psi}\psi\bar{\psi}\gamma_\mu\psi\bar{\psi}\gamma^\mu\psi -\\
&-\frac{3N + 6}{3N}\bar{\psi}\psi\bar{\psi}\gamma^5\gamma_\mu\psi\bar{\psi}\gamma^5\gamma^\mu\psi
+\frac{13N + 12}{64N}\bar{\psi}\psi\bar{\psi}\sigma_{\mu\nu}\psi\bar{\psi}\sigma^{\mu\nu}\psi
-\frac{i}{64}\epsilon_{\kappa\lambda\mu\nu}\bar{\psi}\gamma^5\psi\bar{\psi}\sigma^{\kappa\lambda}\psi\bar{\psi}\sigma^{\mu\nu}\psi +\\
&+\frac{3}{32}\bar{\psi}\gamma^5\psi\bar{\psi}\gamma_\mu\psi\bar{\psi}\gamma^5\gamma^\mu\psi
-\frac{3}{64}\epsilon_{\kappa\lambda\mu\nu}\bar{\psi}\gamma^\kappa\psi\bar{\psi}\gamma^5\gamma^\lambda\psi\bar{\psi}\sigma^{\mu\nu}\psi
\end{split}
\end{equation}

$\theta_{122}$ yields
\begin{equation}
\begin{split}
&\bar{\psi}\psi\bar{\psi}T^a\gamma_\mu\psi\bar{\psi}T^a\gamma^\mu\psi = 
-\frac{1}{2}\bar{\psi}_{i_1}\psi_{i_1}\bar{\psi}_{i_2}\psi_{i_2}\bar{\psi}_{i_3}\psi_{i_3} +
\frac{1}{2}\bar{\psi}_{i_1}\psi_{i_1}\bar{\psi}_{i_2}\gamma^5\psi_{i_2}\bar{\psi}_{i_3}\gamma^5\psi_{i_3} +\\
 &+\frac{N - 2}{4N}\bar{\psi}_{i_1}\psi_{i_1}\bar{\psi}_{i_2}\gamma_\mu\psi_{i_2}\bar{\psi}_{i_3}\gamma^\mu\psi_{i_3}
 + \frac{1}{4}\bar{\psi}_{i_1}\psi_{i_1}\bar{\psi}_{i_2}\gamma^5\gamma_\mu\psi_{i_2}\bar{\psi}_{i_3}\gamma^5\gamma^\mu\psi_{i_3} 
\end{split}
\end{equation}

\begin{equation}
\begin{split}
&\bar{\psi}T^a\psi\bar{\psi}T^a\gamma_\mu\psi\bar{\psi}\gamma^\mu\psi 
= \bar{\psi}T^a\psi\bar{\psi}\gamma_\mu\psi\bar{\psi}T^a\gamma^\mu\psi =\\
&-\frac{N + 2}{4N}\bar{\psi}_{i_1}\psi_{i_1}\bar{\psi}_{i_2}\gamma_\mu\psi_{i_2}\bar{\psi}_{i_3}\gamma^\mu\psi_{i_3} 
+\frac{1}{8}\epsilon_{\kappa\lambda\mu\nu}\bar{\psi}_{i_1}\gamma^\kappa\psi_{i_1}\bar{\psi}_{i_2}\gamma^5\gamma^\lambda\psi_{i_2}\bar{\psi}_{i_3}\sigma^{\mu\nu}\psi_{i_3}
\end{split}
\end{equation}

\begin{equation}
\begin{split}
&d^{abc}\bar{\psi}T^a\psi\bar{\psi}T^b\gamma_\mu\psi\bar{\psi}T^c\gamma^\mu\psi = 
\frac{N + 4}{8N}\bar{\psi}\psi\bar{\psi}\psi\bar{\psi}\psi - \frac{N + 4}{8N}\bar{\psi}\psi\bar{\psi}\gamma^5\psi\bar{\psi}\gamma^5\psi
-\frac{3N^2 - 2N - 8}{8N^2}\bar{\psi}\psi\bar{\psi}\gamma_\mu\psi\bar{\psi}\gamma^\mu\psi -\\
&-\frac{3}{16}\bar{\psi}\gamma^5\psi\bar{\psi}\gamma_\mu\psi\bar{\psi}\gamma^5\gamma^\mu\psi
+\frac{i}{32}\epsilon_{\kappa\lambda\mu\nu}\bar{\psi}\gamma^5\psi\bar{\psi}\sigma^{\kappa\lambda}\psi\bar{\psi}\sigma^{\mu\nu}\psi
-\frac{7N + 4}{16N}\bar{\psi}\psi\bar{\psi}\gamma^5\gamma_\mu\psi\bar{\psi}\gamma^5\gamma^\mu\psi +\\
&+\frac{1}{8}\bar{\psi}\psi\bar{\psi}\sigma_{\mu\nu}\psi\bar{\psi}\sigma^{\mu\nu}\psi
-\frac{1}{4N}\epsilon_{\kappa\lambda\mu\nu}\bar{\psi}\gamma^\kappa\psi\bar{\psi}\gamma^5\gamma^\lambda\psi\bar{\psi}\sigma^{\mu\nu}\psi
\end{split}
\end{equation}

$\theta_{133}$ results in

\begin{equation}
\bar{\psi}\psi\bar{\psi}\sigma_{\mu\nu}T^a\psi\bar{\psi}\sigma^{\mu\nu}T^a\psi = 
-\frac{3}{2}\bar{\psi}\psi\bar{\psi}\psi\bar{\psi}\psi 
-\frac{3}{2}\bar{\psi}\psi\bar{\psi}\gamma^5\psi\bar{\psi}\gamma^5\psi
 + \frac{N-2}{4N}\bar{\psi}\psi\bar{\psi}\sigma_{\lambda\mu}\psi\bar{\psi}\sigma^{\lambda\mu}\psi
\end{equation}

\begin{equation}
\begin{split}
&\bar{\psi}T^a\psi\bar{\psi}\sigma_{\mu\nu}T^a\psi\bar{\psi}\sigma^{\mu\nu}\psi = 
\frac{i}{8}\epsilon_{\kappa\lambda\mu\nu}\bar{\psi}\gamma^5\psi\bar{\psi}\sigma^{\kappa\lambda}\psi\bar{\psi}\sigma^{\mu\nu}\psi - \\
&-\frac{N + 2}{4N}\bar{\psi}\psi\bar{\psi}\sigma_{\lambda\mu}\psi\bar{\psi}\sigma^{\lambda\mu}\psi
+\frac{1}{4}\epsilon_{\kappa\lambda\mu\nu}\bar{\psi}\gamma^{\kappa}\psi\bar{\psi}\gamma^5\gamma^{\lambda}\psi\bar{\psi}\sigma^{\mu\nu}\psi 
 - \frac{i}{4}\bar{\psi}{\sigma_\kappa}^\lambda\psi\bar{\psi}{\sigma_\lambda}^\mu\psi\bar{\psi}{\sigma_\mu}^\kappa\psi
\end{split}
\end{equation}

\begin{equation}
\begin{split}
&d^{abc}\bar{\psi}T^a\psi\bar{\psi}T^b\sigma_{\mu\nu}\psi\bar{\psi}T^c\sigma^{\mu\nu}\psi = 
\frac{3N + 12}{8N}\bar{\psi}\psi\bar{\psi}\psi\bar{\psi}\psi + \frac{9N + 12}{8N}\bar{\psi}\psi\bar{\psi}\gamma^5\psi\bar{\psi}\gamma^5\psi
+\frac{3}{8}\bar{\psi}\psi\bar{\psi}\gamma_\mu\psi\bar{\psi}\gamma^\mu\psi - \\
&- i\frac{N + 4}{16N}\epsilon_{\kappa\lambda\mu\nu}\bar{\psi}\gamma^5\psi\bar{\psi}\sigma^{\kappa\lambda}\psi\bar{\psi}\sigma^{\mu\nu}\psi
-\frac{9N^2 - 4N - 16}{16N^2}\bar{\psi}\psi\bar{\psi}\sigma_{\mu\nu}\psi\bar{\psi}\sigma^{\mu\nu}\psi +\\
&+\frac{N - 8}{16N}\epsilon_{\kappa\lambda\mu\nu}\bar{\psi}\gamma^\kappa\psi\bar{\psi}\gamma^5\gamma^\lambda\psi\bar{\psi}\sigma^{\mu\nu}\psi
-\frac{3}{16}\bar{\psi}\psi\bar{\psi}\gamma^5\gamma_\mu\psi\bar{\psi}\gamma^5\gamma^\mu\psi
+\frac{i}{2N}\bar{\psi}{\sigma_\kappa}^\lambda\psi\bar{\psi}{\sigma_\lambda}^\mu\psi\bar{\psi}{\sigma_\mu}^\kappa\psi.
\end{split}
\end{equation}

$\theta_{144}$ produces

\begin{equation}
\begin{split}
&\bar{\psi}\psi\bar{\psi}T^a\gamma^5\gamma_\mu\psi\bar{\psi}T^a\gamma^5\gamma^\mu\psi = 
\frac{1}{2}\bar{\psi}\psi\bar{\psi}\psi\bar{\psi}\psi
 -\frac{1}{2}\bar{\psi}\psi\bar{\psi}\gamma^5\psi\bar{\psi}\gamma^5\psi + \\
& +\frac{1}{4}\bar{\psi}\psi\bar{\psi}\gamma_\mu\psi\bar{\psi}\gamma^\mu\psi
+ \frac{N - 2}{4N}\bar{\psi}\psi\bar{\psi}\gamma^5\gamma_\mu\psi\bar{\psi}\gamma^5\gamma^\mu\psi
\end{split}
\end{equation}

\begin{equation}
\bar{\psi}T^a\psi\bar{\psi}T^a\gamma^5\gamma_\mu\psi\bar{\psi}\gamma^5\gamma^\mu\psi = 
-\frac{N + 2}{4N}\bar{\psi}\psi\bar{\psi}\gamma^5\gamma_\mu\psi\bar{\psi}\gamma^5\gamma^\mu\psi 
-\frac{1}{8}\epsilon_{\kappa\lambda\mu\nu}\bar{\psi}\gamma^\kappa\psi\bar{\psi}\gamma^5\gamma^\lambda\psi\bar{\psi}\sigma^{\mu\nu}\psi
\end{equation}

\begin{equation}
\begin{split}
&d^{abc}\bar{\psi}T^a\psi\bar{\psi}T^b\gamma^5\gamma_\mu\psi\bar{\psi}T^c\gamma^5\gamma^\mu\psi = 
-\frac{N+4}{8N}\bar{\psi}\psi\bar{\psi}\psi\bar{\psi}\psi + \frac{1}{8}\bar{\psi}\psi\bar{\psi}\gamma^5\psi\bar{\psi}\gamma^5\psi
-\frac{3}{16}\bar{\psi}\gamma^5\psi\bar{\psi}\gamma_\mu\psi\bar{\psi}\gamma^5\gamma^\mu\psi -\\
&-\frac{5N + 2}{8N}\bar{\psi}\psi\bar{\psi}\gamma_\mu\psi\bar{\psi}\gamma^\mu\psi
-\frac{i}{32}\epsilon_{\kappa\lambda\mu\nu}\bar{\psi}\gamma^5\psi\bar{\psi}\sigma^{\kappa\lambda}\psi\bar{\psi}\sigma^{\mu\nu}\psi
-\frac{3N^2 - 4N - 16}{16N^2}\bar{\psi}\psi\bar{\psi}\gamma^5\gamma_\mu\psi\bar{\psi}\gamma^5\gamma^\mu\psi -\\
&-\frac{1}{16}\bar{\psi}\psi\bar{\psi}\sigma_{\mu\nu}\psi\bar{\psi}\sigma^{\mu\nu}\psi
+\frac{1}{2N}\bar{\psi}\psi\bar{\psi}\gamma^5\psi\bar{\psi}\gamma^5\psi
+\frac{1}{4N}\epsilon_{\kappa\lambda\mu\nu}\bar{\psi}\gamma^\kappa\psi\bar{\psi}\gamma^5\gamma^\lambda\psi\bar{\psi}\sigma^{\mu\nu}\psi
\end{split}
\end{equation}

$\theta_{155}$ gives

\begin{equation}
\begin{split}
&\bar{\psi}\psi\bar{\psi}T^a\gamma^5\psi\bar{\psi}T^a\gamma^5\psi = 
-\frac{1}{8}\bar{\psi}\psi\bar{\psi}\psi\bar{\psi}\psi
-\frac{N+4}{8N}\bar{\psi}\psi_{i_1}\bar{\psi}\gamma^5\psi\bar{\psi}\gamma^5\psi -\\
&-\frac{1}{8}\bar{\psi}\psi\bar{\psi}\gamma^5\gamma_\mu\psi\bar{\psi}\gamma^5\gamma^\mu\psi
+\frac{1}{8}\bar{\psi}\psi\bar{\psi}\gamma_\mu\psi\bar{\psi}\gamma^\mu\psi 
-\frac{1}{16}\bar{\psi}\psi\bar{\psi}\sigma_{\mu\nu}\psi\bar{\psi}\sigma^{\mu\nu}\psi
\end{split}
\end{equation}

\begin{equation}
\bar{\psi}T^a\psi\bar{\psi}T^a\gamma^5\psi\bar{\psi}\gamma^5\psi = 
-\frac{N + 2}{4N}\bar{\psi}\psi\bar{\psi}\gamma^5\psi\bar{\psi}\gamma^5\psi
+\frac{i}{32}\epsilon_{\kappa\lambda\mu\nu}\bar{\psi}\gamma^5\psi\bar{\psi}\sigma^{\kappa\lambda}\psi\bar{\psi}\sigma^{\mu\nu}\psi
\end{equation}

\begin{equation}
\begin{split}
&d^{abc}\bar{\psi}T^a\psi\bar{\psi}T^b\gamma^5\psi\bar{\psi}T^c\gamma^5\psi = 
\frac{N+4}{32N}\bar{\psi}\psi\bar{\psi}\psi\bar{\psi}\psi
+\frac{3N^2 + 20N + 32}{32N^2}\bar{\psi}\psi\bar{\psi}\gamma^5\psi\bar{\psi}\gamma^5\psi -\\
&-i\frac{N + 4}{64N}\epsilon_{\kappa\lambda\mu\nu}\bar{\psi}\gamma^5\psi\bar{\psi}\sigma^{\kappa\lambda}\psi\bar{\psi}\sigma^{\mu\nu}\psi
-\frac{7N + 4}{32N}\bar{\psi}\psi\bar{\psi}\gamma_\mu\psi\bar{\psi}\gamma^\mu\psi
-\frac{3}{32}\bar{\psi}\gamma^5\psi\bar{\psi}\gamma_\mu\psi\bar{\psi}\gamma^5\gamma^\mu\psi +\\
&+\frac{N+1}{8N}\bar{\psi}\psi\bar{\psi}\gamma^5\gamma_\mu\psi\bar{\psi}\gamma^5\gamma^\mu\psi
+\frac{1}{64}\epsilon_{\kappa\lambda\mu\nu}\bar{\psi}\gamma^\kappa\psi\bar{\psi}\gamma^5\gamma^\lambda\psi\bar{\psi}\sigma^{\mu\nu}\psi
+\frac{13N + 4}{64N}\bar{\psi}\psi\bar{\psi}\sigma_{\mu\nu}\psi\bar{\psi}\sigma^{\mu\nu}\psi
\end{split}
\end{equation}

It is easy to see that, because of symmetries,
\begin{equation}
f^{abc}\bar{\psi}T^a\psi\bar{\psi}T^b\Gamma_{i\bar{\alpha}}\psi\bar{\psi}T^c\Gamma_i^{\bar{\alpha}}\psi = 0,
\end{equation}
where summation over $i$ is not implied.

$\theta_{523}$ results in

\begin{equation}
\bar{\psi}\gamma^5\psi\bar{\psi}T^a\gamma_\mu\psi\bar{\psi}T^a\gamma^5\gamma^\mu\psi = 
\frac{N - 1}{2N}\bar{\psi}\gamma^5\psi\bar{\psi}\gamma_\mu\psi\bar{\psi}\gamma^5\gamma^\mu\psi
\end{equation}

\begin{equation}
\bar{\psi}T^a\gamma^5\psi\bar{\psi}\gamma_\mu\psi\bar{\psi}T^a\gamma^5\gamma^\mu\psi =
\bar{\psi}T^a\gamma^5\psi\bar{\psi}T^a\gamma_\mu\psi\bar{\psi}\gamma^5\gamma^\mu\psi = 
-\frac{N + 2}{4N}\bar{\psi}\gamma^5\psi\bar{\psi}\gamma_\mu\psi\bar{\psi}\gamma^5\gamma^\mu\psi
\end{equation}

\begin{equation}
\begin{split}
&f^{abc}\bar{\psi}T^a\gamma^5\psi\bar{\psi}T^b\gamma_\mu\psi\bar{\psi}T^c\gamma^5\gamma^\mu\psi = 
\frac{i}{8}\bar{\psi}\psi\bar{\psi}\psi\bar{\psi}\psi 
+ \frac{3i}{8}\bar{\psi}\gamma^5\psi\bar{\psi}\gamma_\mu\psi\bar{\psi}\gamma^5\gamma^\mu\psi
- \frac{i}{8}\bar{\psi}\psi\bar{\psi}\gamma^5\psi\bar{\psi}\gamma^5\psi +\\
&+ \frac{7i}{64}\bar{\psi}\psi\bar{\psi}\gamma_\mu\psi\bar{\psi}\gamma^\mu\psi
+ \frac{i}{64}\epsilon_{\kappa\lambda\mu\nu}\bar{\psi}\gamma^\kappa\psi\bar{\psi}\gamma^5\gamma^\lambda\psi\bar{\psi}\sigma^{\mu\nu}\psi
+ \frac{17i}{64}\bar{\psi}\psi\bar{\psi}\gamma^5\gamma_\mu\psi\bar{\psi}\gamma^5\gamma^\mu\psi +\\
&+ \frac{i}{16}\bar{\psi}\psi\bar{\psi}\sigma_{\mu\nu}\psi\bar{\psi}\sigma^{\mu\nu}\psi
- \frac{1}{32}\epsilon_{\kappa\lambda\mu\nu}\bar{\psi}\gamma^5\psi\bar{\psi}\sigma^{\kappa\lambda}\psi\bar{\psi}\sigma^{\mu\nu}\psi
\end{split}
\end{equation}

\begin{equation}
\begin{split}
&d^{abc}\bar{\psi}T^a\gamma^5\psi\bar{\psi}T^b\gamma_\mu\psi\bar{\psi}T^c\gamma^5\gamma^\mu\psi = 
\frac{4 - N^2}{4N^2}\bar{\psi}\gamma^5\psi\bar{\psi}\gamma_\mu\psi\bar{\psi}\gamma^5\gamma^\mu\psi +\\
&+\frac{3}{64}\bar{\psi}\psi\bar{\psi}\gamma_\mu\psi\bar{\psi}\gamma^\mu\psi
-\frac{3}{64}\bar{\psi}\psi\bar{\psi}\gamma^5\gamma_\mu\psi\bar{\psi}\gamma^5\gamma^\mu\psi
+\frac{1}{64}\epsilon_{\kappa\lambda\mu\nu}\bar{\psi}\gamma^\kappa\bar{\psi}\gamma^5\gamma^\lambda\psi\bar{\psi}\sigma^{\mu\nu}\psi.
\end{split}
\end{equation}

$\theta_{223}$ yields

\begin{equation}
\bar{\psi}T^a\gamma_\mu\psi\bar{\psi}T^a\gamma_\nu\psi\bar{\psi}\sigma^{\mu\nu}\psi = 0
\end{equation}

\begin{equation}
\bar{\psi}\gamma_\mu\psi\bar{\psi}T^a\gamma_\nu\psi\bar{\psi}T^a\sigma^{\mu\nu}\psi = 
- \bar{\psi}T^a\gamma_\mu\psi\bar{\psi}\gamma_\nu\psi\bar{\psi}T^a\sigma^{\mu\nu}\psi = 
-\frac{3i}{4}\bar{\psi}\gamma^5\psi\bar{\psi}\gamma_\mu\psi\bar{\psi}\gamma^5\gamma^\mu\psi
\end{equation}

\begin{equation}
d^{abc}\bar{\psi}T^a\gamma_\mu\psi\bar{\psi}T^b\gamma_\nu\psi\bar{\psi}T^c\sigma^{\mu\nu}\psi = 0
\end{equation}

\begin{equation}
\begin{split}
&f^{abc}\bar{\psi}T^a\gamma_\mu\psi\bar{\psi}T^b\gamma_\nu\psi\bar{\psi}T^c\sigma^{\mu\nu}\psi = 
-\frac{3}{8}\bar{\psi}\psi\bar{\psi}\psi\bar{\psi}\psi
+ \frac{15}{16}\bar{\psi}\gamma^5\psi\bar{\psi}\gamma_\mu\psi\bar{\psi}\gamma^5\gamma^\mu\psi
+ \frac{3}{8}\bar{\psi}\psi\bar{\psi}\gamma^5\psi\bar{\psi}\gamma^5\psi - \\
&- \frac{9}{32}\bar{\psi}\psi\bar{\psi}\gamma_\mu\psi\bar{\psi}\gamma^\mu\psi
- \frac{9}{32}\bar{\psi}\psi\bar{\psi}\gamma^5\gamma_\mu\psi\bar{\psi}\gamma^5\gamma^\mu\psi
+ \frac{1}{16}\bar{\psi}\psi\bar{\psi}\sigma_{\mu\nu}\psi\bar{\psi}\sigma^{\mu\nu}\psi - \\
&- \frac{1}{16}\epsilon_{\kappa\lambda\mu\nu}\bar{\psi}\gamma^\kappa\psi\bar{\psi}\gamma^5\gamma^\lambda\psi\bar{\psi}\sigma^{\mu\nu}\psi
+ \frac{i}{32}\epsilon_{\kappa\lambda\mu\nu}\bar{\psi}\gamma^5\psi\bar{\psi}\sigma^{\kappa\lambda}\psi\bar{\psi}\sigma^{\mu\nu}\psi
\end{split}
\end{equation}

$\theta_{443}$ gives
\begin{equation}
\bar{\psi}T^a\gamma^5\gamma_\mu\psi\bar{\psi}T^a\gamma^5\gamma_\nu\psi\bar{\psi}\sigma^{\mu\nu}\psi = 0
\end{equation}

\begin{equation}
\bar{\psi}\gamma^5\gamma_\mu\psi\bar{\psi}T^a\gamma^5\gamma_\nu\psi\bar{\psi}T^a\sigma^{\mu\nu}\psi = 
- \bar{\psi}T^a\gamma^5\gamma_\mu\psi\bar{\psi}\gamma^5\gamma_\nu\psi\bar{\psi}T^a\sigma^{\mu\nu}\psi = 
-\frac{3i}{4}\bar{\psi}\gamma^5\psi\bar{\psi}\gamma_\mu\psi\bar{\psi}\gamma^5\gamma^\mu\psi
\end{equation}

\begin{equation}
d^{abc}\bar{\psi}T^a\gamma^5\gamma_\mu\psi\bar{\psi}T^b\gamma^5\gamma_\nu\psi\bar{\psi}T^c\sigma^{\mu\nu}\psi = 0
\end{equation}

\begin{equation}
\begin{split}
&f^{abc}\bar{\psi}T^a\gamma_\mu\psi\bar{\psi}T^b\gamma_\nu\psi\bar{\psi}T^c\sigma^{\mu\nu}\psi = 
\frac{3}{8}\bar{\psi}\psi\bar{\psi}\psi\bar{\psi}\psi
- \frac{3}{8}\bar{\psi}\psi\bar{\psi}\gamma^5\psi\bar{\psi}\gamma^5\psi
-\frac{i}{32}\epsilon_{\kappa\lambda\mu\nu}\bar{\psi}\gamma^5\psi\bar{\psi}\sigma^{\kappa\lambda}\psi\bar{\psi}\sigma^{\mu\nu}\psi +\\
&+\frac{3}{32}\bar{\psi}\psi\bar{\psi}\gamma_\mu\psi\bar{\psi}\gamma^\mu\psi
+\frac{3}{32}\bar{\psi}\psi\bar{\psi}\gamma^5\gamma_\mu\psi\bar{\psi}\gamma^5\gamma^\mu\psi
+\frac{1}{32}\epsilon_{\kappa\lambda\mu\nu}\bar{\psi}\gamma^\kappa\psi\bar{\psi}\gamma^5\gamma^\lambda\psi\bar{\psi}\sigma^{\mu\nu}\psi +\\
&+\frac{9}{16}\bar{\psi}\gamma^5\psi\bar{\psi}\gamma_\mu\psi\bar{\psi}\gamma^5\gamma^\mu\psi
-\frac{1}{16}\bar{\psi}\psi\bar{\psi}\sigma_{\mu\nu}\psi\bar{\psi}\sigma^{\mu\nu}\psi
\end{split}
\end{equation}

$\theta_{243}$ generates

\begin{equation}
\epsilon_{\kappa\lambda\mu\nu}\bar{\psi}\gamma^\kappa\psi\bar{\psi}T^a\gamma^5\gamma^\lambda\psi\bar{\psi}T^a\sigma^{\mu\nu}\psi = 
\frac{3}{2}\bar{\psi}\psi\bar{\psi}\gamma_\mu\psi\bar{\psi}\gamma^\mu\psi 
- \frac{N + 2}{4N}\epsilon_{\kappa\lambda\mu\nu}\bar{\psi}\gamma^\kappa\psi\bar{\psi}\gamma^5\gamma^\lambda\psi\bar{\psi}\sigma^{\mu\nu}\psi
\end{equation}

\begin{equation}
\begin{split}
&\epsilon_{\kappa\lambda\mu\nu}\bar{\psi}T^a\gamma^\kappa\psi\bar{\psi}T^a\gamma^5\gamma^\lambda\psi\bar{\psi}\sigma^{\mu\nu}\psi = 
\frac{i}{4}\epsilon_{\kappa\lambda\mu\nu}\bar{\psi}\gamma^5\psi\bar{\psi}\sigma^{\kappa\lambda}\psi\bar{\psi}\sigma^{\mu\nu}\psi +\\
&+\frac{1}{2}\bar{\psi}\psi\bar{\psi}\sigma_{\mu\nu}\psi\bar{\psi}\sigma^{\mu\nu}\psi 
- \frac{1}{2N}\epsilon_{\kappa\lambda\mu\nu}\bar{\psi}\gamma^\kappa\psi\bar{\psi}\gamma^5\gamma^\lambda\psi\bar{\psi}\sigma^{\mu\nu}\psi
\end{split}
\end{equation}

\begin{equation}
\epsilon_{\kappa\lambda\mu\nu}\bar{\psi}T^a\gamma^\kappa\psi\bar{\psi}\gamma^5\gamma^\lambda\psi\bar{\psi}T^a\sigma^{\mu\nu}\psi = 
- \frac{3}{2}\bar{\psi}\psi\bar{\psi}\gamma^5\gamma_\mu\psi\bar{\psi}\gamma^5\gamma^\mu\psi 
+ \frac{N + 2}{4N}\epsilon_{\kappa\lambda\mu\nu}\bar{\psi}\gamma^\kappa\psi\bar{\psi}\gamma^5\gamma^\lambda\psi\bar{\psi}\sigma^{\mu\nu}\psi
\end{equation}

\begin{equation}
\begin{split}
&f^{abc}\epsilon_{\kappa\lambda\mu\nu}\bar{\psi}T^a\gamma^\kappa\psi\bar{\psi}T^b\gamma^5\gamma^\lambda\psi\bar{\psi}T^c\sigma^{\mu\nu}\psi = 
-\frac{3i}{8}\bar{\psi}\gamma^5\psi\bar{\psi}\gamma_\mu\psi\bar{\psi}\gamma^5\gamma^\mu\psi
+\frac{3i}{4}\bar{\psi}\psi\bar{\psi}\gamma_\mu\psi\bar{\psi}\gamma^\mu\psi +\\
&+\frac{3i}{4}\bar{\psi}\psi\bar{\psi}\gamma^5\gamma_\mu\psi\bar{\psi}\gamma^5\gamma^\mu\psi
-\frac{i}{32}\epsilon_{\kappa\lambda\mu\nu}\bar{\psi}\gamma^\kappa\psi\bar{\psi}\gamma^5\gamma^\lambda\psi\bar{\psi}\sigma^{\mu\nu}\psi
-\frac{1}{16}\bar{\psi}{\sigma_\kappa}^\lambda\psi\bar{\psi}{\sigma_\lambda}^\mu\psi\bar{\psi}{\sigma_\mu}^\kappa\psi
\end{split}
\end{equation}

\begin{equation}
\begin{split}
&d^{abc}\epsilon_{\kappa\lambda\mu\nu}\bar{\psi}T^a\gamma^\kappa\psi\bar{\psi}T^b\gamma^5\gamma^\lambda\psi\bar{\psi}T^c\sigma^{\mu\nu}\psi = 
-\frac{3}{4}\bar{\psi}\psi\bar{\psi}\psi\bar{\psi}\psi
+\frac{3}{4}\bar{\psi}\psi\bar{\psi}\gamma^5\psi\bar{\psi}\gamma^5\psi 
+\frac{3}{4}\bar{\psi}\gamma^5\psi\bar{\psi}\gamma_\mu\psi\bar{\psi}\gamma^5\gamma^\mu\psi -\\
&-\frac{3N + 24}{16N}\bar{\psi}\psi\bar{\psi}\gamma_\mu\psi\bar{\psi}\gamma^\mu\psi
+\frac{N - 4}{8N}\bar{\psi}\psi\bar{\psi}\sigma_{\mu\nu}\psi\bar{\psi}\sigma^{\mu\nu}\psi
-\frac{3N - 24}{16N}\bar{\psi}\psi\bar{\psi}\gamma^5\gamma_\mu\psi\bar{\psi}\gamma^5\gamma^\mu\psi +\\
&+\frac{3N^2 + 32}{32N^2}\epsilon_{\kappa\lambda\mu\nu}\bar{\psi}\gamma^\kappa\psi\bar{\psi}\gamma^5\gamma^\lambda\psi\bar{\psi}\sigma^{\mu\nu}\psi 
+ i\frac{N - 4}{16N}\epsilon_{\kappa\lambda\mu\nu}\bar{\psi}\gamma^5\psi\bar{\psi}\sigma^{\kappa\lambda}\psi\bar{\psi}\sigma^{\mu\nu}\psi
+\frac{i}{16}\bar{\psi}{\sigma_\kappa}^\lambda\psi\bar{\psi}{\sigma_\lambda}^\mu\psi\bar{\psi}{\sigma_\mu}^\kappa\psi
\end{split}
\end{equation}

$\theta_{533}$ yields

\begin{equation}
\epsilon_{\kappa\lambda\mu\nu}\bar{\psi}\gamma^5\psi\bar{\psi}T^a\sigma^{\kappa\lambda}\psi\bar{\psi}T^a\sigma^{\mu\nu}\psi = 
-6i\bar{\psi}\psi\bar{\psi}\gamma^5\psi\bar{\psi}\gamma^5\psi
+\frac{N - 2}{4N}\epsilon_{\kappa\lambda\mu\nu}\bar{\psi}\gamma^5\psi\bar{\psi}\sigma^{\kappa\lambda}\psi\bar{\psi}\sigma^{\mu\nu}\psi
\end{equation}

\begin{equation}
\begin{split}
&\epsilon_{\kappa\lambda\mu\nu}\bar{\psi}T^a\gamma^5\psi\bar{\psi}T^a\sigma^{\kappa\lambda}\psi\bar{\psi}\sigma^{\mu\nu}\psi =
-\frac{N+2}{4N}\epsilon_{\kappa\lambda\mu\nu}\bar{\psi}_{i_1}\gamma^5\psi_{i_1}\bar{\psi}_{i_2}\sigma^{\kappa\lambda}\psi_{i_2}\bar{\psi}_{i_3}\sigma^{\mu\nu}\psi_{i_3} + \\
&-\frac{i}{2}\bar{\psi}_{i_1}\psi_{i_1}\bar{\psi}_{i_2}\sigma_{\mu\nu}\psi_{i_2}\bar{\psi}_{i_3}\sigma^{\mu\nu}\psi_{i_3} 
 - \frac{i}{2}\epsilon_{\kappa\lambda\mu\nu}\bar{\psi}_{i_1}\gamma^\kappa\psi_{i_1}\bar{\psi}_{i_2}\gamma^5\gamma^\lambda\psi_{i_2}\bar{\psi}_{i_3}\sigma^{\mu\nu}\psi_{i_3} 
 + \frac{1}{2}\bar{\psi}_{i_1}{\sigma_\kappa}^\lambda\psi_{i_1}\bar{\psi}_{i_2}{\sigma_\lambda}^\mu\psi_{i_2}\bar{\psi}_{i_3}{\sigma_\mu}^\kappa\psi_{i_3}
\end{split}
\end{equation}

\begin{equation}
f^{abc}\epsilon_{\kappa\lambda\mu\nu}\bar{\psi}T^a\gamma^5\psi\bar{\psi}T^b\sigma^{\kappa\lambda}\psi\bar{\psi}T^c\sigma^{\mu\nu}\psi = 0
\end{equation}

\begin{equation}
\begin{split}
&d^{abc}\epsilon_{\kappa\lambda\mu\nu}\bar{\psi}T^a\gamma^5\psi\bar{\psi}T^b\sigma^{\kappa\lambda}\psi\bar{\psi}T^c\sigma^{\mu\nu}\psi = 
i\frac{9N + 12}{2N}\bar{\psi}\psi\bar{\psi}\gamma^5\psi\bar{\psi}\gamma^5\psi 
+ \frac{3i}{4}\bar{\psi}\psi\bar{\psi}\gamma^5\gamma_\mu\psi\bar{\psi}\gamma^5\gamma^\mu\psi
+\frac{3i}{4}\bar{\psi}\psi\bar{\psi}\psi\bar{\psi}\psi +\\
&+i\frac{N + 2}{2N}\bar{\psi}\psi\bar{\psi}\sigma_{\mu\nu}\psi\bar{\psi}\sigma^{\mu\nu}\psi
-\frac{N^2 - N - 4}{4N^2}\epsilon_{\kappa\lambda\mu\nu}\bar{\psi}\gamma^5\psi\bar{\psi}\sigma^{\kappa\lambda}\psi\bar{\psi}\sigma^{\mu\nu}\psi
-\frac{3i}{4}\bar{\psi}\psi\bar{\psi}\gamma_\mu\psi\bar{\psi}\gamma^\mu\psi +\\
&+\frac{i}{N}\epsilon_{\kappa\lambda\mu\nu}\bar{\psi}\gamma^\kappa\psi\bar{\psi}\gamma^5\gamma^\lambda\psi\bar{\psi}\sigma^{\mu\nu}\psi
-\frac{1}{N}\bar{\psi}{\sigma_\kappa}^\lambda\psi\bar{\psi}{\sigma_\lambda}^\mu\psi\bar{\psi}{\sigma_\mu}^\kappa\psi
\end{split}
\end{equation}

Finally, $\theta_{333}$ gives

\begin{equation}
\begin{split}
&\bar{\psi}T^a{\sigma_\kappa}^\lambda\psi\bar{\psi}T^a{\sigma_\lambda}^\mu\psi\bar{\psi}{\sigma_\mu}^\kappa\psi =
\bar{\psi}{\sigma_\kappa}^\lambda\psi\bar{\psi}T^a{\sigma_\lambda}^\mu\psi\bar{\psi}T^a{\sigma_\mu}^\kappa\psi = \\
&=\bar{\psi}T^a{\sigma_\kappa}^\lambda\psi\bar{\psi}{\sigma_\lambda}^\mu\psi\bar{\psi}T^a{\sigma_\mu}^\kappa\psi =
-\frac{1}{2N}\bar{\psi}{\sigma_\kappa}^\lambda\psi\bar{\psi}{\sigma_\lambda}^\mu\psi\bar{\psi}{\sigma_\mu}^\kappa\psi
\end{split}
\end{equation}

\begin{equation}
f^{abc}\bar{\psi}T^a{\sigma_\kappa}^\lambda\psi\bar{\psi}T^b{\sigma_\lambda}^\mu\psi\bar{\psi}T^c{\sigma_\mu}^\kappa\psi = 
\frac{1}{16}\epsilon_{\kappa\lambda\mu\nu}\bar{\psi}\gamma^5\psi\bar{\psi}\sigma^{\kappa\lambda}\psi\bar{\psi}\sigma^{\mu\nu}\psi
-\frac{9}{4}\bar{\psi}\psi\bar{\psi}\gamma^5\psi\bar{\psi}\gamma^5\psi 
-\frac{3}{4}\bar{\psi}\psi\bar{\psi}\psi\bar{\psi}\psi
\end{equation}

\begin{equation}
d^{abc}\bar{\psi}T^a{\sigma_\kappa}^\lambda\psi\bar{\psi}T^b{\sigma_\lambda}^\mu\psi\bar{\psi}T^c{\sigma_\mu}^\kappa\psi = 
\frac{i}{8}\bar{\psi}\psi\bar{\psi}\sigma_{\mu\nu}\psi\bar{\psi}\sigma^{\mu\nu}\psi 
+ \frac{1}{N^2}\bar{\psi}{\sigma_\kappa}^\lambda\psi\bar{\psi}{\sigma_\lambda}^\mu\psi\bar{\psi}{\sigma_\mu}^\kappa\psi
\end{equation}

\subsection{NJL model with three-fermion fields} \label{trifermion}

Completeness of the basis \eqref{intro/gammabasis} implies that third-order monomials 
\begin{equation}
\mu_{21\,\bar{i}}^{\bar{\alpha}} = \psi\bar{\psi}\tau_{\bar{i}}^{\bar{\alpha}}\psi
\end{equation}
form a complete set of 2-1 products of $\psi$ and $\bar{\psi}$ (the numbers 2 and 1 here denote the powers of $\psi$ and $\bar{\psi}$ respectively), and their conjugates
\begin{equation}
\mu_{12\,\bar{i}}^{\bar{\alpha}} = \bar{\psi}\tau_{\bar{i}}^{\bar{\alpha}}\psi\bar{\psi}
\end{equation}
define a complete set of 1-2 products.

It is possible to write, along the lines of \ref{oneloopterms},

\begin{equation}
\begin{split}
Z\left[\left\{J\right\}\right] &= 
\funint{\bar{\psi},\psi}\exp\left\{i\int\left[\bar{\psi}i\widehat{\partial}\psi + 
G_i\bar{\psi}\Gamma_{i\,\bar{\alpha}}\psi\bar{\psi}\Gamma^{\bar{\alpha}}_i\psi +\right.\right.\\
&\left.\left.+ \bar{J}_1\psi +\bar{\psi}J_1 
+ 2\lambda_{11\bar{i}}J_{11\bar{i}\bar{\alpha}}\bar{\psi}\tau_{\bar{i}}^{\bar{\alpha}}\psi
+ \lambda_{11\bar{i}}J_{11\bar{i}\bar{\alpha}}J^{\bar{\alpha}}_{2\bar{i}} +\right.\right.\\
&\left.\left.
+ \left(\varkappa_{21\bar{i}}\bar{J}_{21\bar{i}\bar{\alpha}}\tau_{\bar{i}}^{\bar{\alpha}}\psi + \lambda_{21\bar{i}}\psi\bar{\psi}\tau_{\bar{i}}^{\bar{\alpha}}\psi\right)
+\left(\varkappa_{21\bar{i}}\bar{\psi}\tau_{\bar{i}}^{\bar{\alpha}} + \lambda_{21\bar{i}}\bar{\psi}\tau_{\bar{i}}^{\bar{\alpha}}\psi\bar{\psi}\right)J_{21\bar{i}\bar{\alpha}}
+\varkappa_{21\bar{i}}\bar{J}_{21\bar{i}\bar{\alpha}}J_{21\bar{i}}^{\bar{\alpha}}
\right]d^4x\right\}.
\end{split} \label{trifermion/genfunc}
\end{equation}

After inserting

\begin{equation}
\begin{split}
\text{const} &= \funint{\Sigma}\exp\left\{
-i\int\left[\frac{\lambda_{11\bar{i}}^{-\frac{1}{2}}\Sigma_{11\bar{i}\bar{\alpha}}}{2} 
- \lambda_{11\bar{i}}^\frac{1}{2}\bar{\psi}\tau_{\bar{i}\bar{\alpha}}\psi
 - \lambda_{11\bar{i}}^\frac{1}{2}J_{11\bar{i}\bar{\alpha}}\right]
 \left[\frac{\lambda_{11\bar{i}}^{-\frac{1}{2}}\Sigma_{11\bar{i}}^{\bar{\alpha}}}{2} 
- \lambda_{11\bar{i}}^\frac{1}{2}\bar{\psi}\tau_{\bar{i}}^{\bar{\alpha}}\psi
 - \lambda_{11\bar{i}}^\frac{1}{2}J_{11\bar{i}}^{\bar{\alpha}}\right]d^4x -\right.\\
 &\left.-i\int\left[
\varkappa_{21\bar{i}}^{-\frac{1}{2}}\bar{\Sigma}_{21\bar{i}\bar{\alpha}} 
- \left(\varkappa_{21\bar{i}}^\frac{1}{2}\bar{\psi}\tau_{\bar{i}\bar{\alpha}} + \lambda_{21\bar{i}}^\frac{1}{2}\bar{\psi}\tau_{\bar{i}\bar{\alpha}}\psi\bar{\psi}\right)
 - \varkappa_{21\bar{i}}^\frac{1}{2}\bar{J}_{21\bar{i}\bar{\alpha}}
 \right]\times\right.\\
 &\left.\times\left[
\varkappa_{21\bar{i}}^{-\frac{1}{2}}\Sigma_{21\bar{i}}^{\bar{\alpha}} 
- \left(\varkappa_{21\bar{i}}^\frac{1}{2}\tau_{\bar{i}}^{\bar{\alpha}}\psi + \lambda_{21\bar{i}}^\frac{1}{2}\psi\bar{\psi}\tau_{\bar{i}}^{\bar{\alpha}}\psi\right)
 - \varkappa_{21\bar{i}}^\frac{1}{2}J_{21\bar{i}}^{\bar{\alpha}}
 \right]d^4x\right\}
\end{split}
\label{trifermion/transform}
\end{equation}

the generating functional becomes

\begin{equation}
\begin{split}
\tilde{Z}\left[\left\{J\right\}\right] &= 
\funint{\bar{\psi},\psi,\Sigma}\exp\left\{i\int\left[\bar{\psi}\left(i\widehat{\partial} + \widehat{\Sigma}_{11} 
- \lambda_{21\bar{i}}\tau_{\bar{i}\bar{\alpha}}\tau_{\bar{i}}^{\bar{\alpha}}\right)\psi
+ \bar{J}_1\psi +\bar{\psi}J_1
- \frac{\Sigma_{11\bar{i}\bar{\alpha}}\Sigma_{11\bar{i}}^{\bar{\alpha}}}{4\lambda_{11\bar{i}}}
+ J_{11\bar{i}\bar{\alpha}}\Sigma_{11\bar{i}}^{\bar{\alpha}} - \right.\right.\\
&\left.\left.
-\varkappa_{21\bar{i}}^{-1}\bar{\Sigma}_{21\bar{i}\bar{\alpha}}\Sigma_{21\bar{i}}^{\bar{\alpha}}
+ \bar{\Sigma}_{21\bar{i}\bar{\alpha}}\left(\tau_{\bar{i}}^{\bar{\alpha}}\psi + g_{21\bar{i}}\psi\bar{\psi}\tau_{\bar{i}}^{\bar{\alpha}}\psi\right)
+ \left(\bar{\psi}\tau_{\bar{i}}^{\bar{\alpha}} + g_{21\bar{i}}\bar{\psi}\tau_{\bar{i}}^{\bar{\alpha}}\psi\bar{\psi}\right)\Sigma_{21\bar{i}\bar{\alpha}} +\right.\right.\\
&\left.\left.+ \bar{\Sigma}_{21\bar{i}\bar{\alpha}}J_{21\bar{i}}^{\bar{\alpha}}
+ \bar{J}_{21\bar{i}\bar{\alpha}}\Sigma_{21\bar{i}}^{\bar{\alpha}}
\right]d^4x\right\}\\
g_{21\bar{i}} &= \sqrt{\frac{\lambda_{21\bar{i}}}{\varkappa_{21\bar{i}}}}
\end{split} \label{trifermion/compfields}
\end{equation}

The $\varkappa$ and $\lambda$ coefficients are assumed to be positive real numbers here.
The composite fields $\Sigma_{21\bar{i}}^{\bar{\alpha}}$ have various half-integer spins. 
These fields can couple to both linear and cubic combinations of $\psi$ and $\bar{\psi}$. 
Proper scaling of the fields and their sources reduces the number of related coupling constants 
to two families $\varkappa_{21\bar{i}}$ and $\lambda_{21\bar{i}}$.

Fourth- and sixth-order interaction terms must cancel independently. This imposes two sets of relations on the constants.
However, only a small subset of sixth-order Fierz identities is involved: the ones for the terms of type
\begin{equation}
\Lgr_{6\bar{i}} = \bar{\psi}\psi\bar{\psi}\tau_{\bar{i}\bar{\alpha}}\psi\bar{\psi}\tau_{\bar{i}}^{\bar{\alpha}}\psi
\end{equation}
(summation over $\bar{i}$ not implied) which are essentially the same as fourth-order Fierz identities.
Thus, the complete set of cancellation conditions is

\begin{equation}
\begin{split}
	&\begin{bmatrix}
		\lambda_{21\,10}\\\lambda_{21\,20}\\\lambda_{21\,30}\\\lambda_{21\,40}\\\lambda_{21\,50}
	\end{bmatrix} + 
	\begin{bmatrix}
		-\frac{N + 4}{8N}&-\frac{1}{2}&-\frac{3}{2}&\frac{1}{2}&-\frac{1}{8}\\
		-\frac{1}{2}&\frac{N - 2}{4N}&0&\frac{1}{4}&\frac{1}{8}\\
		-\frac{1}{16}&0&\frac{N - 2}{4N}&0&-\frac{1}{16}\\
		\frac{1}{8}&\frac{1}{4}&0&\frac{N - 2}{4N}&-\frac{1}{8}\\
		-\frac{1}{8}&\frac{1}{2}&-\frac{3}{2}&-\frac{1}{2}&-\frac{N + 4}{8N}
	\end{bmatrix}
	\begin{bmatrix}
		\lambda_{21\,11}\\\lambda_{21\,21}\\\lambda_{21\,31}\\\lambda_{21\,41}\\\lambda_{21\,51}		
	\end{bmatrix} = 0\\
	&\begin{bmatrix}
		\theta_{10}\\\theta_{20}\\\theta_{30}\\\theta_{40}\\\theta_{50}
	\end{bmatrix} + 
	\begin{bmatrix}
		-\frac{N + 4}{8N}&-\frac{1}{2}&-\frac{3}{2}&\frac{1}{2}&-\frac{1}{8}\\
		-\frac{1}{2}&\frac{N - 2}{4N}&0&\frac{1}{4}&\frac{1}{8}\\
		-\frac{1}{16}&0&\frac{N - 2}{4N}&0&-\frac{1}{16}\\
		\frac{1}{8}&\frac{1}{4}&0&\frac{N - 2}{4N}&-\frac{1}{8}\\
		-\frac{1}{8}&\frac{1}{2}&-\frac{3}{2}&-\frac{1}{2}&-\frac{N + 4}{8N}
	\end{bmatrix}
	\begin{bmatrix}
		\theta_{11}\\\theta_{21}\\\theta_{31}\\\theta_{41}\\\theta_{51}		
	\end{bmatrix} = 
	\begin{bmatrix}
		G_1\\G_2\\G_3\\G_4\\G_5		
	\end{bmatrix}\\
	&\theta_{\bar{i}} = \lambda_{11\,\bar{i}} + \sqrt{\varkappa_{21\,\bar{i}}\lambda_{21\,\bar{i}}}
\end{split} \label{trifermion/cancellation}
\end{equation}

It is easy to see that addition of three-fermion fields makes $\lambda_{11\bar{i}}$ essentially free parameters.

It is also possible to introduce 3-0 composite fields dual to 

\begin{equation}
\mu_{30\,ijk} = \psi_i\otimes\psi_j\otimes\psi_k
\end{equation}

Such tensor products form a reducible representation of the model's symmetry group and its decomposition
into irreducible components depends on color number $N$. The $N=3$ case is of special interest in the QCD context.

It is also worth noting that one-loop approximation of the effective action will not contain a proper kinetic term for three-fermion fields:
if all fields are split into background and deviation parts, the quadratic in deviations terms of the composite fields lagrangian will contain
the background fields $\bar{\Sigma}_{21\bar{i}\bar{\alpha}}$ and $\Sigma_{21\bar{i}\bar{\alpha}}$ coupled to $\Psi$, $\bar{\Psi}$ only.

Three-fermion fields will acquire kinetic term from renormalization of higher-loop diagrams, the first one being sunset

\begin{equation}
	\begin{tikzpicture}[baseline=(a)]
		\begin{feynman}
			\vertex(a);
			\vertex [right=1cm of a] (b);
			\vertex [left=0.5cm of a] (o) {\(\bar{\Sigma}_{21}\)};
			\vertex [right=0.5cm of b] (i) {\(\Sigma_{21}\)};
			\diagram* {
				(a) --[fermion, very thick] (o),
				(b) --[fermion, half right] (a),
				(a) --[fermion] (b),
				(a) --[fermion, half right] (b),
				(i) --[fermion, very thick] (b),
			};
		\end{feynman}
	\end{tikzpicture}
\end{equation}

Constructing effective action for QCD with two- and three-fermion fields up to two loops is a task for further research.

\section{Conclusions} \label{conclusions}

The conclusions that can be drawn from the study above fall into two broad groups. 

The first is relevant to NJL and related models. 
Non-uniqueness of bosonization due to Fierz identities implies that 
particularities of fermion interaction in such models do not manifest themselves
at one-loop level. This also sheds some light on the relation between NJL and QCD:
the two theories produce similar meson dynamics up to one loop.

Interestingly, if three-fermion composite fields are considered,
one-loop approximation is not enough to give them physical dynamics 
and it is necessary to build effective action up to two loops at least, where the interaction would matter. 

Non-uniqueness of bosonization is also relevant to the subquark model of fundamental interactions
proposed by Terazawa, Chikashige, and Akama \cite{Terazawa1977}.
The one-loop finite renormalization Lagrangian \eqref{onelooprenorm/renormterms} has vector boson and higgsoid sectors 
and, as discussed above, it does so regardless of the subtleties of underlying subquark interactions.

The biggest obstacle to such subquark theories is, of course, the need to produce emergent gauge symmetry. 
The formalism developed here naturally produces almost gauge invariant effective actions, where the only symmetry breaking terms are the vector boson mass terms.
Interestingly, if symmetry conditions are imposed, these terms cannot be cancelled exactly by renormalization.
Nevertheless, there is enough freedom to make observable vector boson masses arbitrarily small.

What \eqref{onelooprenorm/renormterms} also predicts is plenty of interesting nonlinear phenomena.
In particular, the effective field equations should have 't Hooft-Polyakov monopole-like solutions.
However, if vector field has mass, albeit small, the monopole energy becomes infinite. 
This would neatly explain why monopoles are not observed.

Another type of nonlinear field equation solutions are kinks. 
Their existence is determined by $\chi$-symmetry \eqref{oneloopterms/chisymmetry} under which scalar composite fields change signs.
This means that if 
\begin{equation}
\begin{split}
\Sigma_{10} = M = \text{const}\\
\Sigma_{11}^a = M^a = \text{const}
\end{split}
\end{equation}
is a solution of the effective field equations, there must also be an opposite-sign solution, and there must also be a solution that 
approaches $M^{(a)}$ and $-M^{(a)}$ asymptotically in the opposite spatial directions. 
This is another evidence that Wightman's vacuum homogeneity assumption is incompatible with effective action formalism
since the latter naturally produces inhomogeneous ground states.

The second group of conclusions comprises the considerations relevant to the general formalism of quantum field theory.
Application of R-operation to effective action formalism is obviously one such result.

Applicability of composite fields formalism to any field theory raises the question whether it preserves renormalizability.
In the traditional R-operation formalism, renormalizability arises from extra requirements, such as gauge invariance, being imposed on renormalization constants
and making the number of independent parameters finite. 
The question is whether this persists when composite fields are added. 
If not, this would mean blurring of the line between renormalizable and non-renormalizable theories.
In the latter case, renormalizablity would be the property of some sectors of the theory only.

But the main result here is the relation between bosonization and quantum BBGKY hierarchy. Bosonization, far from being an ad hoc recipe, is demonstrated to be a special case of the latter.
Physically, it means that nontrivial quantum states and processes involving them are under consideration. Successful application of bosonization to NJL model suggests that in some theories such nontrivial quantum states may comprise the actual low-energy sector. Interestingly, presence of various condensates suggests that QCD also belongs to this class (see e.g. \cite{Ioffe2002}), 
which makes studying composite fields dynamics in QCD a promising research program.

\appendix

\section{Computation of 6-order Fierz identities} \label{fierz6comp}

Reduction of a generic term \eqref{fierz6/genterms} to ta sum of canonical terms \eqref{fierz6/gammaterms} is a two-stage process.
In the first stage, the formulas \eqref{fierz6/chiforms} are used to express it in color indices. This results in
two types of nontrivial terms
\begin{equation}
\begin{split}
\Lgr_{1ijk}^{\bar{\alpha}\bar{\beta}\bar{\gamma}} = 
\bar{\psi}_{l_1}\Gamma^{\bar{\alpha}}_i\psi_{l_1}
\bar{\psi}_{l_2}\Gamma^{\bar{\beta}}_j\psi_{l_3}
\bar{\psi}_{l_3}\Gamma^{\bar{\gamma}}_k\psi_{l_2}\\
\Lgr_{2ijk}^{\bar{\alpha}\bar{\beta}\bar{\gamma}} = 
\bar{\psi}_{l_1}\Gamma^{\bar{\alpha}}_i\psi_{l_2}
\bar{\psi}_{l_2}\Gamma^{\bar{\beta}}_j\psi_{l_3}
\bar{\psi}_{l_3}\Gamma^{\bar{\gamma}}_k\psi_{l_1}\\
\end{split}
\end{equation}

Then \eqref{fierz4/cliffordexpansion} is used to convert $\Lgr_2$ to $\Lgr_1$ and $\Lgr_1$s to canonical terms.
All that remains is to simplify products of $\gamma$-matrices. This was done with the help of \textit{dirac} software
package \cite{Kutnii2023}. The resulting identities are listed below.

\begin{equation}
\begin{split}
&\bar{\psi}_{i_1}\psi_{i_1}\bar{\psi}_{i_2}\psi_{i_3}\bar{\psi}_{i_3}\psi_{i_2} = \bar{\psi}_{i_1}\psi_{i_2}\bar{\psi}_{i_2}\psi_{i_1}\bar{\psi}_{i_3}\psi_{i_3} =
-\frac{1}{4}\bar{\psi}_{i_1}\psi_{i_1}\bar{\psi}_{i_2}\psi_{i_2}\bar{\psi}_{i_3}\psi_{i_3} 
-\frac{1}{4}\bar{\psi}_{i_1}\psi_{i_1}\bar{\psi}_{i_2}\gamma^5\psi_{i_2}\bar{\psi}_{i_3}\gamma^5\psi_{i_3} +\\
&+ \frac{1}{4}\bar{\psi}_{i_1}\psi_{i_1}\bar{\psi}_{i_2}\gamma^5\gamma_\mu\psi_{i_2}\bar{\psi}_{i_3}\gamma^5\gamma^\mu\psi_{i_3} 
-\frac{1}{4}\bar{\psi}_{i_1}\psi_{i_1}\bar{\psi}_{i_2}\gamma_\mu\psi_{i_2}\bar{\psi}_{i_3}\gamma^\mu\psi_{i_3}
 -\frac{1}{8}\bar{\psi}_{i_1}\psi_{i_1}\bar{\psi}_{i_2}\sigma_{\lambda\mu}\psi_{i_2}\bar{\psi}_{i_3}\sigma^{\lambda\mu}\psi_{i_3}
\end{split}
\end{equation}

\begin{equation}
\begin{split}
&\bar{\psi}_{i_1}\psi_{i_2}\bar{\psi}_{i_2}\psi_{i_3}\bar{\psi}_{i_3}\psi_{i_1} =
\frac{1}{16}\bar{\psi}_{i_1}\psi_{i_1}\bar{\psi}_{i_2}\psi_{i_2}\bar{\psi}_{i_3}\psi_{i_3}
+\frac{3}{16}\bar{\psi}_{i_1}\psi_{i_1}\bar{\psi}_{i_2}\gamma^5\psi_{i_2}\bar{\psi}_{i_3}\gamma^5\psi_{i_3} -\\
 &- \frac{i}{32}\epsilon_{\kappa\lambda\mu\nu}\bar{\psi}_{i_1}\gamma^5\psi_{i_1}\bar{\psi}_{i_2}\sigma^{\kappa\lambda}\psi_{i_2}\bar{\psi}_{i_3}\sigma^{\mu\nu}\psi_{i_3}
+\frac{9}{16}\bar{\psi}_{i_1}\psi_{i_1}\bar{\psi}_{i_2}\gamma_\mu\psi_{i_2}\bar{\psi}_{i_3}\gamma^\mu\psi_{i_3} 
+ \frac{3}{16}\bar{\psi}_{i_1}\gamma^5\psi_{i_1}\bar{\psi}_{i_2}\gamma_\mu\psi_{i_2}\bar{\psi}_{i_3}\gamma^5\gamma^\mu\psi_{i_3}  -\\
&-\frac{3}{8}\bar{\psi}_{i_1}\psi_{i_1}\bar{\psi}_{i_2}\gamma^5\gamma_\mu\psi_{i_2}\bar{\psi}_{i_3}\gamma^5\gamma^\mu\psi_{i_3}  
- \frac{3}{32}\epsilon_{\kappa\lambda\mu\nu}\bar{\psi}_{i_1}\gamma^\kappa\psi_{i_1}\bar{\psi}_{i_2}\gamma^5\gamma^\lambda\psi_{i_2}\bar{\psi}_{i_3}\sigma^{\mu\nu}\psi_{i_3}
+\frac{13}{32}\bar{\psi}_{i_1}\psi_{i_1}\bar{\psi}_{i_2}\sigma_{\lambda\mu}\psi_{i_2}\bar{\psi}_{i_3}\sigma^{\lambda\mu}\psi_{i_3}
\end{split}
\end{equation}

\begin{equation}
\begin{split}
&\bar{\psi}_{i_1}\psi_{i_1}\bar{\psi}_{i_2}\gamma_\mu\psi_{i_3}\bar{\psi}_{i_3}\gamma^\mu\psi_{i_2} = 
-\bar{\psi}_{i_1}\psi_{i_1}\bar{\psi}_{i_2}\psi_{i_2}\bar{\psi}_{i_3}\psi_{i_3} +
\bar{\psi}_{i_1}\psi_{i_1}\bar{\psi}_{i_2}\gamma^5\psi_{i_2}\bar{\psi}_{i_3}\gamma^5\psi_{i_3} +\\
 &+\frac{1}{2}\bar{\psi}_{i_1}\psi_{i_1}\bar{\psi}_{i_2}\gamma_\mu\psi_{i_2}\bar{\psi}_{i_3}\gamma^\mu\psi_{i_3}
 + \frac{1}{2}\bar{\psi}_{i_1}\psi_{i_1}\bar{\psi}_{i_2}\gamma^5\gamma_\mu\psi_{i_2}\bar{\psi}_{i_3}\gamma^5\gamma^\mu\psi_{i_3} 
\end{split}
\end{equation}

\begin{equation}
\bar{\psi}_{i_1}\psi_{i_2}\bar{\psi}_{i_2}\gamma_\mu\psi_{i_1}\bar{\psi}_{i_3}\gamma^\mu\psi_{i_3} = 
-\frac{1}{2}\bar{\psi}_{i_1}\psi_{i_1}\bar{\psi}_{i_2}\gamma_\mu\psi_{i_2}\bar{\psi}_{i_3}\gamma^\mu\psi_{i_3} 
+\frac{1}{4}\epsilon_{\kappa\lambda\mu\nu}\bar{\psi}_{i_1}\gamma^\kappa\psi_{i_1}\bar{\psi}_{i_2}\gamma^5\gamma^\lambda\psi_{i_2}\bar{\psi}_{i_3}\sigma^{\mu\nu}\psi_{i_3}
\end{equation}

\begin{equation}
\begin{split}
&\bar{\psi}_{i_1}\psi_{i_2}\bar{\psi}_{i_2}\gamma_\mu\psi_{i_3}\bar{\psi}_{i_3}\gamma^\mu\psi_{i_1} = 
\frac{1}{4}\bar{\psi}_{i_1}\psi_{i_1}\bar{\psi}_{i_2}\psi_{i_2}\bar{\psi}_{i_3}\psi_{i_3}
-\frac{1}{4}\bar{\psi}_{i_1}\psi_{i_1}\bar{\psi}_{i_2}\gamma^5\psi_{i_2}\bar{\psi}_{i_3}\gamma^5\psi_{i_3} - \\
& -\frac{3}{4}\bar{\psi}_{i_1}\psi_{i_1}\bar{\psi}_{i_2}\gamma_\mu\psi_{i_2}\bar{\psi}_{i_3}\gamma^\mu\psi_{i_3}
-\frac{3}{8}\bar{\psi}_{i_1}\gamma^5\psi_{i_1}\bar{\psi}_{i_2}\gamma_\mu\psi_{i_2}\bar{\psi}_{i_3}\gamma^5\gamma^\mu\psi_{i_3}
+ \frac{i}{16}\epsilon_{\kappa\lambda\mu\nu}\bar{\psi}_{i_1}\gamma^5\psi_{i_1}\bar{\psi}_{i_2}\sigma^{\kappa\lambda}\psi_{i_2}\bar{\psi}_{i_3}\sigma^{\mu\nu}\psi_{i_3}  -\\
&-\frac{7}{8}\bar{\psi}_{i_1}\psi_{i_1}\bar{\psi}_{i_2}\gamma^5\gamma_\mu\psi_{i_2}\bar{\psi}_{i_3}\gamma^5\gamma^\mu\psi_{i_3} 
+ \frac{1}{8}\bar{\psi}_{i_1}\psi_{i_1}\bar{\psi}_{i_2}\sigma_{\lambda\mu}\psi_{i_2}\bar{\psi}_{i_3}\sigma^{\lambda\mu}\psi_{i_3} 
\end{split}
\end{equation}

\begin{equation}
\begin{split}
\bar{\psi}_{i_1}\psi_{i_1}\bar{\psi}_{i_2}\sigma_{\lambda\mu}\psi_{i_3}\bar{\psi}_{i_3}\sigma^{\lambda\mu}\psi_{i_2} = &
-3\bar{\psi}_{i_1}\psi_{i_1}\bar{\psi}_{i_2}\psi_{i_2}\bar{\psi}_{i_3}\psi_{i_3} -\\
&-3\bar{\psi}_{i_1}\psi_{i_1}\bar{\psi}_{i_2}\gamma^5\psi_{i_2}\bar{\psi}_{i_3}\gamma^5\psi_{i_3}
 + \frac{1}{2}\bar{\psi}_{i_1}\psi_{i_1}\bar{\psi}_{i_2}\sigma_{\lambda\mu}\psi_{i_2}\bar{\psi}_{i_3}\sigma^{\lambda\mu}\psi_{i_3}
\end{split}
\end{equation}

\begin{equation}
\begin{split}
&\bar{\psi}_{i_1}\psi_{i_2}\bar{\psi}_{i_2}\sigma_{\sigma_{1}\sigma_{2}}\psi_{i_1}\bar{\psi}_{i_3}\sigma^{\sigma_{1}\sigma_{2}}\psi_{i_3} = 
\frac{i}{4}\epsilon_{\kappa\lambda\mu\nu}\bar{\psi}_{i_1}\gamma^5\psi_{i_1}\bar{\psi}_{i_2}\sigma^{\kappa\lambda}\psi_{i_2}\bar{\psi}_{i_3}\sigma^{\mu\nu}\psi_{i_3} - \\
&-\frac{1}{2}\bar{\psi}_{i_1}\psi_{i_1}\bar{\psi}_{i_2}\sigma_{\lambda\mu}\psi_{i_2}\bar{\psi}_{i_3}\sigma^{\lambda\mu}\psi_{i_3}
+\frac{1}{2}\epsilon_{\kappa\lambda\mu\nu}\bar{\psi}_{i_1}\gamma^{\kappa}\psi_{i_1}\bar{\psi}_{i_2}\gamma^5\gamma^{\lambda}\psi_{i_2}\bar{\psi}_{i_3}\sigma^{\mu\nu}\psi_{i_3} 
 - \frac{i}{2}\bar{\psi}_{i_1}{\sigma_\kappa}^\lambda\psi_{i_1}\bar{\psi}_{i_2}{\sigma_\lambda}^\mu\psi_{i_2}\bar{\psi}_{i_3}{\sigma_\mu}^\kappa\psi_{i_3}
\end{split}
\end{equation}

\begin{equation}
\begin{split}
&\bar{\psi}_{i_1}\psi_{i_2}\bar{\psi}_{i_2}\sigma_{\sigma_{1}\sigma_{2}}\psi_{i_3}\bar{\psi}_{i_3}\sigma^{\sigma_{1}\sigma_{2}}\psi_{i_1} =\frac{3}{4}\bar{\psi}_{i_1}\psi_{i_1}\bar{\psi}_{i_2}\psi_{i_2}\bar{\psi}_{i_3}\psi_{i_3} 
+ \frac{9}{4}\bar{\psi}_{i_1}\psi_{i_1}\bar{\psi}_{i_2}\gamma^5\psi_{i_2}\bar{\psi}_{i_3}\gamma^5\psi_{i_3} + \\
& + \frac{3}{4}\bar{\psi}_{i_1}\psi_{i_1}\bar{\psi}_{i_2}\gamma_\mu\psi_{i_2}\bar{\psi}_{i_3}\gamma^\mu\psi_{i_3}  
- \frac{i}{8}\epsilon_{\kappa\lambda\mu\nu}\bar{\psi}_{i_1}\gamma^5\psi_{i_1}\bar{\psi}_{i_2}\sigma^{\kappa\lambda}\psi_{i_2}\bar{\psi}_{i_3}\sigma^{\mu\nu}\psi_{i_3} -\frac{9}{8}\bar{\psi}_{i_1}\psi_{i_1}\bar{\psi}_{i_2}\sigma_{\kappa\lambda}\psi_{i_2}\bar{\psi}_{i_3}\sigma^{\mu\nu}\psi_{i_3} +\\
&+ \frac{1}{8}\epsilon_{\kappa\lambda\mu\nu}\bar{\psi}_{i_1}\gamma^\kappa\psi_{i_1}\bar{\psi}_{i_2}\gamma^5\gamma^\lambda\psi_{i_2}\bar{\psi}_{i_3}\sigma^{\mu\nu}\psi_{i_3}  
-\frac{3}{4}\bar{\psi}_{i_1}\psi_{i_1}\bar{\psi}_{i_2}\gamma^5\gamma_\mu\psi_{i_2}\bar{\psi}_{i_3}\gamma^5\gamma^\mu\psi_{i_3} 
\end{split}
\end{equation}

\begin{equation}
\begin{split}
&\bar{\psi}_{i_1}\psi_{i_1}\bar{\psi}_{i_2}\gamma^5\gamma_\mu\psi_{i_3}\bar{\psi}_{i_3}\gamma^5\gamma^\mu\psi_{i_2} =
\bar{\psi}_{i_1}\psi_{i_1}\bar{\psi}_{i_2}\psi_{i_2}\bar{\psi}_{i_3}\psi_{i_3}
 -\bar{\psi}_{i_1}\psi_{i_1}\bar{\psi}_{i_2}\gamma^5\psi_{i_2}\bar{\psi}_{i_3}\gamma^5\psi_{i_3} + \\
& +\frac{1}{2}\bar{\psi}_{i_1}\psi_{i_1}\bar{\psi}_{i_2}\gamma_\mu\psi_{i_2}\bar{\psi}_{i_3}\gamma^\mu\psi_{i_3}
+ \frac{1}{2}\bar{\psi}_{i_1}\psi_{i_1}\bar{\psi}_{i_2}\gamma^5\gamma_\mu\psi_{i_2}\bar{\psi}_{i_3}\gamma^5\gamma^\mu\psi_{i_3} 
\end{split}
\end{equation}

\begin{equation}
\begin{split}
&\bar{\psi}_{i_1}\psi_{i_2}\bar{\psi}_{i_2}\gamma^5\gamma_\mu\psi_{i_1}\bar{\psi}_{i_3}\gamma^5\gamma^\mu\psi_{i_3} = 
-\frac{1}{2}\bar{\psi}_{i_1}\psi_{i_1}\bar{\psi}_{i_2}\gamma^5\gamma_\mu\psi_{i_2}\bar{\psi}_{i_3}\gamma^5\gamma^\mu\psi_{i_3} 
-\frac{1}{4}\epsilon_{\kappa\lambda\mu\nu}\bar{\psi}_{i_1}\gamma^\kappa\psi_{i_1}\bar{\psi}_{i_2}\gamma^5\gamma^\lambda\psi_{i_2}\bar{\psi}_{i_3}\sigma^{\mu\nu}\psi_{i_3}
\end{split}
\end{equation}

\begin{equation}
\begin{split}
&\bar{\psi}_{i_1}\psi_{i_2}\bar{\psi}_{i_2}\gamma^5\gamma_\mu\psi_{i_3}\bar{\psi}_{i_3}\gamma^5\gamma^\mu\psi_{i_1} = 
-\frac{1}{4}\bar{\psi}_{i_1}\psi_{i_1}\bar{\psi}_{i_2}\psi_{i_2}\bar{\psi}_{i_3}\psi_{i_3} 
+ \frac{1}{4}\bar{\psi}_{i_1}\psi_{i_1}\bar{\psi}_{i_2}\gamma^5\psi_{i_2}\bar{\psi}_{i_3}\gamma^5\psi_{i_3}  -\\
&  -\frac{3}{8}\bar{\psi}_{i_1}\gamma^5\psi_{i_1}\bar{\psi}_{i_2}\gamma_\mu\psi_{i_2}\bar{\psi}_{i_3}\gamma^5\gamma^\mu\psi_{i_3} 
-\frac{5}{4}\bar{\psi}_{i_1}\psi_{i_1}\bar{\psi}_{i_2}\gamma_\mu\psi_{i_2}\bar{\psi}_{i_3}\gamma^\mu\psi_{i_3}  
- \frac{i}{16}\epsilon_{\kappa\lambda\mu\nu}\bar{\psi}_{i_1}\gamma^5\psi_{i_1}\bar{\psi}_{i_2}\sigma^{\kappa\lambda}\psi_{i_2}\bar{\psi}_{i_3}\sigma^{\mu\nu}\psi_{i_3}  -\\
&-\frac{3}{8}\bar{\psi}_{i_1}\psi_{i_1}\bar{\psi}_{i_2}\gamma^5\gamma_\mu\psi_{i_2}\bar{\psi}_{i_3}\gamma^5\gamma^\mu\psi_{i_3} 
-\frac{1}{8}\bar{\psi}_{i_1}\psi_{i_1}\bar{\psi}_{i_2}\sigma_{\mu\nu}\psi_{i_2}\bar{\psi}_{i_3}\sigma^{\mu\nu}\psi_{i_3} 
\end{split}
\end{equation}

\begin{equation}
\begin{split}
&\bar{\psi}_{i_1}\psi_{i_1}\bar{\psi}_{i_2}\gamma^5\psi_{i_3}\bar{\psi}_{i_3}\gamma^5\psi_{i_2} =
-\frac{1}{4}\bar{\psi}_{i_1}\psi_{i_1}\bar{\psi}_{i_2}\psi_{i_2}\bar{\psi}_{i_3}\psi_{i_3}
-\frac{1}{4}\bar{\psi}_{i_1}\psi_{i_1}\bar{\psi}_{i_2}\gamma^5\psi_{i_2}\bar{\psi}_{i_3}\gamma^5\psi_{i_3} -\\
&-\frac{1}{4}\bar{\psi}_{i_1}\psi_{i_1}\bar{\psi}_{i_2}\gamma^5\gamma_\mu\psi_{i_2}\bar{\psi}_{i_3}\gamma^5\gamma^\mu\psi_{i_3}
+\frac{1}{4}\bar{\psi}_{i_1}\psi_{i_1}\bar{\psi}_{i_2}\gamma_\mu\psi_{i_2}\bar{\psi}_{i_3}\gamma^\mu\psi_{i_3} 
-\frac{1}{8}\bar{\psi}_{i_1}\psi_{i_1}\bar{\psi}_{i_2}\sigma_{\mu\nu}\psi_{i_2}\bar{\psi}_{i_3}\sigma^{\mu\nu}\psi_{i_3}
\end{split}
\end{equation}

\begin{equation}
\begin{split}
&\bar{\psi}_{i_1}\psi_{i_2}\bar{\psi}_{i_2}\gamma^5\psi_{i_1}\bar{\psi}_{i_3}\gamma^5\psi_{i_3} = 
-\frac{1}{2}\bar{\psi}_{i_1}\psi_{i_1}\bar{\psi}_{i_2}\gamma^5\psi_{i_2}\bar{\psi}_{i_3}\gamma^5\psi_{i_3}
+\frac{i}{16}\epsilon_{\kappa\lambda\mu\nu}\bar{\psi}_{i_1}\gamma^5\psi_{i_1}\bar{\psi}_{i_2}\sigma^{\kappa\lambda}\psi_{i_2}\bar{\psi}_{i_3}\sigma^{\mu\nu}\psi_{i_3}
\end{split}
\end{equation}

\begin{equation}
\begin{split}
&\bar{\psi}_{i_1}\psi_{i_2}\bar{\psi}_{i_2}\gamma^5\psi_{i_3}\bar{\psi}_{i_3}\gamma^5\psi_{i_1} = 
\frac{1}{16}\bar{\psi}_{i_1}\psi_{i_1}\bar{\psi}_{i_2}\psi_{i_2}\bar{\psi}_{i_3}\psi_{i_3}
+\frac{3}{16}\bar{\psi}_{i_1}\psi_{i_1}\bar{\psi}_{i_2}\gamma^5\psi_{i_2}\bar{\psi}_{i_3}\gamma^5\psi_{i_3} - \\
&- \frac{I}{32}\epsilon_{\kappa\lambda\mu\nu}\bar{\psi}_{i_1}\gamma^5\psi_{i_1}\bar{\psi}_{i_2}\sigma^{\kappa\lambda}\psi_{i_2}\bar{\psi}_{i_3}\sigma^{\mu\nu}\psi_{i_3} 
-\frac{7}{16}\bar{\psi}_{i_1}\psi_{i_1}\bar{\psi}_{i_2}\gamma_\mu\psi_{i_2}\bar{\psi}_{i_3}\gamma^\mu\psi_{i_3} -\\
&-\frac{3}{16}\bar{\psi}_{i_1}\gamma^5\psi_{i_1}\bar{\psi}_{i_2}\gamma_\mu\psi_{i_2}\bar{\psi}_{i_3}\gamma^5\gamma^\mu\psi_{i_3} 
+\frac{1}{4}\bar{\psi}_{i_1}\psi_{i_1}\bar{\psi}_{i_2}\gamma^5\gamma_\mu\psi_{i_2}\bar{\psi}_{i_3}\gamma^5\gamma^\mu\psi_{i_3} +\\
&+ \frac{1}{32}\epsilon_{\kappa\lambda\mu\nu}\bar{\psi}_{i_1}\gamma^\kappa\psi_{i_1}\bar{\psi}_{i_2}\gamma^5\gamma^\lambda\psi_{i_2}\bar{\psi}_{i_3}\sigma^{\mu\nu}\psi_{i_3}
+ \frac{13}{32}\bar{\psi}_{i_1}\psi_{i_1}\bar{\psi}_{i_2}\sigma_{\mu\nu}\psi_{i_2}\bar{\psi}_{i_3}\sigma^{\mu\nu}\psi_{i_3}
\end{split}
\end{equation}

\begin{equation}
\begin{split}
&\bar{\psi}_{i_1}\gamma^5\psi_{i_1}\bar{\psi}_{i_2}\gamma_{\kappa}\psi_{i_3}\bar{\psi}_{i_3}\gamma^5\gamma^{\kappa}\psi_{i_2} = 
\bar{\psi}_{i_1}\gamma^5\psi_{i_1}\bar{\psi}_{i_2}\gamma_\mu\psi_{i_2}\bar{\psi}_{i_3}\gamma^5\gamma^\mu\psi_{i_3}
\end{split}
\end{equation}

\begin{equation}
\begin{split}
&\bar{\psi}_{i_1}\gamma^5\psi_{i_2}\bar{\psi}_{i_2}\gamma_\mu\psi_{i_1}\bar{\psi}_{i_3}\gamma^5\gamma^\mu\psi_{i_3} = 
-\frac{1}{2}\bar{\psi}_{i_1}\gamma^5\psi_{i_1}\bar{\psi}_{i_2}\gamma_\mu\psi_{i_2}\bar{\psi}_{i_3}\gamma^5\gamma^\mu\psi_{i_3} 
\end{split}
\end{equation}

\begin{equation}
\begin{split}
&\bar{\psi}_{i_1}\gamma^5\psi_{i_2}\bar{\psi}_{i_2}\gamma_\mu\psi_{i_3}\bar{\psi}_{i_3}\gamma^5\gamma^\mu\psi_{i_1} = 
-\frac{1}{4}\bar{\psi}_{i_1}\psi_{i_1}\bar{\psi}_{i_2}\psi_{i_2}\bar{\psi}_{i_3}\psi_{i_3}
+ \frac{1}{4}\bar{\psi}_{i_1}\psi_{i_1}\bar{\psi}_{i_2}\gamma^5\psi_{i_2}\bar{\psi}_{i_3}\gamma^5\psi_{i_3}  -\\
& - \frac{i}{16}\epsilon_{\kappa\lambda\mu\nu}\bar{\psi}_{i_1}\gamma^5\psi_{i_1}\bar{\psi}_{i_2}\sigma^{\kappa\lambda}\psi_{i_2}\bar{\psi}_{i_3}\sigma^{\mu\nu}\psi_{i_3}
-\frac{1}{8}\bar{\psi}_{i_1}\psi_{i_1}\bar{\psi}_{i_2}\sigma_{\mu\nu}\psi_{i_2}\bar{\psi}_{i_3}\sigma^{\mu\nu}\psi_{i_3}  -\\
&-\frac{5}{4}\bar{\psi}_{i_1}\gamma^5\psi_{i_1}\bar{\psi}_{i_2}\gamma_\mu\psi_{i_2}\bar{\psi}_{i_3}\gamma^5\gamma^\mu\psi_{i_3}  
-\frac{5}{8}\bar{\psi}_{i_1}\psi_{i_1}\bar{\psi}_{i_2}\gamma^5\gamma_\mu\psi_{i_2}\bar{\psi}_{i_3}\gamma^5\gamma^\mu\psi_{i_3}
-\frac{1}{8}\bar{\psi}_{i_1}\psi_{i_1}\bar{\psi}_{i_2}\gamma_\mu\psi_{i_2}\bar{\psi}_{i_3}\gamma^\mu\psi_{i_3} 
\end{split}
\end{equation}

\begin{equation}
\begin{split}
&\bar{\psi}_{i_1}\gamma_{\kappa}\psi_{i_1}\bar{\psi}_{i_2}\gamma^5\psi_{i_3}\bar{\psi}_{i_3}\gamma^5\gamma^{\kappa}\psi_{i_2} = 
-\frac{1}{2}\bar{\psi}_{i_1}\gamma^5\psi_{i_1}\bar{\psi}_{i_2}\gamma_{\mu}\psi_{i_2}\bar{\psi}_{i_3}\gamma^5\gamma^\mu\psi_{i_3}
\end{split}
\end{equation}

\begin{equation}
\begin{split}
&\bar{\psi}_{i_1}\gamma_{\kappa}\psi_{i_2}\bar{\psi}_{i_2}\gamma^5\psi_{i_3}\bar{\psi}_{i_3}\gamma^5\gamma^{\kappa}\psi_{i_1} = 
\frac{1}{4}\bar{\psi}_{i_1}\psi_{i_1}\bar{\psi}_{i_2}\psi_{i_2}\bar{\psi}_{i_3}\psi_{i_3}
+\frac{1}{4}\bar{\psi}_{i_1}\gamma^5\psi_{i_1}\bar{\psi}_{i_2}\gamma_\mu\psi_{i_2}\bar{\psi}_{i_3}\gamma^5\gamma^\mu\psi_{i_3} -\\
&-\frac{1}{4}\bar{\psi}_{i_1}\psi_{i_1}\bar{\psi}_{i_2}\gamma^5\psi_{i_2}\bar{\psi}_{i_3}\gamma^5\psi_{i_3} 
+\frac{5}{16}\bar{\psi}_{i_1}\psi_{i_1}\bar{\psi}_{i_2}\gamma_\mu\psi_{i_2}\bar{\psi}_{i_3}\gamma^\mu\psi_{i_3} 
+ \frac{1}{16}\epsilon_{\kappa\lambda\mu\nu}\bar{\psi}_{i_1}\gamma^\kappa\psi_{i_1}\bar{\psi}_{i_2}\gamma^5\gamma^\lambda\psi_{i_2}\bar{\psi}_{i_3}\sigma^{\mu\nu}\psi_{i_3} + \\
&+\frac{7}{16}\bar{\psi}_{i_1}\psi_{i_1}\bar{\psi}_{i_2}\gamma^5\gamma_\mu\psi_{i_2}\bar{\psi}_{i_3}\gamma^5\gamma^\mu\psi_{i_3} 
+ \frac{1}{8}\bar{\psi}_{i_1}\psi_{i_1}\bar{\psi}_{i_2}\sigma_{\mu\nu}\psi_{i_2}\bar{\psi}_{i_3}\sigma^{\mu\nu}\psi_{i_3} 
+ \frac{i}{16}\epsilon_{\kappa\lambda\mu\nu}\bar{\psi}_{i_1}\gamma^5\psi_{i_1}\bar{\psi}_{i_2}\sigma^{\kappa\lambda}\psi_{i_2}\bar{\psi}_{i_3}\sigma^{\mu\nu}\psi_{i_3}
\end{split}
\end{equation}

\begin{equation}
\begin{split}
&\bar{\psi}_{i_1}\gamma_{\lambda}\psi_{i_1}\bar{\psi}_{i_2}\gamma_{\mu}\psi_{i_3}\bar{\psi}_{i_3}\sigma^{\lambda\mu}\psi_{i_2} =  - \frac{3i}{2}\bar{\psi}_{i_1}\gamma^5\psi_{i_1}\bar{\psi}_{i_2}\gamma_\mu\psi_{i_2}\bar{\psi}_{i_3}\gamma^5\gamma^\mu\psi_{i_3} 
\end{split}
\end{equation}

\begin{equation}
\begin{split}
&\bar{\psi}_{i_1}\gamma_{\lambda}\psi_{i_2}\bar{\psi}_{i_2}\gamma_{\mu}\psi_{i_1}\bar{\psi}_{i_3}\sigma^{\lambda\mu}\psi_{i_3} = 0
\end{split}
\end{equation}

\begin{equation}
\begin{split}
&\bar{\psi}_{i_1}\gamma_{\lambda}\psi_{i_2}\bar{\psi}_{i_2}\gamma_{\mu}\psi_{i_3}\bar{\psi}_{i_3}\sigma^{\lambda\mu}\psi_{i_1} =  
- \frac{3i}{4}\bar{\psi}_{i_1}\psi_{i_1}\bar{\psi}_{i_2}\psi_{i_2}\bar{\psi}_{i_3}\psi_{i_3} + \\
&+\frac{15i}{8}\bar{\psi}_{i_1}\gamma^5\psi_{i_1}\bar{\psi}_{i_2}\gamma_\mu\psi_{i_2}\bar{\psi}_{i_3}\gamma^5\gamma^\mu\psi_{i_3} 
+ \frac{3i}{4}\bar{\psi}_{i_1}\psi_{i_1}\bar{\psi}_{i_2}\gamma^5\psi_{i_2}\bar{\psi}_{i_3}\gamma^5\psi_{i_3}  -\\
& - \frac{9i}{16}\bar{\psi}_{i_1}\psi_{i_1}\bar{\psi}_{i_2}\gamma_\mu\psi_{i_2}\bar{\psi}_{i_3}\gamma^\mu\psi_{i_3} 
- \frac{i}{8}\epsilon_{\kappa\lambda\mu\nu}\bar{\psi}_{i_1}\gamma^\kappa\psi_{i_1}\bar{\psi}_{i_2}\gamma^5\gamma^\lambda\psi_{i_2}\bar{\psi}_{i_3}\sigma^{\mu\nu}\psi_{i_3} -\\
& - \frac{9i}{16}\bar{\psi}_{i_1}\psi_{i_1}\bar{\psi}_{i_2}\gamma^5\gamma_\mu\psi_{i_2}\bar{\psi}_{i_3}\gamma^5\gamma^\mu\psi_{i_3} 
+ \frac{i}{8}\bar{\psi}_{i_1}\psi_{i_1}\bar{\psi}_{i_2}\sigma_{\mu\nu}\psi_{i_2}\bar{\psi}_{i_3}\sigma^{\mu\nu}\psi_{i_3} 
- \frac{1}{16}\epsilon_{\kappa\lambda\mu\nu}\bar{\psi}_{i_1}\gamma^5\psi_{i_1}\bar{\psi}_{i_2}\sigma^{\kappa\lambda}\psi_{i_2}\bar{\psi}_{i_3}\sigma^{\mu\nu}\psi_{i_3}
\end{split}
\end{equation}

\begin{equation}
\begin{split}
&\bar{\psi}_{i_1}\gamma^5\gamma_{\lambda}\psi_{i_1}\bar{\psi}_{i_2}\gamma^5\gamma_{\mu}\psi_{i_3}\bar{\psi}_{i_3}\sigma^{\lambda\mu}\psi_{i_2} =  - \frac{3i}{2}\bar{\psi}_{i_1}\gamma^5\psi_{i_1}\bar{\psi}_{i_2}\gamma_\mu\psi_{i_2}\bar{\psi}_{i_3}\gamma^5\gamma^\mu\psi_{i_3}
\end{split}
\end{equation}

\begin{equation}
\begin{split}
&\bar{\psi}_{i_1}\gamma^5\gamma_{\lambda}\psi_{i_2}\bar{\psi}_{i_2}\gamma^5\gamma_{\mu}\psi_{i_1}\bar{\psi}_{i_3}\sigma^{\lambda\mu}\psi_{i_3} =  0
\end{split}
\end{equation}

\begin{equation}
\begin{split}
&\bar{\psi}_{i_1}\gamma^5\gamma_{\lambda}\psi_{i_2}\bar{\psi}_{i_2}\gamma^5\gamma_{\mu}\psi_{i_3}\bar{\psi}_{i_3}\sigma^{\lambda\mu}\psi_{i_1} = 
\frac{3i}{4}\bar{\psi}_{i_1}\psi_{i_1}\bar{\psi}_{i_2}\psi_{i_2}\bar{\psi}_{i_3}\psi_{i_3}  -\\
& - \frac{3i}{4}\bar{\psi}_{i_1}\psi_{i_1}\bar{\psi}_{i_2}\gamma^5\psi_{i_2}\bar{\psi}_{i_3}\gamma^5\psi_{i_3}  
+\frac{1}{16}\epsilon_{\kappa\lambda\mu\nu}\bar{\psi}_{i_1}\gamma^5\psi_{i_1}\bar{\psi}_{i_2}\sigma^{\kappa\lambda}\psi_{i_2}\bar{\psi}_{i_3}\sigma^{\mu\nu}\psi_{i_3} + \\
&+\frac{3i}{16}\bar{\psi}_{i_1}\psi_{i_1}\bar{\psi}_{i_2}\gamma_\mu\psi_{i_2}\bar{\psi}_{i_3}\gamma^\mu\psi_{i_3}   
+ \frac{i}{16}\epsilon_{\kappa\lambda\mu\nu}\bar{\psi}_{i_1}\gamma^\kappa\psi_{i_1}\bar{\psi}_{i_2}\gamma^5\gamma^\lambda\psi_{i_2}\bar{\psi}_{i_3}\sigma^{\mu\nu}\psi_{i_3} + \\
&+\frac{9i}{8}\bar{\psi}_{i_1}\gamma^5\psi_{i_1}\bar{\psi}_{i_2}\gamma_\mu\psi_{i_2}\bar{\psi}_{i_3}\gamma^5\gamma^\mu\psi_{i_3} 
- \frac{i}{8}\bar{\psi}_{i_1}\psi_{i_1}\bar{\psi}_{i_2}\sigma_{\mu\nu}\psi_{i_2}\bar{\psi}_{i_3}\sigma^{\mu\nu}\psi_{i_3} 
+ \frac{3i}{16}\bar{\psi}_{i_1}\psi_{i_1}\bar{\psi}_{i_2}\gamma^5\gamma_\mu\psi_{i_2}\bar{\psi}_{i_3}\gamma^5\gamma^\mu\psi_{i_3}
\end{split}
\end{equation}

\begin{equation}
\begin{split}
&\epsilon_{\kappa\lambda\mu\nu}\bar{\psi}_{i_1}\gamma^{\kappa}\psi_{i_1}\bar{\psi}_{i_2}\gamma^5\gamma^{\lambda}\psi_{i_3}\bar{\psi}_{i_3}\sigma^{\mu\nu}\psi_{i_2} = 
3\bar{\psi}_{i_1}\psi_{i_1}\bar{\psi}_{i_2}\gamma_\mu\psi_{i_2}\bar{\psi}_{i_3}\gamma^\mu\psi_{i_3}
-\frac{1}{2}\epsilon_{\kappa\lambda\mu\nu}\bar{\psi}_{i_1}\gamma^{\kappa}\psi_{i_1}\bar{\psi}_{i_2}\gamma^5\gamma^\lambda\psi_{i_2}\bar{\psi}_{i_3}\sigma^{\mu\nu}\psi_{i_3}
\end{split}
\end{equation}

\begin{equation}
\begin{split}
&\epsilon_{\kappa\lambda\mu\nu}\bar{\psi}_{i_1}\gamma^{\kappa}\psi_{i_2}\bar{\psi}_{i_2}\gamma^5\gamma^{\lambda}\psi_{i_1}\bar{\psi}_{i_3}\sigma^{\mu\nu}\psi_{i_3} = 
\frac{i}{2}\epsilon_{\kappa\lambda\mu\nu}\bar{\psi}_{i_1}\gamma^5\psi_{i_1}\bar{\psi}_{i_2}\sigma^{\kappa\lambda}\psi_{i_2}\bar{\psi}_{i_3}\sigma^{\mu\nu}\psi_{i_3}
+ \bar{\psi}_{i_1}\psi_{i_1}\bar{\psi}_{i_2}\sigma_{\mu\nu}\psi_{i_2}\bar{\psi}_{i_3}\sigma^{\mu\nu}\psi_{i_3}
\end{split}
\end{equation}

\begin{equation}
\begin{split}
&\epsilon_{\kappa\lambda\mu\nu}\bar{\psi}_{i_1}\gamma^{\kappa}\psi_{i_2}\bar{\psi}_{i_2}\gamma^5\gamma^{\lambda}\psi_{i_3}\bar{\psi}_{i_3}\sigma^{\mu\nu}\psi_{i_1} = 
-\frac{3}{2}\bar{\psi}_{i_1}\psi_{i_1}\bar{\psi}_{i_2}\psi_{i_2}\bar{\psi}_{i_3}\psi_{i_3} + \\
&+\frac{3}{2}\bar{\psi}_{i_1}\psi_{i_1}\bar{\psi}_{i_2}\gamma^5\psi_{i_2}\bar{\psi}_{i_3}\gamma^5\psi_{i_3} 
+ \frac{9}{4}\bar{\psi}_{i_1}\gamma^5\psi_{i_1}\bar{\psi}_{i_2}\gamma_\mu\psi_{i_2}\bar{\psi}_{i_3}\gamma^5\gamma^\mu\psi_{i_3}  -\\
&-\frac{15}{8}\bar{\psi}_{i_1}\psi_{i_1}\bar{\psi}_{i_2}\gamma_\mu\psi_{i_2}\bar{\psi}_{i_3}\gamma^\mu\psi_{i_3} 
+ \frac{1}{4}\bar{\psi}_{i_1}\psi_{i_1}\bar{\psi}_{i_2}\sigma_{\mu\nu}\psi_{i_2}\bar{\psi}_{i_3}\sigma^{\mu\nu}\psi_{i_3}  
-\frac{15}{8}\bar{\psi}_{i_1}\psi_{i_1}\bar{\psi}_{i_2}\gamma^5\gamma_\mu\psi_{i_2}\bar{\psi}_{i_3}\gamma^5\gamma^\mu\psi_{i_3} + \\
&+\frac{1}{4}\epsilon_{\kappa\lambda\mu\nu}\bar{\psi}_{i_1}\gamma^\kappa\psi_{i_1}\bar{\psi}_{i_2}\gamma^5\gamma^\lambda\psi_{i_2}\bar{\psi}_{i_3}\sigma^{\mu\nu}\psi_{i_3} + 
\frac{i}{8}\epsilon_{\kappa\lambda\mu\nu}\bar{\psi}_{i_1}\gamma^5\psi_{i_1}\bar{\psi}_{i_2}\sigma^{\kappa\lambda}\psi_{i_2}\bar{\psi}_{i_3}\sigma^{\mu\nu}\psi_{i_3}
\end{split}
\end{equation}

\begin{equation}
\epsilon_{\kappa\lambda\mu\nu}\bar{\psi}_{i_1}\gamma^5\gamma^{\kappa}\psi_{i_1}\bar{\psi}_{i_2}\gamma^{\lambda}\psi_{i_3}\bar{\psi}_{i_3}\sigma^{\mu\nu}\psi_{i_2} = 
3\bar{\psi}_{i_1}\psi_{i_1}\bar{\psi}_{i_2}\gamma^5\gamma_\mu\psi_{i_2}\bar{\psi}_{i_3}\gamma^5\gamma^\mu\psi_{i_3}
-\frac{1}{2}\epsilon_{\kappa\lambda\mu\nu}\bar{\psi}_{i_1}\gamma^{\kappa}\psi_{i_1}\bar{\psi}_{i_2}\gamma^5\gamma^\lambda\psi_{i_2}\bar{\psi}_{i_3}\sigma^{\mu\nu}\psi_{i_3}
\end{equation}

\begin{equation}
\begin{split}
&\epsilon_{\kappa\lambda\mu\nu}\bar{\psi}_{i_1}\gamma^5\gamma^{\kappa}\psi_{i_2}\bar{\psi}_{i_2}\gamma^{\lambda}\psi_{i_3}\bar{\psi}_{i_3}\sigma^{\mu\nu}\psi_{i_1} = 
\frac{3}{2}\bar{\psi}_{i_1}\psi_{i_1}\bar{\psi}_{i_2}\psi_{i_2}\bar{\psi}_{i_3}\psi_{i_3}
-\frac{3}{2}\bar{\psi}_{i_1}\psi_{i_1}\bar{\psi}_{i_2}\gamma^5\psi_{i_2}\bar{\psi}_{i_3}\gamma^5\psi_{i_3} -\\ 
&-\frac{3}{4}\bar{\psi}_{i_1}\gamma^5\psi_{i_1}\bar{\psi}_{i_2}\gamma_\mu\psi_{i_2}\bar{\psi}_{i_3}\gamma^5\gamma^\mu\psi_{i_3}
-\frac{9}{8}\bar{\psi}_{i_1}\psi_{i_1}\bar{\psi}_{i_2}\gamma_\mu\psi_{i_2}\bar{\psi}_{i_3}\gamma^\mu\psi_{i_3} 
-\frac{1}{8}\epsilon_{\kappa\lambda\mu\nu}\bar{\psi}_{i_1}\gamma^\kappa\psi_{i_1}\bar{\psi}_{i_2}\gamma^5\gamma^\lambda\psi_{i_2}\bar{\psi}_{i_3}\sigma^{\mu\nu}\psi_{i_3} - \\
& - \frac{i}{8}\epsilon_{\kappa\lambda\mu\nu}\bar{\psi}_{i_1}\gamma^5\psi_{i_1}\bar{\psi}_{i_2}\sigma^{\kappa\lambda}\psi_{i_2}\bar{\psi}_{i_3}\sigma^{\mu\nu}\psi_{i_3} 
-\frac{1}{4}\bar{\psi}_{i_1}\psi_{i_1}\bar{\psi}_{i_2}\sigma^{\mu\nu}\psi_{i_2}\bar{\psi}_{i_3}\sigma^{\mu\nu}\psi_{i_3}  
-\frac{9}{8}\bar{\psi}_{i_1}\psi_{i_1}\bar{\psi}_{i_2}\gamma^5\gamma_\mu\psi_{i_2}\bar{\psi}_{i_3}\gamma^5\gamma^\mu\psi_{i_3} + \\
&-\frac{i}{4}\bar{\psi}_{i_1}{\sigma_\kappa}^\lambda\psi_{i_1}\bar{\psi}_{i_2}{\sigma_\lambda}^\mu\psi_{i_2}\bar{\psi}_{i_3}{\sigma_\mu}^\kappa\psi_{i_3}
\end{split}
\end{equation}

\begin{equation}
\begin{split}
&\epsilon_{\kappa\lambda\mu\nu}\bar{\psi}_{i_1}\gamma^5\psi_{i_1}\bar{\psi}_{i_2}\sigma^{\kappa\lambda}\psi_{i_3}\bar{\psi}_{i_3}\sigma^{\mu\nu}\psi_{i_2} =  - 12i\bar{\psi}_{i_1}\psi_{i_1}\bar{\psi}_{i_2}\gamma^5\psi_{i_2}\bar{\psi}_{i_3}\gamma^5\psi_{i_3} 
+\frac{1}{2}\epsilon_{\kappa\lambda\mu\nu}\bar{\psi}_{i_1}\gamma^5\psi_{i_1}\bar{\psi}_{i_2}\sigma^{\kappa\lambda}\psi_{i_2}\bar{\psi}_{i_3}\sigma^{\mu\nu}\psi_{i_3}
\end{split}
\end{equation}

\begin{equation}
\begin{split}
&\epsilon_{\kappa\lambda\mu\nu}\bar{\psi}_{i_1}\gamma^5\psi_{i_2}\bar{\psi}_{i_2}\sigma^{\kappa\lambda}\psi_{i_1}\bar{\psi}_{i_3}\sigma^{\mu\nu}\psi_{i_3} = 
-\frac{1}{2}\epsilon_{\kappa\lambda\mu\nu}\bar{\psi}_{i_1}\gamma^5\psi_{i_1}\bar{\psi}_{i_2}\sigma^{\kappa\lambda}\psi_{i_2}\bar{\psi}_{i_3}\sigma^{\mu\nu}\psi_{i_3} + \\
&-i\bar{\psi}_{i_1}\psi_{i_1}\bar{\psi}_{i_2}\sigma_{\mu\nu}\psi_{i_2}\bar{\psi}_{i_3}\sigma^{\mu\nu}\psi_{i_3} 
 - i\epsilon_{\kappa\lambda\mu\nu}\bar{\psi}_{i_1}\gamma^\kappa\psi_{i_1}\bar{\psi}_{i_2}\gamma^5\gamma^\lambda\psi_{i_2}\bar{\psi}_{i_3}\sigma^{\mu\nu}\psi_{i_3} 
 + \bar{\psi}_{i_1}{\sigma_\kappa}^\lambda\psi_{i_1}\bar{\psi}_{i_2}{\sigma_\lambda}^\mu\psi_{i_2}\bar{\psi}_{i_3}{\sigma_\mu}^\kappa\psi_{i_3}
\end{split}
\end{equation}

\begin{equation}
\begin{split}
&\epsilon_{\kappa\lambda\mu\nu}\bar{\psi}_{i_1}\gamma^5\psi_{i_2}\bar{\psi}_{i_2}\sigma^{\kappa\lambda}\psi_{i_3}\bar{\psi}_{i_3}\sigma^{\mu\nu}\psi_{i_1} = 
\frac{9i}{2}\bar{\psi}_{i_1}\psi_{i_1}\bar{\psi}_{i_2}\gamma^5\psi_{i_2}\bar{\psi}_{i_3}\gamma^5\psi_{i_3} + \\
&+\frac{3i}{2}\bar{\psi}_{i_1}\psi_{i_1}\bar{\psi}_{i_2}\gamma^5\gamma_\mu\psi_{i_2}\bar{\psi}_{i_3}\gamma^5\gamma^\mu\psi_{i_3} 
+ \frac{3i}{2}\bar{\psi}_{i_1}\psi_{i_1}\bar{\psi}_{i_2}\psi_{i_2}\bar{\psi}_{i_3}\psi_{i_3}
+ i\bar{\psi}_{i_1}\psi_{i_1}\bar{\psi}_{i_2}\sigma_{\mu\nu}\psi_{i_2}\bar{\psi}_{i_3}\sigma^{\mu\nu}\psi_{i_3}  -\\
&-\frac{1}{2}\epsilon_{\kappa\lambda\mu\nu}\bar{\psi}_{i_1}\gamma^5\psi_{i_1}\bar{\psi}_{i_2}\sigma^{\kappa\lambda}\psi_{i_2}\bar{\psi}_{i_3}\sigma^{\mu\nu}\psi_{i_3}   
- \frac{3i}{2}\bar{\psi}_{i_1}\psi_{i_1}\bar{\psi}_{i_2}\gamma_\mu\psi_{i_2}\bar{\psi}_{i_3}\gamma^\mu\psi_{i_3}
\end{split}
\end{equation}

\begin{equation}
\begin{split}
&\bar{\psi}_{i_1}{\sigma_{\kappa}}^{\lambda}\psi_{i_1}\bar{\psi}_{i_2}{\sigma_{\lambda}}^{\mu}\psi_{i_3}\bar{\psi}_{i_3}{\sigma_{\mu}}^{\kappa}\psi_{i_2} = 
\bar{\psi}_{i_1}{\sigma_{\kappa}}^{\lambda}\psi_{i_2}\bar{\psi}_{i_2}{\sigma_{\lambda}}^{\mu}\psi_{i_1}\bar{\psi}_{i_3}{\sigma_{\mu}}^{\kappa}\psi_{i_3} = 
\bar{\psi}_{i_1}{\sigma_{\kappa}}^{\lambda}\psi_{i_3}\bar{\psi}_{i_2}{\sigma_{\lambda}}^{\mu}\psi_{i_3}\bar{\psi}_{i_3}{\sigma_{\mu}}^{\kappa}\psi_{i_1} = 0
\end{split}
\end{equation}

\begin{equation}
\begin{split}
&\bar{\psi}_{i_1}{\sigma_{\kappa}}^{\lambda}\psi_{i_2}\bar{\psi}_{i_2}{\sigma_{\lambda}}^{\mu}\psi_{i_3}\bar{\psi}_{i_3}{\sigma_{\mu}}^{\kappa}\psi_{i_1} = 
\frac{i}{4}\bar{\psi}_{i_1}\psi_{i_1}\bar{\psi}_{i_2}\sigma_{\mu\nu}\psi_{i_2}\bar{\psi}_{i_3}\sigma^{\mu\nu}\psi_{i_3} 
+ \frac{1}{8}\epsilon_{\kappa\lambda\mu\nu}\bar{\psi}_{i_1}\gamma^5\psi_{i_1}\bar{\psi}_{i_2}\sigma^{\kappa\lambda}\psi_{i_2}\bar{\psi}_{i_3}\sigma^{\mu\nu}\psi_{i_3}  -\\
& - \frac{9i}{2}\bar{\psi}_{i_1}\psi_{i_1}\bar{\psi}_{i_2}\gamma^5\psi_{i_2}\bar{\psi}_{i_3}\gamma^5\psi_{i_3}
 - \frac{3i}{2}\bar{\psi}_{i_1}\psi_{i_1}\bar{\psi}_{i_2}\psi_{i_2}\bar{\psi}_{i_3}\psi_{i_3}
\end{split}
\end{equation}

\begin{equation}
\begin{split}
&\bar{\psi}_{i_1}{\sigma_{\lambda}}^{\mu}\psi_{i_2}\bar{\psi}_{i_2}{\sigma_{\kappa}}^{\lambda}\psi_{i_3}\bar{\psi}_{i_3}{\sigma_{\mu}}^{\kappa}\psi_{i_1} = 
\frac{i}{4}\bar{\psi}_{i_1}\psi_{i_1}\bar{\psi}_{i_2}\sigma_{\mu\nu}\psi_{i_2}\bar{\psi}_{i_3}\sigma^{\mu\nu}\psi_{i_3} 
- \frac{1}{8}\epsilon_{\kappa\lambda\mu\nu}\bar{\psi}_{i_1}\gamma^5\psi_{i_1}\bar{\psi}_{i_2}\sigma^{\kappa\lambda}\psi_{i_2}\bar{\psi}_{i_3}\sigma^{\mu\nu}\psi_{i_3}  +\\
& + \frac{9i}{2}\bar{\psi}_{i_1}\psi_{i_1}\bar{\psi}_{i_2}\gamma^5\psi_{i_2}\bar{\psi}_{i_3}\gamma^5\psi_{i_3}
 + \frac{3i}{2}\bar{\psi}_{i_1}\psi_{i_1}\bar{\psi}_{i_2}\psi_{i_2}\bar{\psi}_{i_3}\psi_{i_3}
\end{split}
\end{equation}

\bibliographystyle{ieeetr}
\bibliography{NJL_effective_action} 

\end{document}